\numberwithin{equation}{section}
\newtheorem{Theorem}{Theorem}[section]
\newtheorem*{Theorem*}{Theorem}
\newtheorem{Corollary}[Theorem]{Corollary}
\newtheorem{Lemma}[Theorem]{Lemma}
\newtheorem{Proposition}[Theorem]{Proposition}
 { \theoremstyle{definition}

\newtheorem{Example}[Theorem]{Example}
\newtheorem{Remark}[Theorem]{Remark} }
\newcommand{\al}{\alpha}
\newcommand{\La}{\Lambda}
\newcommand{\la}{\lambda}
\newcommand{\pa}{\partial}
\begin{document}

\allowdisplaybreaks

\newcommand{\arXivNumber}{2408.09450}

\renewcommand{\PaperNumber}{110}

\FirstPageHeading

\ShortArticleName{The Modified Toda Hierarchy}

\ArticleName{The Modified Toda Hierarchy}

\Author{Wenjuan RUI~$^{\rm ab}$, Wenchuang GUAN~$^{\rm a}$, Yi YANG~$^{\rm c}$ and Jipeng CHENG~$^{\rm ab}$}
\AuthorNameForHeading{W.~Rui, W.~Guan, Y.~Yang and J.~Cheng}

\Address{$^{\rm a)}$~School of Mathematics, China University of Mining and Technology, Xuzhou, \\
\hphantom{$^{\rm a)}$}~Jiangsu 221116, P.R.~China}
\EmailD{\href{mailto:ruiwj@126.com}{ruiwj@126.com}, \href{mailto:wenchuangguan@163.com}{wenchuangguan@163.com}, \href{mailto:chengjp@cumt.edu.cn}{chengjp@cumt.edu.cn}, \newline
\hspace*{17.9mm}\href{mailto:chengjipeng1983@163.com}{chengjipeng1983@163.com}}

\Address{$^{\rm b)}$~Jiangsu Center for Applied Mathematics (CUMT), Xuzhou, Jiangsu 221116, P.R.~China}

\Address{$^{\rm c)}$~School of Mathematics, Sun Yat-sen University,
Guangzhou, Guangdong 510000, P.R.~China}
\EmailD{\href{mailto:yangy875@mail2.sysu.edu.cn}{yangy875@mail2.sysu.edu.cn}}

\ArticleDates{Received August 20, 2024, in final form December 05, 2024; Published online December 11, 2024}

\Abstract{In this paper, modified Toda (mToda) equation is generalized to form an integrable hierarchy in the framework of Sato theory, which is therefore called mToda hierarchy. Inspired by the fact that Toda hierarchy is 2-component generalization of usual KP hierarchy, mToda hierarchy is constructed from bilinear equations of 2-component first modified KP hierarchy, where we provide the corresponding equivalence with Lax formulations. Then it is demonstrated that there are Miura links between Toda and mToda hierarchies, which means the definition of mToda hierarchy here is reasonable. Finally, Darboux transformations of the Toda and mToda hierarchies are also constructed by using the aforementioned Miura links.}

\Keywords{modified Toda hierarchy; Toda hierarchy; Miura transformation; Darboux transformation; tau function}

\Classification{35Q51; 37K10; 37K40}

\section{Introduction}

The modified KP (mKP) hierarchy \cite{cheng2018jgp,Dickey1999lmp,Jimbo1983infin,Kac2018jjm,Kiso1990ptp,Kuper1985cmp,Kuper1995cmp} has obtained great success in mathematical physics and integrable systems, which is related to the famous KP hierarchy by Miura links~\mbox{\cite{Shaw1997, Yang2022}}. As one of the most important generalizations of KP hierarchy, Toda hierarchy~\mbox{\cite{Takasaki2018, Ueno1982}} also plays a key role in mathematical physics. For specific equations, Toda hierarchy contains the famous Toda equation, which has many important generalization \cite{daihh,hirota2004,Mikhailov1979,Mikhailov1981}. Among them, we are more interested in modified Toda (mToda) equation defined by
\begin{align*}
u(s)_y=u(s)(v(s)-v(s+1)),\qquad v(s)_x=v(s)(u(s)-u(s-1)),
\end{align*}
which is related with Toda equation by Miura transformation \cite{hirota2004}.
If further set $u(s)=\partial_x \varphi(s)$, $v(s)={\rm e}^{\varphi(s)-\varphi(s-1)}$, then mToda equation can be rewritten into exponential form
	\begin{align*}
		\pa_{x}\pa_{y}\varphi(s)+	\bigl({\rm e}^{\varphi(s+1)-\varphi(s)}-{\rm e}^{\varphi(s)-\varphi(s-1)}\bigr)\pa_{x}\varphi(s)=0.
	\end{align*}
Next, we expect to construct one integrable hierarchy containing this mToda equation, which will be called mToda hierarchy. Here the expected mToda hierarchy should be related with Toda hierarchy by Miura links just like the KP and mKP hierarchies. In fact, notice that Toda hierarchy is the 2-component generalization \cite{Jimbo1983infin,Liu2024,Ueno1982} of the usual KP hierarchy, while the KP hierarchy has Miura links \cite{Shaw1997, Yang2022} with the first mKP hierarchy. Thus, it is expected that mToda hierarchy is the 2-component first mKP hierarchy \cite{Jimbo1983infin, van2015}
\begin{align}
&{}{\rm Res}_ z \bigl(z^{s-s'-1}\tau_{0,s}\bigl(\mathbf{x}-\bigl[z^{-1}\bigr]_1\bigr)\tau_{1,s'}\bigl(\mathbf{x}'+\bigl[z^{-1}\bigr]_1\bigr){\rm e}^{\xi(\mathbf{x}^{(1)}-\mathbf{x}^{(1)\prime},z)} \nonumber\\
&\qquad{}+z^{s'-s-2} \tau_{0,s+1}\bigl(\mathbf{x}-\bigl[z^{-1}\bigr]_2\bigr)\tau_{1,s'-1}\bigl(\mathbf{x}'+\bigl[z^{-1}\bigr]_2\bigr){\rm e}^{\xi(\mathbf{x}^{(2)}-\mathbf{x}^{(2)\prime},z)}\bigr)
=\tau_{1,s}(\mathbf{x})\tau_{0,s'}(\mathbf{x}'),\!\!\label{2mkpbilineartau}
\end{align} 	
where ${\rm Res}_ z \sum_i a_iz^i=a_{-1}$, $\mathbf{x}=\bigl(\mathbf{x}^{(1)},\mathbf{x}^{(2)}\bigr)$, $\mathbf{x}^{(i)}=\bigl(x_1^{(i)},x_2^{(i)},\dots\bigr)$, $\bigl[z^{-1}\bigr]=\bigl(z^{-1},z^{-2}/2,\dots\bigr)$, \smash{$\mathbf{x}-\bigl[z^{-1}\bigr]_i=\bigl(\mathbf{x}^{(1)},\mathbf{x}^{(2)}\bigr)
\big|_{\mathbf{x}^{(i)}\mapsto\mathbf{x}^{(i)}-[z^{-1}]}$}, and \smash{$\xi\bigl(\mathbf{x}^{(i)},z\bigr)=\sum_{k\geq1}x_k^{(i)}z^k$}.
	
Here we will investigate the Lax structure of the 2-component first mKP hierarchy \eqref{2mkpbilineartau}. It is found that the corresponding Lax operators are expressed by pseudo-difference operators
\begin{align*}
&{}L_1=u_{-1}(s,\mathbf{x})\La+\sum_{i=0}^{\infty}u_i(s,\mathbf{x})\La^{-i},\qquad u_{-1}\neq 0,\\
&{}L_2=\bar{u}_{-1}(s,\mathbf{x})\La^{-1}+\sum_{i=0}^{\infty}\bar{u}_i(s,\mathbf{x})\La^{i},
\qquad \bar{u}_{-1}\neq 0, 
\end{align*}
satisfying the Lax equations
\begin{eqnarray}\label{2mkplax01}
\pa_{x_n^{(1)}}L_j=\bigl[\bigl(L_1^n\bigr)_{\Delta,\geq1},L_j\bigr], \qquad \pa_{x_n^{(2)}}L_j=\bigl[\bigl(L_2^n\bigr)_{\Delta^{*},\geq1},L_j\bigr],\qquad j=1,2,
\end{eqnarray}
where $\La$ is the shift operator defined by $\La(f(s))=f(s+1)$, $\Delta=\La-1$ and $\Delta^*=\La^{-1}-1$. One can refer to Section \ref{section2} for more details on the above symbols. Here we would like to comment that pseudo-difference operators are widely used in integrable systems \cite{Adler1999,Dickey1999lmp,Takasaki2018,zabrodin}. And note that we find that Lax equations \eqref{2mkplax01} contain the mToda equation, which means that 2-component first mKP hierarchy is just the desired mToda integrable hierarchy.

Notice that in Lax formulations, the initial term in $L_1$ for mToda is $u_{-1}(\mathbf{x})\La$ for some nonzero function $u_{-1}(\mathbf{x})$, while for Toda hierarchy, it is just $\Lambda$. Another differences are the flow generators. For mToda flow generators, they have the form $A_{\Delta,\geq 1}$ or $A_{\Delta^*,\geq 1}$, while in Toda case, they are $A_{\Lambda,\geq 0}$ or $A_{\Lambda,< 0}$. It is found here that mToda hierarchy is gauge equivalent to Toda hierarchy, which is called the Miura links, that is,
\begin{align*}
\xymatrix@R=0.1cm{
& {\boxed{\rm mToda}} \\
\text{anti-Miura} \colon \ {\boxed{\rm Toda}} \ar[ur]^{\mathcal{T}_1=q(s)^{-1}} \ar[dr]_{\mathcal{T}_2=\Delta^{-1}r(s)} \\
& {\boxed{\rm mToda}} }\qquad\xymatrix@R=0.1cm{
& {\boxed{\rm Toda}} \\
{\rm Miura}\colon \ {\boxed{\rm mToda}} \ar[ur]^{T_1=c_0^{-1}(s)} \ar[dr]_{T_2=c_0^{-1}(s+1)\Delta} \\
& {\boxed{\rm Toda} }}
\end{align*}
where $q(s)$ and $r(s)$ are Toda eigenfunction and adjoint eigenfunction, 	and $c_0(s)$ is the coefficient of $\La^0$ in $S_1$ defined by $L_1=S_1\Lambda S^{-1}_1$. The existence of Miura links confirms again that \eqref{2mkplax01} is the expected mToda hierarchy. Here we would like to point out that Miura links discussed here are just gauge transformations in integrable systems. And the concept of gauge transformations was firstly introduced in \cite{zakharov1,zakharov2}.

For mToda hierarchy \eqref{2mkplax01} discussed here, it has been used in various aspects in integrable system. Recently, mToda hierarchy has emerged in different types of constraints of Toda hierarchy, including constrained Toda hierarchy (C-Toda hierarchy) \cite{Krichever2022lmp} as well as the Toda lattice with the constraint of type B (B-Toda
hierarchy) \cite{guan2024, Krichever2023pd,Prokofev2023tmp}. Both C-Toda and B-Toda hierarchies are sub-hierarchies of mToda hierarchy. The mToda hierarchy can also be used to describe the spectral representation of Toda eigenfunction and Toda adjoint eigenfunction (see \cite{Liu2024}).

Besides the results mentioned before, we also show that the two mToda tau functions $\tau_{0,s}(\mathbf{x})$ and $\tau_{1,s}(\mathbf{x})$ are Toda tau functions, which can be linked by Toda eigenfunction or Toda adjoint eigenfunction. We also derive the mToda bilinear equation from the mToda Lax equations, and prove the existence of mToda tau functions. Based upon Miura links, the Toda and mToda Darboux transformations are also obtained by
\begin{align*}
{\rm Toda} \xrightarrow{\text{anti-Miura}} {\rm mToda} \xrightarrow{{\rm Miura}} {\rm Toda},\qquad {\rm mToda} \xrightarrow{{\rm Miura}} {\rm Toda} \xrightarrow{\text{anti-Miura}} {\rm mToda}.
\end{align*}
	
This paper is organized as follows. Firstly, in Section \ref{section2}, some notations and properties for formal difference operators are given, which will be used in discussion of mToda Lax equations. In Section \ref{section3}, starting from the bilinear equation, we derive the Lax formulations of the mToda hierarchy. We also prove that the bilinear equation can be obtained from the Lax equations. In~Section~\ref{section4}, two kinds of anti-Miura and Miura transformations between the Toda and mToda hierarchies are discussed. Then, in Section \ref{section5}, we discuss mToda tau functions and their relations with Toda tau functions. Finally, in Section \ref{section6}, the Darboux transformations of Toda and mToda hierarchies are constructed by means of Miura links.
	
\section{Formal difference operators}\label{section2}
In this section, let us review some important properties about formal difference operators, which will be used in the discussion of mToda Lax equations. Symbols involving $\Delta$ and $\Delta^*$ used here can be found in \cite{Liu2010}.
Let $\mathcal{A}$ denote the algebra of smooth complex functions in the indeterminate~$s$ and variables $\mathbf{x}$. The space $\mathcal{A}\bigl[\bigl[\La,\La^{-1}\bigr]\bigr]$ of formal difference operators consists of all expressions of the form
\begin{align*}
A=\sum_{m\in\mathbb{Z}}a_m(s)\La^m.
\end{align*}
Define the following symbols for above $A$:
\begin{align*}
A_{P}=\sum_{\text{$m$ satisfies $P$}}a_m(s)\La^m,\qquad A_{[k]}=a_k(s)\Lambda^k,\qquad A^*=\sum_{m\in \mathbb{Z}}\Lambda^{-m}a_m(s),
\end{align*}
where $P\in\{\geq k,\leq k,>k,<k\}$. Let $\mathcal{A}((\La))$ and $\mathcal{A}\bigl(\bigl(\La^{-1}\bigr)\bigr)$ denote the subspace of
$\mathcal{A}\bigl[\bigl[\La,\La^{-1}\bigr]\bigr]$, whose elements take
the following forms, respectively:
\begin{align*}
\sum_{m=m_0}^{\infty}a_m(s)\La^m\in\mathcal{A}((\La)), \qquad
\sum_{m=-\infty}^{n_0}a_m(s)\La^m\in\mathcal{A}\bigl(\bigl(\La^{-1}\bigr)\bigr),
\end{align*}
for some fixed integers $m_0$ and $n_0$.
Both $\mathcal{A}((\La))$ and $\mathcal{A}\bigl(\bigl(\La^{-1}\bigr)\bigr)$ are the associative
rings, where the multiplication is defined by
\begin{align*}
\bigl(f(s)\Lambda^i\bigr)\bigl(g(s)\Lambda^j\bigr)=f(s)g(s+i)\Lambda^{i+j}.
\end{align*}
Given $A\in\mathcal{A}\bigl(\bigl(\La^{\pm1}\bigr)\bigr)$ and $f\in\mathcal{A}$, we use $Af$ or $A\cdot f$ to denote the multiplication of $A$ with $f$, while $A(f)$ means the action of $A$ on $f$.

Introduce the difference operator $\Delta =\Lambda-1$ and its adjoint operator $\Delta^* =\Lambda^{-1}-1$, and define the following multiplication operations as follows \cite{Liu2010}, for $j\in \mathbb{Z}$:
\begin{align*}
&{}\Delta^j\cdot f(s)=\sum_{i=0}^\infty\binom{j}{i}(\Delta^i(f))(s+j-i)\Delta^{j-i},\\
&{}\Delta^{*j}\cdot f(s)=\sum_{i=0}^\infty\binom{j}{i}(\Delta^{*i}(f))(s+j-i)\Delta^{*j-i}.
\end{align*}
Then two associative rings $\mathcal{A}\bigl(\bigl(\Delta^{-1}\bigr)\bigr)$ and $\mathcal{A}\bigl(\bigl(\Delta^{*-1}\bigr)\bigr)$ are of the following forms:
\begin{align*}
\sum_{m=-\infty}^{k_0}b_m(s)\Delta^m\in\mathcal{A}\bigl(\bigl(\Delta^{-1}\bigr)\bigr),\qquad
\sum_{m=-\infty}^{l_0} b_m(s)\Delta^{*m}\in\mathcal{A}\bigl(\bigl(\Delta^*\bigr)^{-1}\bigr),
\end{align*}
for some fixed integers $k_0$ and $l_0$.
If define the following expansions
for $Q\in\big\{\Lambda,
\Lambda^{-1}\big\}$ and $R\in\{\Delta,\Delta^*\}$ by
\begin{align}
(Q-1)^{-1}=\sum_{j=1}^\infty Q^{-j},\qquad (R+1)^{-k}=\sum_{j=0}^\infty \binom{-k}{j}R^{-k-j},\qquad k>0,\label{delta-lambda}
\end{align}
then for arbitrary formal difference operator $A\in \mathcal{A}\bigl(\bigl(\Lambda^{-1}\bigr)\bigr)$ (resp.\ $\mathcal{A}((\Lambda))$), it can be transformed~into $A\in \mathcal{A}\bigl(\bigl(\Delta^{-1}\bigr)\bigr)$ \big(resp.\ \smash{$\mathcal{A}\bigl(\bigl(\Delta^*\bigr)^{-1}\bigr)$}\big) by \eqref{delta-lambda} and $\Lambda=\Delta+1$ \big(resp.\ $\Lambda^{-1}=\Delta^*+1$\big), and vice versa.
Then we have
\[
\mathcal{A}\bigl(\bigl(\Lambda^{-1}\bigr)\bigr)=\mathcal{A}\bigl(\bigl(\Delta^{-1}\bigr)\bigr),\qquad \mathcal{A}((\Lambda))=\mathcal{A}\bigl(\bigl(\Delta^{*}\bigr)^{-1}\bigr).
\]

Similar to $A_{P}$ with $P\in\{\geq k, \leq k,>k,<k\}$ and $A_{[k]}$, we can also define $A_{\Delta,P}$ (or $A_{\Delta^*,P}$) to be the part of $A$ satisfying property $P$ with respect to operator $\Delta$ (or $\Delta^*$), and $A_{\Delta,[k]}$ \big(or~$A_{\Delta^*,[k]}$\big) to be the part $\Delta^k$ \big(or~$\Delta^{*k}$\big). In what follows, we will denote $\iota_{\La^{\pm1}}A$ be the expansion of $A$ in~terms of operator $\Lambda^i$ in $\mathcal{A}\bigl(\bigl(\La ^{\pm1}\bigr)\bigr)$.
\begin{Lemma}\label{fdoprop1}
For any formal difference operator $A\in \mathcal{A}\bigl[\bigl[\La,\La^{-1}\bigr]\bigr]$,
\begin{alignat*}{3}
&A_{\ge0}=A_{\Delta,\ge0}, \qquad && A_{\le0}=A_{\Delta^*,\ge0}, &\\
&A_{\Delta,[0]}=A_{\ge0}(1), \qquad && A_{\Delta^*,[0]}=A_{\le0}(1).&
\end{alignat*}
\end{Lemma}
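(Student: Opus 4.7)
The plan is to show that the change of basis between $\Lambda$-powers and $\Delta$-powers (and similarly $\Lambda^{-1}$-powers and $\Delta^*$-powers) respects the splitting into nonnegative and negative parts, and then to deduce the two identities about $A_{\Delta,[0]}$ and $A_{\Delta^*,[0]}$ from the fact that $\Delta(1)=0$ and $\Delta^*(1)=0$. Concretely, since $\Lambda=\Delta+1$, for $m\ge 0$ the binomial expansion $\Lambda^m=\sum_{i=0}^{m}\binom{m}{i}\Delta^i$ produces only nonnegative powers of $\Delta$, whereas for $m<0$ the expansion in \eqref{delta-lambda} gives $\Lambda^m=\sum_{j\ge 0}\binom{m}{j}\Delta^{m-j}$, involving only strictly negative powers of $\Delta$. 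The completely analogous dichotomy holds for $\Lambda^{-1}=\Delta^{*}+1$ in place of $\Lambda=\Delta+1$.

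Given $A=\sum_{m}a_m(s)\Lambda^m$, I would first split $A=A_{\ge 0}+A_{<0}$ and substitute the expansions above in each piece. By the observation in the previous paragraph, the $\Delta$-expansion of $A_{\ge 0}$ lies entirely in $\Delta^i$ with $i\ge 0$, while that of $A_{<0}$ lies entirely in $\Delta^i$ with $i<0$. Invoking uniqueness of the expansion in $\mathcal{A}((\Delta^{-1}))$ then yields $A_{\ge 0}=A_{\Delta,\ge 0}$. Repeating the argument with $\Lambda^{-1}=\Delta^{*}+1$ gives $A_{\le 0}=A_{\Delta^*,\ge 0}$.

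For the remaining two identities I would use the key observation $\Delta(1)=\Lambda(1)-1=0$, hence $\Delta^i(1)=0$ for all $i\ge 1$, and its counterpart $\Delta^*(1)=0$. If $A_{\Delta,\ge 0}=\sum_{i\ge 0}b_i(s)\Delta^i$, then evaluating on the constant function $1$ makes all $i\ge 1$ terms vanish, giving $A_{\Delta,\ge 0}(1)=b_0=A_{\Delta,[0]}$. Combined with $A_{\ge 0}=A_{\Delta,\ge 0}$ from the previous step, this gives $A_{\Delta,[0]}=A_{\ge 0}(1)$; the identity $A_{\Delta^*,[0]}=A_{\le 0}(1)$ follows identically.

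The subtle point, and where I expect to need the most care, is the uniqueness claim in the second paragraph: one must verify that the possibly infinite $\Delta$-sum coming from $A_{<0}$ cannot produce any nonnegative powers of $\Delta$ that would cancel against terms coming from $A_{\ge 0}$. This is exactly what \eqref{delta-lambda} guarantees, since each $\Lambda^{-k}$ with $k>0$ is expanded strictly into $\Delta^{-k-j}$ with $j\ge 0$, so no contribution to the $\Delta^{\ge 0}$-part ever arises from $A_{<0}$ and no cancellation across the divide can occur.
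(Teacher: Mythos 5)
Your proposal is correct and follows essentially the same route as the paper: both rest on the observation that, under the expansion \eqref{delta-lambda}, each $\Lambda^{-k}$ with $k>0$ contributes only strictly negative $\Delta$-powers while $\Lambda^m$ with $m\ge0$ contributes only nonnegative ones, giving $A_{\ge0}=A_{\Delta,\ge0}$, and then both evaluate this identity on the constant function $1$ (using $\Delta^i(1)=0$ for $i\ge1$) to obtain $A_{\Delta,[0]}=A_{\ge0}(1)$. Your explicit attention to the uniqueness of the $\Delta$-expansion and the impossibility of cancellation across the split is a welcome elaboration of what the paper leaves implicit.
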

\begin{proof}
Firstly, by \eqref{delta-lambda}, we can find that $\Lambda^{-k}$, $k>0$ can not produce non-negative $\Delta$-powers, therefore
if assume $A=\sum_i a_i\Lambda^i$,
\begin{align*}
A_{\geq 0}=\sum_{i\geq 0}a_i\Lambda^i=\sum_{i\geq 0}a_i(\Delta+1)^i=A_{\Delta,\geq 0}.
\end{align*}
If further apply $A_{\geq 0}=A_{\Delta,\geq 0}$ to $1$, we can prove $A_{\Delta,[0]}=A_{\ge0}(1)$. Others can be similarly proved.
\end{proof}

\begin{Lemma}\label{proplambda2}
For any formal difference operator $A\in \mathcal{A}\bigl[\bigl[\La,\La^{-1}\bigr]\bigr]$, 	
\begin{align*}	
&{}\bigl(A\cdot\iota_{\La^{-1}}\Delta^{*-1}\bigr)_{\ge1}\cdot\Delta^*=A_{\ge1}-A_{\ge1}|_{\La=1}=A_{\Delta,\geq 1}, \\
&{}\bigl(A\cdot\iota_{\La^{-1}}\Delta^{*-1}\bigr)_{\le0}\cdot\Delta^*=A_{<0}+A_{\geq 0}|_{\La=1}=A_{\Delta,\leq 0}, \\
&{}\bigl(A\cdot\iota_{\La}\Delta^{*-1}\bigr)_{\geq 1}\cdot\Delta^*=A_{>0}+A_{\leq 0}|_{\La=1}=A_{\Delta^*,\leq 0}, \\
&{}\bigl(A\cdot\iota_{\La}\Delta^{*-1}\bigr)_{\leq 0}\cdot\Delta^*=A_{<0}-A_{<0}|_{\La=1}=A_{\Delta^*,\geq 1}.
\end{align*}
\end{Lemma}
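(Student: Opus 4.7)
The strategy is to make $\iota_{\La^{\pm1}}\Delta^{*-1}$ completely explicit, then multiply through, truncate, and telescope. From \eqref{delta-lambda} with $Q=\La^{-1}$ one reads off $\iota_{\La}\Delta^{*-1}=\sum_{j\geq1}\La^{j}$, while writing $\Delta^{*}=-\bigl(1-\La^{-1}\bigr)$ and expanding a geometric series in $\La^{-1}$ yields $\iota_{\La^{-1}}\Delta^{*-1}=-\sum_{j\geq0}\La^{-j}$; both satisfy $\Delta^{*}\cdot\iota_{\La^{\pm1}}\Delta^{*-1}=1$ as a sanity check. For $A=\sum_{i}a_{i}\La^{i}$, direct multiplication with these explicit series gives
\begin{align*}
A\cdot\iota_{\La^{-1}}\Delta^{*-1}=-\sum_{k\in\mathbb{Z}}\Bigl(\sum_{i\geq k}a_{i}\Bigr)\La^{k},\qquad A\cdot\iota_{\La}\Delta^{*-1}=\sum_{k\in\mathbb{Z}}\Bigl(\sum_{i<k}a_{i}\Bigr)\La^{k}.
\end{align*}
Truncating to the $\La^{\geq1}$ or $\La^{\leq0}$ part and then right-multiplying by $\Delta^{*}=\La^{-1}-1$, the index shift causes the inner partial sums to telescope: the coefficient of $\La^{k}$ becomes $a_{k}$ throughout the interior of the truncation range, while a single boundary scalar of the form $\pm A_{P}|_{\La=1}$ with $P\in\{\geq1,\geq0,\leq0,<0\}$ survives. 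Collecting the four cases gives the first equality in each of the four identities.

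For the second equality, my plan is to expand $A$ directly in $\Delta$-powers via $\La=\Delta+1$ for $i\geq0$ and \eqref{delta-lambda} for $i<0$, and symmetrically in $\Delta^{*}$ via $\La^{-1}=\Delta^{*}+1$. The sharper form of Lemma~\ref{fdoprop1} that one needs is that $A_{<0}$ contributes only to $\Delta^{<0}$ while $A_{\geq1}$ contributes to $\Delta^{\geq0}$, the $\Delta^{0}$-coefficient of the latter being exactly $A_{\geq1}|_{\La=1}$; subtracting yields $A_{\Delta,\geq1}=A_{\geq1}-A_{\geq1}|_{\La=1}$, and the three analogues follow by the same argument with the roles of the index ranges interchanged.

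The main obstacle is purely combinatorial bookkeeping: the four identities differ only in which boundary term the telescope leaves behind, and it is easy to conflate $A_{\geq0}|_{\La=1}$ with $A_{\geq1}|_{\La=1}$ (they differ by $a_{0}$). A convenient sanity check is that the sum of the first two right-hand sides, and likewise of the last two, equals $A$; this follows from $\iota_{\La^{\pm1}}\Delta^{*-1}\cdot\Delta^{*}=1$, since then $(A\cdot\iota_{\La^{\pm1}}\Delta^{*-1})_{\geq1}\cdot\Delta^{*}+(A\cdot\iota_{\La^{\pm1}}\Delta^{*-1})_{\leq0}\cdot\Delta^{*}=A$.
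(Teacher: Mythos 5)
Your proof is correct, but it takes a different route from the paper's. You verify the \emph{first} equality in each line by brute force: expand $\iota_{\La^{-1}}\Delta^{*-1}=-\sum_{j\ge0}\La^{-j}$ and $\iota_{\La}\Delta^{*-1}=\sum_{j\ge1}\La^{j}$ explicitly, compute the partial-sum coefficients of $A\cdot\iota_{\La^{\pm1}}\Delta^{*-1}$, truncate, and let right-multiplication by $\Delta^{*}=\La^{-1}-1$ telescope everything back to $a_k$ plus a boundary scalar; the identification with $A_{\Delta,\ge1}$ etc.\ then comes from the refinement of Lemma~\ref{fdoprop1}. (I checked the four boundary terms $-A_{\ge1}|_{\La=1}$, $+A_{\ge0}|_{\La=1}$, $+A_{\le0}|_{\La=1}$, $-A_{<0}|_{\La=1}$; they all come out right, and your consistency check that each pair of right-hand sides sums to $A$ is a good safeguard against exactly the off-by-$a_0$ error you flag.) The paper instead goes straight for the \emph{last} expression: using $\Delta^{*}=-\La^{-1}\Delta$ it rewrites $\bigl(A\cdot\iota_{\La^{-1}}\Delta^{*-1}\bigr)_{\ge1}\cdot\Delta^{*}$ as $\bigl(A\cdot\iota_{\La^{-1}}\Delta^{-1}\bigr)_{\ge0}\cdot\Delta$, applies Lemma~\ref{fdoprop1} to replace $(\cdot)_{\ge0}$ by $(\cdot)_{\Delta,\ge0}$, and reads off $A_{\Delta,\ge1}$ from the $\Delta$-expansion in three lines; the second identity is then obtained by subtracting the first from $A$, with the middle ($\La$-power) expressions left implicit. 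The paper's argument is shorter and stays at the level of the truncation calculus it has already set up, while yours is more self-contained and makes the middle expressions — which are what actually get used in the proof of Theorem~\ref{mTodasato} — completely explicit. Both are sound.
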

\begin{proof}
Firstly, by Lemma \ref{fdoprop1},
\begin{align*}
\bigl(A\cdot\iota_{\La^{-1}}\Delta^{*-1}\bigr)_{\ge1}\cdot\Delta^*
&{}=\bigl(A\cdot\iota_{\La^{-1}}\Delta^{-1}\bigr)_{\ge0}\cdot\Delta=\bigl(A\cdot\iota_{\La^{-1}}\Delta^{-1}\bigr)_{\Delta,\ge0}\cdot\Delta\\
&{}=\bigl(A_{\Delta,\geq 1}\cdot\iota_{\La^{-1}}\Delta^{-1}\bigr)\cdot\Delta=A_{\Delta,\geq 1}.
\end{align*}
If subtract $A$ from the first relation, we can obtain the second one.
The third and fourth formulas can be similarly obtained.
\end{proof}	
	
\begin{Lemma}\label{fdoprop2}
For any formal difference operator $A\in \mathcal{A}\bigl[\bigl[\La,\La^{-1}\bigr]\bigr]$ and any $k\in\mathbb{Z}$,
\begin{align*}
&{}\Delta^k\cdot A\cdot\iota_{\Lambda^{\pm 1}}\Delta^{-k}=\Delta^{*k}\cdot \La^{k}(A)\cdot\iota_{\Lambda^{\pm 1}}\Delta^{*-k}, \\
&{}\bigl(\iota_{\Lambda}\Delta^{-k}\cdot A^{*}\cdot\Delta^k\bigr)_{\Delta^*,\ge k}=\iota_{\Lambda}\Delta^{-k}\cdot(A_{\Delta,\ge k})^{*}\cdot\Delta^k=\iota_{\Lambda^{-1}}\Delta^{-k}\cdot(A_{\Delta,\ge k})^{*}\cdot\Delta^k, \\
&{}\bigl(\iota_{\Lambda^{-1}}\Delta^{-k}\cdot A^{*}\cdot\Delta^k\bigr)_{\Delta,\ge k}=\iota_{\Lambda}\Delta^{-k}\cdot(A_{\Delta^{*},\ge k})^{*}\cdot\Delta^k=\iota_{\Lambda^{-1}}\Delta^{-k}\cdot(A_{\Delta^{*},\ge k})^{*}\cdot\Delta^k.
\end{align*}
\end{Lemma}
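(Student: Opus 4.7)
The plan is to reduce all three identities to the single algebraic relation $\Delta = -\Lambda\Delta^{*}=-\Delta^{*}\Lambda$, together with two elementary facts: (i) $\Lambda$ and $\Delta^{*}$ commute, since both lie in the subring generated by $\Lambda$ with scalar coefficients; and (ii) the shift rule $\Lambda^{k}\cdot f(s)=f(s+k)\cdot\Lambda^{k}$. From (i) one obtains at once $\Delta^{k}=(-1)^{k}\Lambda^{k}\Delta^{*k}$, and passing to formal inverses yields $\iota_{\Lambda^{\pm 1}}\Delta^{-k}=(-1)^{k}\Lambda^{-k}\,\iota_{\Lambda^{\pm 1}}\Delta^{*-k}$ with the same expansion prescription on both sides.

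For the first identity I would substitute these expressions into $\Delta^{k}\cdot A\cdot\iota_{\Lambda^{\pm 1}}\Delta^{-k}$, let the two $(-1)^{k}$ cancel, and use the shift rule to move $\Lambda^{-k}$ past $A$; the resulting $\Lambda^{k}\Delta^{*k}\Lambda^{-k}$ collapses to $\Delta^{*k}$ by commutativity, producing exactly $\Delta^{*k}\cdot\Lambda^{k}(A)\cdot\iota_{\Lambda^{\pm 1}}\Delta^{*-k}$. For the second and third identities I would split $A$ according to its $\Delta$-degree in case (2) and its $\Delta^{*}$-degree in case (3), then take adjoints using $(\Delta^{m})^{*}=\Delta^{*m}$ and apply the first identity to each resulting block. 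This rewrites the inner factor of every piece as $\iota\,\Delta^{*(m-k)}\cdot a_{m}(s-k)\cdot\Delta^{*k}$ (for (2)) or the $\Delta\leftrightarrow\Delta^{*}$ dual (for (3)).

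The key dichotomy is then clean: when $m\geq k$ the factor $\Delta^{*(m-k)}$ is an honest polynomial, no expansion is needed, and the Leibniz-type rule recorded earlier in the section confines the resulting $\Delta^{*}$-degrees to $[k,m]$, which already sits inside the $\Delta^{*},\geq k$ part; when $m<k$ the factor $\iota_{\Lambda^{\pm 1}}\Delta^{*(m-k)}$ is a genuinely negative power whose $\Delta^{*}$-degrees, by Lemma~\ref{fdoprop1}, are all $\leq m-k$, so the final multiplication by $\Delta^{*k}$ pushes them only up to degree $m<k$ and they are annihilated by the projection. This gives the first equality in (2), and the $\Delta\leftrightarrow\Delta^{*}$ dual argument handles (3). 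The equality between the $\iota_{\Lambda}$ and $\iota_{\Lambda^{-1}}$ versions on the right-hand side is then a free consequence of the same remark: the two expansions differ only on genuinely negative $\Delta$- or $\Delta^{*}$-powers, but after pairing with the surviving "$\geq k$" block these negative powers disappear. The main obstacle is bookkeeping — keeping signs, coefficient shifts from $\Lambda^{k}\cdot f=f(s+k)\Lambda^{k}$, and the $\iota_{\Lambda^{\pm 1}}$ prescriptions synchronized across the derivation — but once the reduction $\iota\Delta^{-k}\cdot\Delta^{*m}\to\Lambda^{-k}\cdot\iota\Delta^{*(m-k)}$ is verified once and for all, the degree analysis proceeds mechanically.
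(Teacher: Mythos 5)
Your proposal is correct and follows essentially the same route as the paper: the first identity via $\Delta=-\Lambda\Delta^{*}$ (equivalently $\iota_{\Lambda^{\pm1}}\Delta^{*-1}=-\Lambda\cdot\iota_{\Lambda^{\pm1}}\Delta^{-1}$), and the second and third by expanding $A=\sum_m a_m(s)\Delta^m$, conjugating each block to $\Delta^{*(m-k)}a_m(s-k)\Delta^{*k}$, and observing that only the $m\ge k$ terms survive the projection while $\iota_{\Lambda^{\pm1}}\Delta^{*-k}\cdot\Delta^{*k}=1$ makes the result independent of the expansion prescription. The paper's proof is merely a terser version of the same degree analysis, so no further changes are needed.
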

\begin{proof}
The first relation is derived by $\Delta^*=-\Lambda^{-1}\Delta$ and $\iota_{\Lambda^{\pm 1}}\Delta^{*-1}=-\Lambda\cdot\iota_{\Lambda^{\pm 1}}\Delta^{-1}$.
If assume $A=\sum_{m}a_m(s)\Delta^m$, then by $\iota_{\Lambda^{\pm1}}\Delta^{*-k}\cdot \Delta^{*k}=1$,
\begin{gather*}
\bigl(\iota_{\Lambda}\Delta^{-k}\cdot A^{*}\cdot\Delta^k\bigr)_{\Delta^*,\ge k}=\biggl(\bigl(\iota_{\Lambda}\Delta^{*-k}\La^{-k}\bigr)
\sum_{m}\Delta^{*m}a_m(s)\bigl(\La^k\Delta^{*k}\bigr)\biggr)_{\Delta^*,\ge k}\\
\hphantom{\bigl(\iota_{\Lambda}\Delta^{-k}\cdot A^{*}\cdot\Delta^k\bigr)_{\Delta^*,\ge k}}{}
=\sum_{m\ge k}\Delta^{*m-k}a_m(s-k)\Delta^{*k}, \\
\iota_{\Lambda^{\pm1}}\Delta^{-k}\cdot\bigl(A_{\Delta,\ge k}\bigr)^{*}\cdot\Delta^k=\iota_{\Lambda^{\pm1}}\bigl(\Delta^{*-k}\La^{-k}\bigr)\sum_{m\ge k}^{\infty}\Delta^{*m}a_m(s)\bigl(\La^k\Delta^{*k}\bigr)\\
\hphantom{\iota_{\Lambda^{\pm1}}\Delta^{-k}\cdot\bigl(A_{\Delta,\ge k}\bigr)^{*}\cdot\Delta^k}{}
=\sum_{m\ge k}^{\infty}\Delta^{*m-k}a_m(s-k)\Delta^{*k},
\end{gather*}
which imply the second relation. Similarly, we can prove the third formula.
\end{proof}	

\begin{Lemma}\label{fdoprop3}
For any formal difference operator $A\in \mathcal{A}\bigl[\bigl[\La,\La^{-1}\bigr]\bigr]$ and any function $f\in\mathcal{A}$,
\begin{gather*}
\bigl(f^{-1}\cdot A\cdot f\bigr)_{\Delta,\ge1} = f^{-1}\cdot A_{\ge0}\cdot f-f^{-1}\cdot A_{\ge0}(f) = f^{-1}\cdot A_{>0}\cdot f-f^{-1}\cdot A_{>0}(f), \\
\bigl(f^{-1}\cdot A\cdot f\bigr)_{\Delta^{*},\ge1} = f^{-1}\cdot A_{\le0}\cdot f-f^{-1}\cdot A_{\le0}(f) = f^{-1}\cdot A_{<0}\cdot f-f^{-1}\cdot A_{<0}(f), \\
\bigl(\iota_{\Lambda^{-1}}\Delta^{-1}\cdot f\cdot A\cdot f^{-1}\cdot\Delta\bigr)_{\Delta,\ge1} \\
\qquad{} = \iota_{\Lambda^{\pm1}}\Delta^{-1}\cdot f\cdot A_{\ge0}\cdot f^{-1}\cdot\Delta
-\iota_{\Lambda^{\pm1}}\Delta^{-1}\cdot(A_{\ge0})^{*}(f)\cdot f^{-1}\cdot\Delta \\
\qquad{} = \iota_{\Lambda^{\pm1}}\Delta^{-1}\cdot f\cdot A_{>0}\cdot f^{-1}\cdot\Delta-\iota_{\Lambda^{\pm1}}\Delta^{-1}\cdot(A_{>0})^{*}(f)\cdot f^{-1}\cdot\Delta, \\
\bigl(\iota_{\Lambda}\Delta^{*-1}\cdot f\cdot A\cdot f^{-1}\cdot\Delta^{*}\bigr)_{\Delta^{*},\ge1}\\
\qquad{} = \iota_{\Lambda^{\pm1}}\Delta^{*-1}\cdot f\cdot A_{\le0}\cdot f^{-1}\cdot\Delta^{*}
-\iota_{\Lambda^{\pm1}}\Delta^{*-1}\cdot(A_{\le0})^{*}(f)\cdot f^{-1}\cdot\Delta \\
\qquad{} = \iota_{\Lambda^{\pm1}}\Delta^{*-1}\cdot f\cdot A_{<0}\cdot f^{-1}\cdot\Delta^{*}-\iota_{\Lambda^{\pm1}}\Delta^{*-1}\cdot(A_{<0})^{*}(f)\cdot f^{-1}\cdot\Delta^{*}.
\end{gather*}
\end{Lemma}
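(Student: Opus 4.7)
The plan is to prove the four identities in two stages: identities 1 and 2 by directly decomposing $A$ according to its $\Lambda$-sign and invoking Lemma \ref{fdoprop1}, then identities 3 and 4 by reducing them to identities 1 and 2 via the conjugation/adjoint formulas of Lemmas \ref{proplambda2} and \ref{fdoprop2}.

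For identity 1, I would write $A=A_{\ge 0}+A_{<0}$ and observe that multiplication by $f,f^{-1}$ preserves $\Lambda$-degree. By the expansion \eqref{delta-lambda}, $\Lambda^m$ with $m<0$ contains only strictly negative $\Delta$-powers, so $f^{-1}A_{<0}f$ contributes nothing to $(\cdot)_{\Delta,\ge 1}$; hence $(f^{-1}Af)_{\Delta,\ge 1}=(f^{-1}A_{\ge 0}f)_{\Delta,\ge 1}$. Since $f^{-1}A_{\ge 0}f$ is $\Lambda$-nonnegative, Lemma \ref{fdoprop1} identifies its $\Delta^{0}$-component as $(f^{-1}A_{\ge 0}f)(1)=f^{-1}A_{\ge 0}(f)$, and subtraction yields the first equality. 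The version with $A_{>0}$ follows from the trivial cancellation $f^{-1}A_{[0]}f-f^{-1}A_{[0]}(f)=a_{0}-a_{0}=0$. Identity 2 is proved by the mirror argument using that $A_{>0}$ contributes only $\Delta^{*}$-powers $\le 0$.

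For identity 3, I would apply the third formula of Lemma \ref{fdoprop2} with $k=1$, taking $X^{*}=fAf^{-1}$ (equivalently $X=f^{-1}A^{*}f$), to express the LHS as
\[
\iota_{\Lambda^{\pm1}}\Delta^{-1}\cdot\bigl((f^{-1}A^{*}f)_{\Delta^{*},\ge 1}\bigr)^{*}\cdot\Delta,
\]
the $\iota_{\Lambda^{\pm1}}$ freedom being inherited directly from Lemma \ref{fdoprop2}. Applying identity 2 to $A^{*}$ in place of $A$ (and using $(A^{*})_{\le 0}=(A_{\ge 0})^{*}$) evaluates the inner projection as $f^{-1}(A_{\ge 0})^{*}f-f^{-1}(A_{\ge 0})^{*}(f)$. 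Taking adjoints, noting that functions are self-adjoint and commute, turns this into $fA_{\ge 0}f^{-1}-(A_{\ge 0})^{*}(f)\,f^{-1}$; substituting back gives the first equality of identity 3, and the $A_{>0}$ version follows from cancellation of the $[0]$ term.

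For identity 4, no direct analog of Lemma \ref{fdoprop2} is available, so I would instead apply the fourth formula of Lemma \ref{proplambda2} to the entire LHS operator and use $\Delta^{*}\cdot\iota_{\Lambda}\Delta^{*-1}=1$ to telescope, reducing the projection to $(\iota_{\Lambda}\Delta^{*-1}\cdot fAf^{-1})_{\le 0}\cdot\Delta^{*}$. The key observation is that $\iota_{\Lambda}\Delta^{*-1}\cdot fA_{\ge 0}f^{-1}$ contains only $\Lambda$-powers $\ge 1$, so only $A_{<0}$ survives the $(\cdot)_{\le 0}$ projection. Identifying the discarded $\Lambda\ge 1$ piece with $\iota_{\Lambda}\Delta^{*-1}\cdot(A_{<0})^{*}(f)\,f^{-1}$ then boils down to the elementary relation $\sum_{n\ge 1}a_{-n}(s+n)f(s+n)=(A_{<0})^{*}(f)(s)$, which is immediate from the definition of the adjoint. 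Passing between the $A_{\le 0}$ and $A_{<0}$ forms uses the same $[0]$ cancellation. The main obstacle throughout is keeping track of the expansion conventions $\iota_{\Lambda^{\pm1}}$ on the right-hand sides, but all the necessary equivalences either are direct consequences of Lemma \ref{fdoprop2} or reduce to the same $A_{[0]}$ cancellation.
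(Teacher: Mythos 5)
Your proposal is correct, and for the first three identities it is essentially the paper's own argument: identities 1 and 2 via Lemma \ref{fdoprop1} (the decomposition $A=A_{\ge0}+A_{<0}$ you use is exactly what $A_{\ge0}=A_{\Delta,\ge0}$ encodes), and identity 3 by feeding identity 2 applied to $A^{*}$ into the third formula of Lemma \ref{fdoprop2} with $k=1$. The only divergence is identity 4. The paper's ``similar to the third one'' means using the $\Delta\leftrightarrow\Delta^{*}$ (i.e., $\Lambda\leftrightarrow\Lambda^{-1}$) mirror of the second formula of Lemma \ref{fdoprop2}, so your remark that ``no direct analog is available'' is not quite right --- the analog is available by symmetry, which is the route the paper intends. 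Your alternative via the fourth formula of Lemma \ref{proplambda2} and the telescoping $\Delta^{*}\cdot\iota_{\Lambda}\Delta^{*-1}=1$ is valid and arguably more concrete: I checked that $\bigl(\iota_{\Lambda}\Delta^{*-1}\cdot fA_{<0}f^{-1}\bigr)_{\ge1}=\iota_{\Lambda}\Delta^{*-1}\cdot(A_{<0})^{*}(f)\cdot f^{-1}$ does reduce to the elementary adjoint identity you state. The one loose end is that this route only produces the $\iota_{\Lambda}$ expansion on the right-hand side, whereas the lemma asserts $\iota_{\Lambda^{\pm1}}$; the interchangeability does not follow from an ``$A_{[0]}$ cancellation'' as you suggest, but from the observation that the difference of the two expansions is $\sum_{i\in\mathbb{Z}}\Lambda^{i}\cdot B\cdot\Delta^{*}$ with $B=fA_{<0}f^{-1}-(A_{<0})^{*}(f)f^{-1}$, and $\sum_{i\in\mathbb{Z}}\Lambda^{i}\cdot B$ vanishes because $B^{*}(1)=f^{-1}(A_{<0})^{*}(f)-f^{-1}(A_{<0})^{*}(f)=0$. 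The Lemma \ref{fdoprop2} route gets this for free since that lemma already asserts both expansions agree. With that small addition your argument is complete.
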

\begin{proof}
By using Lemma \ref{fdoprop1}, we can get
\begin{align*}
\bigl(f^{-1}\cdot A\cdot f\bigr)_{\Delta,\ge1}
&{}=\bigl(f^{-1}\cdot A\cdot f\bigr)_{\Delta,\ge0}-\bigl(f^{-1}\cdot A\cdot f\bigr)_{\Delta,[0]}\\
&{}=\bigl(f^{-1}\cdot A\cdot f\bigr)_{\ge0}-\bigl(f^{-1}\cdot A\cdot f\bigr)_{\ge0}|_{\La=1} \\
&{}=f^{-1}\cdot A_{\ge0}\cdot f-f^{-1}\cdot A_{\ge0}(f) \\
&{}=f^{-1}\cdot A_{>0}\cdot f-f^{-1}\cdot A_{>0}(f).
\end{align*}
Similarly, we can get the second formula. As for the third formula, it can be derived by the third relation in Lemma \ref{fdoprop2} with $k=1$ and formula for $\bigl(f^{-1}\cdot A\cdot f\bigr)_{\Delta^{*},\ge1}$.
As for the fourth formula, it can be proved by similar to the third one.
\end{proof}		

\section{Lax formulations of mToda hierarchy}\label{section3}
In this section, we first derive the Lax formulations of the mToda hierarchy from the bilinear equation (\ref{2mkpbilineartau}). Then, we also provide the proof of bilinear equation obtained from the Lax formulations.

For 2-component first mKP hierarchy \eqref{2mkpbilineartau}, let us introduce wave functions $\Psi_j(s,\mathbf{x},z)$, ${j\!=\!1,2}$, and their adjoint $\Psi_j^*(s,\mathbf{x},z)$ in terms of the tau function as follows:
\begin{align}
&{}\Psi_1(s,\mathbf{x},z)=\frac{\tau_{0,s}\bigl(\mathbf{x}-\bigl[z^{-1}\bigr]_1\bigr)}{\tau_{1,s}(\mathbf{x})}z^{s}{\rm e}^{\xi(\mathbf{x}^{(1)},z)}=w^{(1)}(s,\mathbf{x},z)z^s {\rm e}^{\xi(\mathbf{x}^{(1)},z)}, \label{defwavefun1} \\ 	
&{}\Psi_1^*(s,\mathbf{x},z)=\frac{\tau_{1,s}\bigl(\mathbf{x}+\bigl[z^{-1}\bigr]_1\bigr)}{\tau_{0,s}(\mathbf{x})}z^{-s}{\rm e}^{-\xi(\mathbf{x}^{(1)},z)}=w^{(1)*}(s,\mathbf{x},z)z^{-s}{\rm e}^{-\xi(\mathbf{x}^{(1)},z)}, \label{defwavefun2} \\
&{}\Psi_2(s,\mathbf{x},z)=\frac{\tau_{0,s+1}(\mathbf{x}-[z]_2)}{\tau_{1,s}(\mathbf{x})}z^{s}{\rm e}^{\xi(\mathbf{x}^{(2)},z^{-1})}=w^{(2)}(s,\mathbf{x},z)z^{s}{\rm e}^{\xi(\mathbf{x}^{(2)},z^{-1})}, \label{addefwavefun1} \\	
&{}\Psi_2^*(s,\mathbf{x},z)=\frac{\tau_{1,s-1}(\mathbf{x}+[z]_2)}{\tau_{0,s}(\mathbf{x})}z^{-s+1}{\rm e}^{-\xi(\mathbf{x}^{(2)},z^{-1})}=w^{(2)*}(s,\mathbf{x},z)z^{-s}{\rm e}^{-\xi(\mathbf{x}^{(2)},z^{-1})},\label{addefwavefun2}
\end{align}
where $w^{(j)}(s,\mathbf{x},z)$ and $w^{(j)*}(s,\mathbf{x},z)$ are formal power series
with respect to $z^{-1}$ (for $j=1$) and~$z$ (for $j=2$)
\begin{alignat*}{3}
&w^{(1)}(s,\mathbf{x},z)=\sum_{i=0}^{\infty}c_i(s,\mathbf{x})z^{-i}, \qquad &&
w^{(2)}(s,\mathbf{x},z)=\sum_{i=0}^{\infty}\bar{c}_i(s,\mathbf{x})z^i, &\\
& w^{(1)*}(s,\mathbf{x},z)=\sum_{i=0}^{\infty}c'_i(s,\mathbf{x})z^{-i}, \qquad &&	w^{(2)*}(s,\mathbf{x},z)=\sum_{i=0}^{\infty}\bar{c}'_i(s,\mathbf{x})z^{i+1}, &
\end{alignat*}
with $c_0(s,\mathbf{x})\neq 0$ and $\bar{c}_0(s,\mathbf{x})\neq 0$.
Hence, the bilinear equation (\ref{2mkpbilineartau}) can be rewritten into
\begin{gather*}
\oint_{C_\infty}\frac{{\rm d}z}{2\pi {\rm i}z }\Psi_1(s,\mathbf{x},z)\Psi_1^*\bigl(s',\mathbf{x}',z\bigr)+\oint_{C_0}\frac{{\rm d}z}{2\pi {\rm i}z }\Psi_2(s,\mathbf{x},z)\Psi_2^*(s',\mathbf{x}',z)=1,
\end{gather*}
where $C_{\infty}$ denotes the circle around $z=\infty$, while $C_{0}$ denotes the circle around $z=0$. Both $C_{\infty}$ and $C_{0}$ are anticlockwise.
	
Next, introduce the wave operators $W_j$ and $\widetilde{W}_j$, $j=1,2$, as
\begin{alignat}{3} \label{defwmatrix}	
&W_1(s,\mathbf{x},\La)=S_1(s,\mathbf{x},\La){\rm e}^{\xi(\mathbf{x}^{(1)},\La)},\qquad && W_2(s,\mathbf{x},\La)=S_2(s,\mathbf{x},\La){\rm e}^{\xi(\mathbf{x}^{(2)},\La^{-1})}, & \notag\\
&\widetilde{W}_1(s,\mathbf{x},\La)=\widetilde{S}_1(s,\mathbf{x},\La){\rm e}^{-\xi(\mathbf{x}^{(1)},\La^{-1})}, \qquad &&
\widetilde{W}_2(s,\mathbf{x},\La)=\widetilde{S}_2(s,\mathbf{x},\La){\rm e}^{-\xi(\mathbf{x}^{(2)},\La)}, &
\end{alignat}
where $S_j$ and $\widetilde{S}_j$, $j=1,2$, are also called wave operators given by
\begin{alignat*}{3}
&S_1(s,\mathbf{x},\La)=\sum_{i\geq 0}c_i(s,\mathbf{x})\La^{-i},
\qquad && S_2(s,\mathbf{x},\La)=\sum_{i\geq 0}\bar{c}_i(s,\mathbf{x})\La^{i}, &\notag \\
&\widetilde{S}_1(s,\mathbf{x},\La)=\sum_{i\geq 0}c'_i(s,\mathbf{x})\La^{i},
\qquad && \widetilde{S}_2(s,\mathbf{x},\La)=\sum_{i\geq 0}\bar{c}'_i(s,\mathbf{x})\La^{-i-1}. &
\end{alignat*}
It can be clearly found from \eqref{defwavefun1}--\eqref{addefwavefun2} and \eqref{defwmatrix} that the wave functions and the adjoint wave functions are linked with wave operators in the following manner:
\begin{gather*}
\Psi_j(s,\mathbf{x},z)=W_j(s,\mathbf{x},\La)(z^s), \qquad
\Psi_j^*(s,\mathbf{x},z)=\widetilde{W}_j(s,\mathbf{x},\La)(z^{-s}),\qquad j=1,2.
\end{gather*}
To derive the Lax formulations of the mToda hierarchy, the following lemma is necessary.

\begin{Lemma}[{\cite{Adler1999}}]\label{proplambda1}
Let $A(s,\Lambda)=\sum_{i\in\mathbb{Z}}a_i(s)\Lambda^i$, $B(s,\Lambda)=\sum_{i\in\mathbb{Z}}b_i(s)\Lambda^i$are two pseudo-difference operators, then
\begin{equation*}
A(s,\Lambda)\cdot B^*(s,\Lambda)=\sum_{i\in \mathbb{Z}} {\rm Res}_{z}{z^{-1}}\bigl(A(s,\Lambda)\bigl(z^{\pm s}\bigr)\cdot B(s+i,\Lambda)\bigl(z^{\mp s\mp i}\bigr)\bigr)\Lambda^i.
\end{equation*}
\end{Lemma}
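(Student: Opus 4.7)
The plan is to verify the identity by matching coefficients of $\La^i$ on both sides for every $i\in\mathbb{Z}$, viewing both as formal Laurent series in $\La$. It will suffice to identify the $\La^i$-coefficient of $A\cdot B^*$ with ${\rm Res}_z z^{-1}\bigl(A(s,\La)(z^{\pm s})\cdot B(s+i,\La)(z^{\mp s\mp i})\bigr)$ for each $i$.

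First I would expand the left-hand side directly. Using $B^*(s,\La)=\sum_{j}\La^{-j}b_j(s)$ together with the commutation rule $\La^k\cdot f(s)=f(s+k)\cdot\La^k$, a straightforward rearrangement followed by the reindexing $i=k-j$ gives
\begin{equation*}
A(s,\La)\cdot B^*(s,\La)=\sum_{k,j}a_k(s)b_j(s+k-j)\La^{k-j}=\sum_{i\in\mathbb{Z}}\Bigl(\sum_{j}a_{i+j}(s)b_j(s+i)\Bigr)\La^i,
\end{equation*}
so the $\La^i$-coefficient on the left is $\sum_j a_{i+j}(s)b_j(s+i)$.

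For the right-hand side I would take the upper sign; the lower sign case is entirely parallel. Since $\La^m$ acts on $z^s$ (viewed as a function of $s$) by $z^s\mapsto z^{s+m}$, one finds $A(s,\La)(z^s)=\sum_k a_k(s)z^{s+k}$ and $B(s+i,\La)(z^{-s-i})=\sum_j b_j(s+i)z^{-s-i-j}$. Multiplying these series produces $\sum_{k,j}a_k(s)b_j(s+i)z^{k-j-i}$, and applying ${\rm Res}_z z^{-1}$ picks out the $z^0$-coefficient, which forces $k=i+j$ and yields exactly $\sum_j a_{i+j}(s)b_j(s+i)$. This matches the left side term by term. I do not foresee any real obstacle here: the proof is pure index bookkeeping, the only subtle point being to keep the shift $k\mapsto k-j$ consistent when commuting coefficients past $\La^k$ and to handle the $\pm$ sign cases symmetrically.
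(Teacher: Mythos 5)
Your proof is correct. Note that the paper does not actually prove this lemma: it is quoted verbatim from Adler--van Moerbeke \cite{Adler1999} with no argument supplied, so there is nothing to compare against beyond the standard computation, which is exactly what you carry out. Your coefficient bookkeeping is consistent with the paper's conventions, namely $\bigl(f(s)\Lambda^i\bigr)\bigl(g(s)\Lambda^j\bigr)=f(s)g(s+i)\Lambda^{i+j}$, $A^*=\sum_m\Lambda^{-m}a_m(s)$, and ${\rm Res}_z\sum_i a_iz^i=a_{-1}$, and both sign choices reduce to the same constraint $k=i+j$, so the verification is complete.
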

	
According to Lemma \ref{proplambda1}, one has
\begin{gather*}
W_1(s,\mathbf{x},\La)\widetilde{W}_1^{*}\bigl(s,\mathbf{x}',\La\bigr)+W_2(s,\mathbf{x},\La)\widetilde{W}_2^{*}\bigl(s,\mathbf{x}',\La\bigr)=\sum_{i\in\mathbb{Z}}\La^i,
\end{gather*}
namely,
\begin{gather}\label{2mkpwavematrix2}
S_1(s,\mathbf{x},\La){\rm e}^{\xi(\mathbf{x}^{(1)}-\mathbf{x}'^{(1)},\La)}\widetilde{S}_1^{*}\bigl(s,\mathbf{x}',\La\bigr)+S_2(s,\mathbf{x},\La){\rm e}^{\xi(\mathbf{x}^{(2)}-\mathbf{x}'^{(2)},\La^{-1})}\widetilde{S}_2^{*}\bigl(s,\mathbf{x}',\La\bigr)=\sum_{i\in\mathbb{Z}}\La^i.\!\!
\end{gather}
	
\begin{Theorem}\label{mTodasato}
The wave operators have the relations
\begin{alignat}{3}
&{\widetilde{S}}_1=-\iota_{\La}\Delta^{-1}\bigl(S^*_1\bigr)^{-1},\qquad && {\widetilde{S}}_2=\iota_{\La^{-1}}\Delta^{-1}\bigl(S^*_2\bigr)^{-1}, &\label{S1S2}\\
&{\widetilde{W}}_1=-\iota_{\La^{-1}}\Delta^{-1}\bigl(W^*_1\bigr)^{-1},\qquad && {\widetilde{W}}_2=\iota_{\La}\Delta^{-1}\bigl(W^*_2\bigr)^{-1}, &\label{W1W2}
\end{alignat}
where $\iota_{\La}\Delta^{-1}=-\sum_{i\geq 0}\Lambda^i$ and $\iota_{\La^{-1}}\Delta^{-1}=\sum_{i\geq 1}\Lambda^{-i}$.
Moreover, they satisfy the following evolution equations:
\begin{alignat}{3}
&\pa_{x_n^{(1)}}{S}_1=-\bigl( S_1\La^nS_1^{-1}\bigr)_{\Delta,\leq0}{S}_1,\qquad && \pa_{x_n^{(1)}}{S}_2=\bigl( S_1\La^nS_1^{-1}\bigr)_{\Delta,\geq1}{S}_2,&\label{mTodaSx1} \\
&\pa_{x_n^{(2)}}{S}_1=\bigl( S_2\La^{-n}S_2^{-1}\bigr)_{\Delta^{*},\geq1}{S}_1,\qquad&&
\pa_{x_n^{(2)}}{S}_2=-\bigl( S_2\La^{-n}S_2^{-1}\bigr)_{\Delta^{*},\leq0}{S}_2,& \nonumber 
\\
&\pa_{x_n^{(1)}}{W}_j=\bigl( S_1\La^nS_1^{-1}\bigr)_{\Delta,\geq1}{W}_j,\qquad&&
\pa_{x_n^{(2)}}{W}_j=\bigl( S_2\La^{-n}S_2^{-1}\bigr)_{\Delta^{*},\geq1}{W}_j.& \nonumber 
\end{alignat}
\end{Theorem}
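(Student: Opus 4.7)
The plan is to prove the three parts of the statement in turn: the dressing relations \eqref{S1S2}, the wave-operator relations \eqref{W1W2}, and the evolution equations, tackling them in this order.

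For \eqref{S1S2}, I specialise \eqref{2mkpwavematrix2} to $\mathbf{x}=\mathbf{x}'$, which makes every exponential factor equal to $1$ and yields $S_{1}\widetilde{S}_{1}^{*}+S_{2}\widetilde{S}_{2}^{*}=\sum_{i\in\mathbb{Z}}\La^{i}$. From the explicit expansions of $S_{j}$ and $\widetilde{S}_{j}$, the product $S_{1}\widetilde{S}_{1}^{*}$ lies in $\mathcal{A}((\La^{-1}))$ with highest $\La$-power $0$, while $S_{2}\widetilde{S}_{2}^{*}$ lies in $\mathcal{A}((\La))$ with lowest $\La$-power $1$; since the two ranges are disjoint, the identity separates into
\begin{equation*}
S_{1}\widetilde{S}_{1}^{*}=\sum_{i\leq 0}\La^{i},\qquad S_{2}\widetilde{S}_{2}^{*}=\sum_{i\geq 1}\La^{i}.
\end{equation*}
Using $-\iota_{\La}\Delta^{-1}=\sum_{i\geq 0}\La^{i}$ and $\iota_{\La^{-1}}\Delta^{-1}=\sum_{i\geq 1}\La^{-i}$, the right-hand sides are recognised as the adjoints of $-\iota_{\La}\Delta^{-1}$ and $\iota_{\La^{-1}}\Delta^{-1}$ respectively. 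Taking adjoints and solving for $\widetilde{S}_{j}$ then produces \eqref{S1S2}. For \eqref{W1W2}, I invoke $W_{j}=S_{j}{\rm e}^{\xi(\mathbf{x}^{(j)},\La^{\pm 1})}$, $\widetilde{W}_{j}=\widetilde{S}_{j}{\rm e}^{-\xi(\mathbf{x}^{(j)},\La^{\mp 1})}$, and $(W_{j}^{*})^{-1}=(S_{j}^{*})^{-1}{\rm e}^{-\xi(\mathbf{x}^{(j)},\La^{\mp 1})}$; substituting \eqref{S1S2} and transferring the exponential across to regroup with $(S_{j}^{*})^{-1}$ into $(W_{j}^{*})^{-1}$ yields \eqref{W1W2} once the appropriate $\iota$-expansion dictated by the wave-operator ring is fixed.

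For the evolution equations, I differentiate \eqref{2mkpwavematrix2} with respect to $x_{n}^{(1)}$ and then set $\mathbf{x}=\mathbf{x}'$; writing $R_{j}:=\partial_{x_{n}^{(1)}}S_{j}\cdot S_{j}^{-1}$ and $P:=S_{1}\La^{n}S_{1}^{-1}$, the resulting identity is
\begin{equation*}
(R_{1}+P)\cdot S_{1}\widetilde{S}_{1}^{*}+R_{2}\cdot S_{2}\widetilde{S}_{2}^{*}=0.
\end{equation*}
Substituting the identifications of $S_{j}\widetilde{S}_{j}^{*}$ from the first step and right-multiplying by $\Delta^{*}$, the telescoping identities
\begin{equation*}
\Bigl(\sum_{i\leq 0}\La^{i}\Bigr)\cdot\Delta^{*}=-1,\qquad \Bigl(\sum_{i\geq 1}\La^{i}\Bigr)\cdot\Delta^{*}=1
\end{equation*}
collapse the operator equation to $R_{2}=R_{1}+P$. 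Combining this single relation with the degree bounds ($R_{1}$ has $\La$-powers $\leq 0$, $R_{2}$ has $\La$-powers $\geq 0$, $P$ has $\La$-powers $\leq n$) together with the $\Lambda$-versus-$\Delta$ dictionary $A_{\geq 0}=A_{\Delta,\geq 0}$, $A_{<0}=A_{\Delta,\leq -1}$ of Lemma~\ref{fdoprop1} and the decomposition $P_{[0]}=P_{\Delta,[0]}+(P_{\Delta,\geq 1})_{[0]}$ uniquely identifies $R_{2}=P_{\Delta,\geq 1}$ and $R_{1}=-P_{\Delta,\leq 0}$. The evolution of $W_{j}$ under $\partial_{x_{n}^{(1)}}$ follows immediately from the chain rule $\partial_{x_{n}^{(1)}}W_{j}=(R_{j}+\delta_{j,1}P)W_{j}$, both coefficients collapsing to the common operator $P_{\Delta,\geq 1}$. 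The $\partial_{x_{n}^{(2)}}$-equations then follow by an entirely symmetric argument in which $\Delta$ is replaced by $\Delta^{*}$, $P$ is replaced by $S_{2}\La^{-n}S_{2}^{-1}$, and Lemma~\ref{proplambda2} supplies the $\Delta^{*}$-counterpart of the identifications.

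The main technical obstacle is the uniqueness step in the last paragraph: the reduced equation $R_{2}=R_{1}+P$ pins down only the strictly positive and strictly negative $\La$-parts of $R_{1}$ and $R_{2}$, leaving the $\La^{0}$ coefficient underdetermined from degree considerations alone. Separating this constant term into the correct $\Delta^{0}$-piece of $P_{\Delta,\leq 0}$ and the $\La^{0}$-contribution of $P_{\Delta,\geq 1}$ requires careful use of Lemma~\ref{fdoprop1}, and may be cross-checked against the tau-function identity $c_{0}=\tau_{0,s}/\tau_{1,s}$ which fixes $(R_{1})_{[0]}=\partial_{x_{n}^{(1)}}\log c_{0}$ independently.
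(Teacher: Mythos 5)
Your derivation of \eqref{S1S2} and \eqref{W1W2} is correct and coincides with the paper's: setting $\mathbf{x}'=\mathbf{x}$ in \eqref{2mkpwavematrix2} and separating the non-positive powers (where only $S_1\widetilde S_1^*$ lives) from the positive powers (where only $S_2\widetilde S_2^*$ lives) gives exactly the paper's relations $S_1\widetilde S_1^*=-\iota_{\La^{-1}}\Delta^{*-1}=\sum_{i\le0}\La^i$ and $S_2\widetilde S_2^*=\iota_{\La}\Delta^{*-1}=\sum_{i\ge1}\La^i$.

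The evolution-equation part, however, has a genuine gap, and it is exactly the one you flag at the end. Right-multiplying the differentiated identity $(R_1+P)\sum_{i\le0}\La^i+R_2\sum_{i\ge1}\La^i=0$ by $\Delta^*$ is \emph{not} an equivalence: on doubly infinite series $X=\sum_m x_m\La^m$ one has $X\Delta^*=\sum_m(x_{m+1}-x_m)\La^m$, so the kernel of right-multiplication by $\Delta^*$ consists of all $c(s)\sum_{m\in\mathbb Z}\La^m$. Your collapsed equation $R_2=R_1+P$ therefore discards precisely one function's worth of information, namely the constraint $R_2(1)=0$ (equivalently $(R_1+P)(1)=0$), which is what fixes the $\La^0$ coefficients. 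Without it, every splitting $R_2=P_{>0}+g$, $R_1=-P_{<0}+g-P_{[0]}$ with $g$ an arbitrary function is consistent with $R_2=R_1+P$ and the degree bounds, so no amount of bookkeeping with Lemma~\ref{fdoprop1} can single out $g=-P_{\ge1}(1)$; and the proposed tau-function cross-check presupposes the Sato equation you are trying to prove. The fix is to do the two operations in the opposite order, as the paper does: first project the differentiated identity onto $\La$-degrees $\ge1$ (where the $R_2$-term survives in full and the $R_1$-term drops out entirely, since $R_1\sum_{i\le0}\La^i$ has only non-positive powers), obtaining
\begin{gather*}
R_2\cdot\sum_{i\ge1}\La^i=-\Bigl(P\cdot\sum_{i\le0}\La^i\Bigr)_{\ge1}=\bigl(P\cdot\iota_{\La^{-1}}\Delta^{*-1}\bigr)_{\ge1},
\end{gather*}
and only then multiply on the right by $\Delta^*$; Lemma~\ref{proplambda2} then converts $\bigl(P\,\iota_{\La^{-1}}\Delta^{*-1}\bigr)_{\ge1}\Delta^*$ into $P_{\Delta,\ge1}=P_{\ge1}-P_{\ge1}|_{\La=1}$, which carries the correct $\La^0$ term $-P_{\ge1}(1)$. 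The projection onto degrees $\le0$ handles $R_1$ analogously. With that repair the remainder of your argument (the chain rule for $W_j$ and the symmetric $\pa_{x_n^{(2)}}$ case) goes through.
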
	
\begin{proof}
Firstly, by substituting $\mathbf{x}'=\mathbf{x}$ into expression \eqref{2mkpwavematrix2} and comparing the non-positive and positive powers of $\La$ on both sides, we can obtain the following relations:
\begin{eqnarray}\label{relawaves}
S_1(s,\mathbf{x},\La)\widetilde{S}_1^{*}(s,\mathbf{x},\La)=-\iota_{\La^{-1}}\Delta^{*-1}, \qquad S_2(s,\mathbf{x},\La)\widetilde{S}_2^{*}(s,\mathbf{x},\La)=\iota_{\La}\Delta^{*-1},
\end{eqnarray}
which implies \eqref{S1S2} and \eqref{W1W2}. Next, by differentiating (\ref{2mkpwavematrix2}) with respect to $x^{(1)}_n$ and letting $\mathbf{x}'=\mathbf{x}$, we have
\begin{eqnarray}\label{2mkpsatopf1}
\pa_{x^{(1)}_n}S_1\cdot\widetilde{S}_1^{*}+S_1\La^n\widetilde{S}_1^{*}+\pa_{x^{(1)}_n}S_2\cdot\widetilde{S}_2^{*}=0.
\end{eqnarray}
Hence, taking the positive power and non-positive power of $\La$ in (\ref{2mkpsatopf1}) and making use of (\ref{relawaves}), we can get, respectively,
\begin{align*}
\pa_{x^{(1)}_n}S_2=\bigl(S_1\La^nS_1^{-1}\iota_{\La^{-1}}\Delta^{*-1}\bigr)_{\geq 1}\Delta^* S_2,\qquad \pa_{x^{(1)}_n}S_1 =-\bigl(S_1\La^nS_1^{-1}\iota_{\La^{-1}}
\Delta^{*-1}\bigr)_{\leq 0}\Delta^*S_1.
\end{align*}
Then by Lemma \ref{proplambda2}, we have
\begin{align*}
&\pa_{x^{(1)}_n}S_2=\bigl(\bigl(S_1\La^nS_1^{-1}\bigr)_{\geq 1}-\bigl(S_1\La^nS_1^{-1}\bigr)_{\geq 1}|_{\La=1} \bigr)S_2=\bigl( S_1\La^nS_1^{-1}\bigr)_{\Delta,\geq1}S_2,\\
&\pa_{x^{(1)}_n}S_1=-\bigl(\bigl(S_1\La^nS_1^{-1}\bigr)_{<0}+\bigl(S_1\La^nS_1^{-1}\bigr)_{\geq 0}|_{\La=1} \bigr)S_1=-\bigl( S_1\La^nS_1^{-1}\bigr)_{\Delta,\leq0}S_1.
\end{align*}
Similarly, we can obtain $\pa_{x^{(2)}_n}S_i$.
\end{proof}

Further introduce the Lax operators $L_1$ and $L_2$ of mToda hierarchy as
\begin{gather*}
L_1=W_1\La W_1^{-1}=\sum_{i=-1}^{\infty}u_i(s,\mathbf{x})\La^{-i}\in\mathcal{A}\bigl(\bigl(\La^{-1}\bigr)\bigr), \\
L_2=W_2\La^{-1}W_2^{-1}=\sum_{i=-1}^{\infty}\bar{u}_i(s,\mathbf{x})\La^{i}\in\mathcal{A}((\La)),
\end{gather*}
then Lax equations follow from Theorem \ref{mTodasato}
\begin{eqnarray}\label{2mkplax}
\pa_{x_n^{(1)}}L_j=\bigl[B_n^{(1)},L_j\bigr], \qquad \pa_{x_n^{(2)}}L_j=\bigl[B_n^{(2)},L_j\bigr],\qquad j=1,2,
\end{eqnarray}
where \smash{$B_n^{(1)}=\bigl( S_1\La^nS_1^{-1}\bigr)_{\Delta,\geq1}$},
		\smash{$B_n^{(2)}=\bigl( S_2\La^{-n}S_2^{-1}\bigr)_{\Delta^{*},\geq1}$}.
\begin{Corollary}
Wave functions $\Psi_j$ and the adjoint wave functions $\Psi^*_j$, $j=1,2$, satisfy the auxiliary linear equations
	\begin{alignat*}{3}
		&{}L_1(\Psi_1)=z\Psi_1,\qquad&& L_2(\Psi_2)=z^{-1}\Psi_2, &\\
 &{}\pa_{x_n^{(1)}}\Psi_j=B_n^{(1)}(\Psi_j),\qquad&& \pa_{x_n^{(2)}}\Psi_j=B_n^{(2)}(\Psi_j),&\\
		&{}\bigl(\iota_{\La^{-1}}\Delta^{-1}L_1^*\Delta\bigr)(\Psi^*_1)=z\Psi^*_1,\qquad&& \bigl(\iota_{\La}\Delta^{-1}L_2^*\Delta\bigr)(\Psi^*_2)=z^{-1}\Psi^*_2, & \\
		&{}\pa_{x_n^{(1)}}\Psi^*_j=-\bigl(\iota_{\La^{-1}}\Delta^{-1}B_n^{(1)*}\Delta\bigr)(\Psi^*_j), \qquad && \pa_{x_n^{(2)}}\Psi^*_j=-\bigl(\iota_{\La}\Delta^{-1}B_n^{(2)*}\Delta\bigr)(\Psi^*_j).&
	\end{alignat*}
\end{Corollary}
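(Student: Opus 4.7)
The plan is to establish the six identities in three pairs, each following from Theorem~\ref{mTodasato} and the definitions $L_j = W_j\La^{\pm 1}W_j^{-1}$, $\Psi_j = W_j(z^s)$, and $\Psi_j^* = \widetilde W_j(z^{-s})$. The spectral equations $L_j(\Psi_j) = z^{\pm 1}\Psi_j$ are immediate: $L_1(\Psi_1) = W_1\La W_1^{-1}W_1(z^s) = W_1(\La(z^s)) = W_1(z\cdot z^s) = z\Psi_1$, where the scalar $z$ factors through $W_1$ since $W_1$ acts only on $s$, and the $L_2$ case is identical with $\La^{-1}(z^s) = z^{-1}\cdot z^s$. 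The evolution equations $\pa_{x_n^{(i)}}\Psi_j = B_n^{(i)}(\Psi_j)$ are obtained by differentiating $\Psi_j = W_j(z^s)$ and applying the last two evolution equations of Theorem~\ref{mTodasato}.

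For the adjoint spectral equations, the key intermediate identity is $W_j^*\cdot\Delta\cdot\widetilde W_j = \mp 1$ (sign $-$ for $j=1$, $+$ for $j=2$), obtained from \eqref{W1W2} by left-multiplying by $\Delta$ and using the telescoping $\Delta\cdot\iota_{\La^{\pm 1}}\Delta^{-1} = 1$. Combined with $L_j^* = (W_j^*)^{-1}\La^{\mp 1}W_j^*$ (the adjoint of $L_j = W_j\La^{\pm 1}W_j^{-1}$), this gives $L_1^*\Delta\widetilde W_1 = -(W_1^*)^{-1}\La^{-1}$; pre-multiplying by $\iota_{\La^{-1}}\Delta^{-1}$, applying to $z^{-s}$, and using $\La^{-1}(z^{-s}) = z\cdot z^{-s}$ yields $(\iota_{\La^{-1}}\Delta^{-1}L_1^*\Delta)(\Psi_1^*) = z\Psi_1^*$. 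The $L_2^*$ equation follows symmetrically with $\iota_{\La}\Delta^{-1}$ in place of $\iota_{\La^{-1}}\Delta^{-1}$.

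For the adjoint evolution equations, I would differentiate \eqref{W1W2}: from $\pa_{x_n^{(i)}}W_j = B_n^{(i)}W_j$ follows $\pa_{x_n^{(i)}}(W_j^*)^{-1} = -(B_n^{(i)})^*(W_j^*)^{-1}$, and substituting $(W_j^*)^{-1} = \mp\Delta\widetilde W_j$ gives $\pa_{x_n^{(i)}}\widetilde W_j = -\iota_{\La^{\mp 1}}\Delta^{-1}(B_n^{(i)})^*\Delta\widetilde W_j$, which on $z^{-s}$ produces $\pa_{x_n^{(i)}}\Psi_j^* = -\iota_{\La^{\mp 1}}\Delta^{-1}(B_n^{(i)})^*\Delta(\Psi_j^*)$. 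The main subtlety is that the natural expansion $\iota_{\La^{\mp 1}}$ emerging from $\widetilde W_j$ is keyed to the wave-function index $j$, while the corollary writes a uniform expansion keyed instead to the flow index $i$. I would reconcile the two by invoking Lemma~\ref{fdoprop2}, which gives $\iota_{\La}\Delta^{-1}C^*\Delta = \iota_{\La^{-1}}\Delta^{-1}C^*\Delta$ whenever $C$ has the form $A_{\Delta,\ge 1}$ or $A_{\Delta^*,\ge 1}$, precisely the structure of $B_n^{(i)}$; this is where the hard work of Section~\ref{section2} pays off and is the step I expect to require the most care.
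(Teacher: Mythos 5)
Your proposal is correct and is precisely the derivation the paper intends but omits: the corollary is meant to follow from Theorem~\ref{mTodasato} by exactly this dressing argument, using $L_j=W_j\La^{\pm1}W_j^{-1}$, $\Psi_j=W_j(z^s)$, $\Psi_j^*=\widetilde W_j(z^{-s})$ and the identities $(W_1^*)^{-1}=-\Delta\widetilde W_1$, $(W_2^*)^{-1}=\Delta\widetilde W_2$. You also correctly isolated the one genuinely delicate point --- that the expansion direction $\iota_{\La^{\pm1}}$ produced by $\widetilde W_j$ is indexed by $j$ while the corollary indexes it by the flow, and that the two coincide on $\iota_{\La^{\pm1}}\Delta^{-1}B_n^{(i)*}\Delta$ by the second and third relations of Lemma~\ref{fdoprop2} with $k=1$, since $B_n^{(1)}$ and $B_n^{(2)}$ are of the form $A_{\Delta,\ge1}$ and $A_{\Delta^*,\ge1}$ respectively.
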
	
	
The following proposition states that mToda is consistent, that is, \smash{$\bigl[\pa_{x_n^{(i)}},\pa_{x_m^{(j)}}\bigr]=0$}.
\begin{Proposition}
$B_m^{(j)}$ satisfies the following relation:
\begin{eqnarray}\label{2mkpzs1}			
\pa_{x_n^{(i)}}B_m^{(j)}-\pa_{x_m^{(j)}}B_n^{(i)}+\bigl[B_m^{(j)},B_n^{(i)}\bigr]=0,\qquad i,j=1,2.
\end{eqnarray}
\end{Proposition}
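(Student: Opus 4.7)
The strategy is to derive the zero-curvature equation \eqref{2mkpzs1} as the compatibility condition of two Sato-type flows on the wave operator $W_\ell$. The crucial input is the uniform evolution equation $\pa_{x_n^{(i)}} W_\ell = B_n^{(i)} W_\ell$, valid for all $i, \ell \in \{1,2\}$, from Theorem \ref{mTodasato}. This uniformity lets me treat all four cases $(i,j) \in \{1,2\}^2$ in a single computation, without separating $\Delta$- and $\Delta^*$-projections at the outset, which would otherwise make the mixed case $(i,j)=(1,2)$ quite delicate.

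Concretely, I fix any $\ell \in \{1,2\}$ and differentiate each Sato equation once more:
\begin{align*}
\pa_{x_m^{(j)}} \pa_{x_n^{(i)}} W_\ell &= \bigl(\pa_{x_m^{(j)}} B_n^{(i)}\bigr) W_\ell + B_n^{(i)} B_m^{(j)} W_\ell, \\
\pa_{x_n^{(i)}} \pa_{x_m^{(j)}} W_\ell &= \bigl(\pa_{x_n^{(i)}} B_m^{(j)}\bigr) W_\ell + B_m^{(j)} B_n^{(i)} W_\ell.
\end{align*}
Since every coefficient of $W_\ell$ is a rational expression in the tau functions $\tau_{0,s}(\mathbf{x})$, $\tau_{1,s}(\mathbf{x})$ and their shifts (cf.\ \eqref{defwavefun1}--\eqref{addefwavefun2}), and tau functions are smooth in $\mathbf{x}$, Clairaut's theorem yields $\pa_{x_n^{(i)}} \pa_{x_m^{(j)}} = \pa_{x_m^{(j)}} \pa_{x_n^{(i)}}$ coefficient-wise on $W_\ell$. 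Subtracting the two displays gives
\begin{equation*}
\bigl(\pa_{x_n^{(i)}} B_m^{(j)} - \pa_{x_m^{(j)}} B_n^{(i)} + [B_m^{(j)}, B_n^{(i)}]\bigr) W_\ell = 0.
\end{equation*}

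It remains to cancel $W_\ell$ on the right. Here I would invoke the fact that $\Delta^k$ and $\Delta^{*k}$ for $k \geq 1$ are polynomials in $\La^{\pm1}$, so each $B_n^{(1)}$ is polynomial in $\La$ of degree at most $n$ and each $B_m^{(2)}$ is polynomial in $\La^{-1}$ of degree at most $m$. Consequently the operator $X := \pa_{x_n^{(i)}} B_m^{(j)} - \pa_{x_m^{(j)}} B_n^{(i)} + [B_m^{(j)}, B_n^{(i)}]$ is a finite sum $\sum_k x_k(s,\mathbf{x}) \La^k$. Applying the identity $X W_\ell = 0$ to $z^s$ gives $X(\Psi_\ell) = 0$; because $\Psi_\ell = z^s e^{\xi(\mathbf{x}^{(\ell)}, z^{\pm 1})} w^{(\ell)}$ with nonzero leading factor $c_0$ (respectively $\bar c_0$), expanding the Laurent series in $z$ forces the $x_k$ to vanish recursively, starting from the top.

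The argument is essentially mechanical once the clean Sato form for $W_\ell$ is in hand; I would expect no real obstacle. The only step that merits explicit justification is the final cancellation, since $W_\ell$ is not itself a pseudo-difference operator in $\mathcal{A}((\La^{\pm1}))$ (the exponential dressing factor has unbounded $\La$-powers), and one must invoke either the finite-range structure of $X$ combined with the explicit normalization of $\Psi_\ell$, or alternatively right-multiply by $e^{-\xi(\mathbf{x}^{(\ell)},\La^{\pm1})}$ and use invertibility of the resulting $S_\ell$.
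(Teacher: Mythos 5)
Your proof is correct, but it follows a genuinely different route from the paper's. You obtain the zero-curvature relation as the compatibility condition of the Sato flows $\pa_{x_n^{(i)}}W_\ell=B_n^{(i)}W_\ell$: cross-differentiating, invoking commutativity of mixed partials on the smooth coefficients, and cancelling the dressing operator. The paper instead argues by projection directly from the Lax equations \eqref{2mkplax}: setting $A_{i,j}=\pa_{x_n^{(i)}}B_m^{(j)}-\pa_{x_m^{(j)}}B_n^{(i)}+\bigl[B_m^{(j)},B_n^{(i)}\bigr]$ and using $B^{(1)}_n=(L_1^n)_{\geq 1}-(L_1^n)_{\geq 1}|_{\La=1}$ and $B^{(2)}_m=(L_2^m)_{<0}-(L_2^m)_{<0}|_{\La=1}$, it shows $(A_{1,2})_{>0}=\bigl(-\pa_{x_m^{(2)}}L_1^n+\bigl[B_m^{(2)},L_1^n\bigr]\bigr)_{>0}=0$, handles the $<0$ part symmetrically, and kills the residual $\La^0$ part via $A_{1,2}(1)=0$. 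Your approach treats all four cases $(i,j)$ uniformly and avoids the case-by-case projection, at the price of requiring the dressing operators and of justifying the cancellation of $W_\ell$ --- an issue you correctly flag, since the $\La$-coefficients of $W_\ell=S_\ell {\rm e}^{\xi}$ are infinite sums. The cleanest repair is the one you hint at: work with $S_1$ itself via $\pa_{x_n^{(1)}}S_1=B_n^{(1)}S_1-S_1\La^n$ and $\pa_{x_m^{(2)}}S_1=B_m^{(2)}S_1$; in the mixed second derivative the extra terms $-B_m^{(2)}S_1\La^n$ cancel, leaving $A_{1,2}S_1=0$, and invertibility of $S_1$ in $\mathcal{A}\bigl(\bigl(\La^{-1}\bigr)\bigr)$ (its leading coefficient $c_0$ is nonzero) gives $A_{1,2}=0$. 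What the paper's argument buys in exchange is that it uses only the Lax equations and the triangular ($\La^{>0}$ versus $\La^{\le 0}$) structure of the generators, without ever touching the wave operators.
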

\begin{proof}
Here we only prove the case of $i=1$, $j=2$, other cases are similar.
Firstly, denote
\[
A_{i,j}=\pa_{x_n^{(i)}}B_m^{(j)}-\pa_{x_m^{(j)}}B_n^{(i)}+\bigl[B_m^{(j)},B_n^{(i)}\bigr],
\]
then by (\ref{2mkplax}), $B^{(1)}_n=(L_1^n)_{\geq 1}-(L_1^n)_{\geq 1 }|_{\La=1}$ and $B^{(2)}_n=(L_2^m)_{<0}-(L_2^m)_{<0}|_{\La=1}$, we have
\begin{align*}
(A_{1,2})_{>0}=-\pa_{x_m^{(2)}}(L_1^n)_{>0}+\bigl[B_m^{(2)},B_n^{(1)}\bigr]_{>0}
=\bigl(-\pa_{x_m^{(2)}}(L_1^n)+\bigl[B_m^{(2)},L_1^n\bigr]\bigr)_{>0}=0.
\end{align*}
Similarly, we can prove $(A_{1,2})_{<0}$. Therefore, by $A_{1,2}(1)=0$,
\begin{align*}
(A_{1,2})_{[0]}=-((A_{1,2})_{>0}+(A_{1,2})_{<0})(1)=0,
\end{align*}
which means $A_{1,2}=0$.
\end{proof}

\begin{Example}
Let us give some explicit examples of nonlinear differential-difference equations of mToda hierarchy. Taking $m=n=1$, we have
\begin{align*}
B_1^{(1)}=u_{-1}(s,\mathbf{x})\La-u_{-1}(s,\mathbf{x}), \qquad B_1^{(2)}=\bar{u}_{-1}(s,\mathbf{x})\La^{-1}-\bar{u}_{-1}(s,\mathbf{x}),	
\end{align*}
then it follows from (\ref{2mkpzs1}) that
\begin{align}
&\pa_{x_1^{(1)}}\bar{u}_{-1}(s,\mathbf{x})+\bar{u}_{-1}(s,\mathbf{x})(u_{-1}(s,\mathbf{x})-u_{-1}(s-1,\mathbf{x}))=0, \label{ueq1} \\
&-\pa_{x_1^{(2)}}u_{-1}(s,\mathbf{x})+u_{-1}(s,\mathbf{x})(\bar{u}_{-1}(s+1,\mathbf{x})-\bar{u}_{-1}(s,\mathbf{x}))=0. \label{ueq2} 	
\end{align}
Equations (\ref{ueq1}) and (\ref{ueq2}) are just the mToda equation \cite{daihh, hirota2004} mentioned in the introduction.
\end{Example}

Conversely, if we start from the evolution equation of wave operators
\begin{alignat*}{3}
&\pa_{x_n^{(1)}}{S}_1=-\bigl( S_1\La^nS_1^{-1}\bigr)_{\Delta,\leq0}{S}_1,\qquad&&
\pa_{x_n^{(1)}}{S}_2=\bigl( S_1\La^nS_1^{-1}\bigr)_{\Delta,\geq1}{S}_2, &\\
&\pa_{x_n^{(2)}}{S}_1=\bigl( S_2\La^{-n}S_2^{-1}\bigr)_{\Delta^{*},\geq1}{S}_1,\qquad&&
\pa_{x_n^{(2)}}{S}_2=-\bigl( S_2\La^{-n}S_2^{-1}\bigr)_{\Delta^{*},\leq0}{S}_2,&\\
&\pa_{x_n^{(1)}}{W}_j=\bigl( S_1\La^nS_1^{-1}\bigr)_{\Delta,\geq1}{W}_j,\qquad&&
\pa_{x_n^{(2)}}{W}_j=\bigl( S_2\La^{-n}S_2^{-1}\bigr)_{\Delta^{*},\geq1}{W}_j,&
\end{alignat*}
then we have the following theorem. 	
\begin{Theorem}
Given wave operators satisfying above relations and set
\begin{alignat}{3}
&\Psi_1(s,\mathbf{x},z)=W_1(z^s),\qquad&&			
\Psi_1^*(s,\mathbf{x},z)=-\iota_{\La^{-1}}\Delta^{-1}(W^*_1)^{-1}(z^{-s}),&\nonumber\\
&\Psi_2(s,\mathbf{x},z)=W_2(z^s),\qquad &&
\Psi_2^*(s,\mathbf{x},z)=\iota_{\La}\Delta^{-1}(W^*_2)^{-1}(z^{-s}),&\label{mTodawave}
\end{alignat}
then $\Psi_j(s,\mathbf{x},z)$ and $\Psi_j^*(s,\mathbf{x},z)$, $j=1,2$, satisfy the bilinear identity
\begin{eqnarray}\label{Todabili}
\oint_{C_\infty}\frac{{\rm d}z}{2\pi {\rm i}z }\Psi_1(s,\mathbf{x},z)\Psi_1^*\bigl(s',\mathbf{x}',z\bigr)+\oint_{C_0}\frac{{\rm d}z}{2\pi {\rm i}z }\Psi_2(s,\mathbf{x},z)\Psi_1^*\bigl(s',\mathbf{x}',z\bigr)=1.
\end{eqnarray}
\end{Theorem}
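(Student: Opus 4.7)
My plan is to reverse the argument of Theorem~\ref{mTodasato}: rewrite \eqref{Todabili} as an operator identity in $\mathcal{A}\bigl[\bigl[\La,\La^{-1}\bigr]\bigr]$, verify it on the diagonal $\mathbf{x}'=\mathbf{x}$, and then propagate off-diagonal using the Sato flows. Introducing the auxiliary operators $\widetilde{W}_j$ so that $\Psi_j^{*}(s,\mathbf{x},z)=\widetilde{W}_j(z^{-s})$ matches \eqref{mTodawave}, Lemma~\ref{proplambda1} converts \eqref{Todabili} into the equivalent assertion
\begin{equation*}
\Omega(\mathbf{x},\mathbf{x}') := W_1(s,\mathbf{x},\La)\widetilde{W}_1^{*}(s,\mathbf{x}',\La) + W_2(s,\mathbf{x},\La)\widetilde{W}_2^{*}(s,\mathbf{x}',\La) = \sum_{i\in\mathbb{Z}}\La^i
\end{equation*}
as formal difference operators.

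At $\mathbf{x}'=\mathbf{x}$ the exponential factors inside $W_j$ and $\widetilde{W}_j^{*}$ cancel pairwise, so $\Omega|_{\mathbf{x}'=\mathbf{x}}$ collapses to $S_1\widetilde{S}_1^{*}+S_2\widetilde{S}_2^{*}$. I would verify directly from the defining formulas for $\widetilde{S}_j$ (equivalently, by taking adjoints of \eqref{S1S2}) that $S_1\widetilde{S}_1^{*}=-\iota_{\La^{-1}}\Delta^{*-1}$ and $S_2\widetilde{S}_2^{*}=\iota_{\La}\Delta^{*-1}$, reproducing \eqref{relawaves}; adding the two unfolds to $\sum_{i\le 0}\La^i+\sum_{i\ge 1}\La^i=\sum_{i\in\mathbb{Z}}\La^i$, supplying the diagonal initial condition $\Omega(\mathbf{x},\mathbf{x})=\sum_i\La^i$.

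For the off-diagonal extension I freeze $\mathbf{x}'$ and differentiate $\Omega$ in the $\mathbf{x}$-flows. The assumed Sato equations $\pa_{x_n^{(i)}}W_j=B_n^{(i)}W_j$ immediately give $\pa_{x_n^{(i)}}\Omega=B_n^{(i)}\Omega$. The crucial point is that $B_n^{(1)}$ lies in $\Delta\cdot\mathcal{A}\bigl(\bigl(\Delta^{-1}\bigr)\bigr)$ and $B_n^{(2)}$ in $\Delta^{*}\cdot\mathcal{A}\bigl(\bigl(\Delta^{*-1}\bigr)\bigr)$, so both annihilate the constant $1$, and a straightforward reindexing then gives $B_n^{(i)}\cdot\sum_k\La^k = B_n^{(i)}(1)\cdot\sum_k\La^k = 0$. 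Consequently the deviation $\Phi:=\Omega-\sum_k\La^k$ satisfies the homogeneous linear system $\pa_{x_n^{(i)}}\Phi=B_n^{(i)}\Phi$ with $\Phi(\mathbf{x}',\mathbf{x}')=0$ on the diagonal, and since the flows commute by Proposition~\ref{2mkpzs1}, uniqueness of solutions (as formal power series in $\mathbf{x}-\mathbf{x}'$) forces $\Phi\equiv 0$.

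The main difficulty will be the diagonal verification, where one must track the $\iota_{\La^{\pm 1}}$ expansion conventions carefully so that the signs and the interchanges between $\Delta^{-1}$ and $\Delta^{*-1}$ assemble correctly into $\sum_i\La^i$. Once that is in place, the flow-invariance half of the argument is essentially formal, its cleanness resting on the vanishing $B_n^{(i)}(1)=0$---the very property that distinguishes the mToda flow generators $(\cdot)_{\Delta,\ge 1}$ and $(\cdot)_{\Delta^{*},\ge 1}$ from their Toda analogues.
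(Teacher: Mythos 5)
Your proposal is correct and follows essentially the same route as the paper: both reduce \eqref{Todabili} via Lemma \ref{proplambda1} to the operator identity $\Omega(\mathbf{x},\mathbf{x}')=\sum_{i}\La^i$, verify it at $\mathbf{x}'=\mathbf{x}$ using \eqref{relawaves}, and propagate off the diagonal by showing every $\mathbf{x}$-derivative of the deviation vanishes there, the key input in both cases being $B_n^{(i)}\cdot\sum_k\La^k=B_n^{(i)}(1)\cdot\sum_k\La^k=0$. The only cosmetic difference is that the paper phrases the propagation as an induction on the Taylor coefficients $\pa^{\al}\pa^{\beta}W_j\cdot W_j^{-1}$ (using $\pa W_1\cdot W_1^{-1}=\pa W_2\cdot W_2^{-1}$), whereas you package the same computation as uniqueness for the linear system $\pa_{x_n^{(i)}}\Phi=B_n^{(i)}\Phi$ with vanishing diagonal data.
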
	
\begin{proof}
Firstly, it is obvious that
\begin{align*}
-W_1(s,\mathbf{x},\La)W^{-1}_1(s,\mathbf{x},\La)\iota_{\La^{-1}}\Delta^{*-1}+W_2(s,\mathbf{x},\La)W^{-1}_2(s,\mathbf{x},\La)\iota_{\La}\Delta^{*-1}=\sum_{i\in\mathbb{Z}}\La^i.
\end{align*}
It follows from evolution equations of $W_j$ that
\begin{gather*}
\pa_{x^{(1)}_n} W_1\cdot W_1^{-1}=\pa_{x^{(1)}_n}W_2\cdot W_2^{-1}, \qquad
\pa_{x^{(2)}_n}W_1\cdot W_1^{-1}=\pa_{x^{(2)}_n}W_2\cdot W_2^{-1}.
\end{gather*}
Therefore, by induction on the order of derivatives with $\mathbf{x}^{(1)}$ and $\mathbf{x}^{(2)}$, we can get for $\al=(\al_1,\al_2,\dots)\geq0$, $\beta=(\beta_1,\beta_2,\dots)\geq0$,
\begin{gather*}
-\pa_{\mathbf{x}^{(1)}}^\al \pa_{\mathbf{x}^{(2)}}^\beta W_1\cdot W_1^{-1}\iota_{\La^{-1}}\Delta^{*-1}+\pa_{\mathbf{x}^{(1)}}^\al \pa_{\mathbf{x}^{(2)}}^\beta W_2\cdot W_2^{-1}\iota_{\La}\Delta^{*-1}=
\begin{cases}
\sum\limits_{j\in \mathbb{Z}}\Lambda^j, & (\alpha,\beta)=(\mathbf{0},\mathbf{0}),\\
0, & (\alpha,\beta)\neq(\mathbf{0},\mathbf{0}),
\end{cases}
\end{gather*}
where
\[
\pa_{\mathbf{x}^{(1)}}^\al=\pa_{x^{(1)}_1}^{\al_1}\pa_{x^{(1)}_2}^{\al_2}\cdots ,	\qquad \text{and}\qquad \pa_{\mathbf{x}^{(2)}}^\beta=\pa_{x^{(2)}_1}^{\beta_1}\pa_{x^{(2)}_2}^{\beta_2}\cdots.
\]
By making use of Taylor expansion, finally it can be found that
\begin{align*}
-W_1(s,\mathbf{x},\La)W_1(s,\mathbf{x}',\La)^{-1}\iota_{\La^{-1}}\Delta^{*-1}+W_2(s,\mathbf{x},\La)W_2(s,\mathbf{x}',\La)^{-1}\iota_{\La}\Delta^{*-1}=\sum_{i\in\mathbb{Z}}\La^i,
\end{align*}
which implies (\ref{Todabili}) by applying Lemma \ref{proplambda1}.
\end{proof}
	
\section[Anti-Miura and Miura transformations between Toda and mToda hierarchies]{Anti-Miura and Miura transformations between Toda \\ and mToda hierarchies}\label{section4}

In this section, we first briefly introduce some basic facts about the Toda hierarchy and the automorphism of Toda and mToda hierarchies. Based on these, we will discuss the anti-Miura and Miura transformations between the Toda and mToda hierarchies. In this paper, we adopt the following convention, the Miura transformation refer to the transformation from mToda to Toda, whereas the anti-Miura transformation is the one from Toda to mToda.

\subsection{Basic facts about Toda hierarchy and its self-transformation}
Recall that Toda hierarchy \cite{Takasaki2018, Ueno1982} is defined by the following Lax equations:
\begin{align}
\pa_{x^{(1)}_n}\mathcal{L}_j=\bigl[(\mathcal{L}_1^n)_{\geq0},\mathcal{L}_j\bigr],
\qquad \pa_{x^{(2)}_n}{\mathcal{L}_j}=\bigl[({\mathcal{L}^n_2})_{<0},{\mathcal{L}_j}\bigr],\label{todalax}
\end{align}
where Toda Lax operators are given by
\[
\mathcal{L}_1(s,\mathbf{x},\La)=\La+\sum_{i=0}^{\infty}v_i(s,\mathbf{x})\La^{-i},\qquad \mathcal{L}_2(s,\mathbf{x},\La)=\bar{v}_{-1}(s,\mathbf{x})\La^{-1}+\sum_{i=0}^{\infty}\bar{v}_i(s,\mathbf{x})\La^{i}.
\]

Let $\mathcal{S}_1$ and $\mathcal{S}_1$ be the Toda wave operators of the form
\[
\mathcal{S}_1(s,\mathbf{x},\La)=1+\sum_{i=1}^{\infty}w_i(s,\mathbf{x})\Lambda^{-i},\qquad{\mathcal{S}_2}(s,\mathbf{x},\La)=\sum_{i=0}^{\infty}\bar{w}_i(s,\mathbf{x})\Lambda^{i},
\]
by which the Lax operators can be expressed as
\begin{align*}
\mathcal{L}_1=\mathcal{S}_1\La \mathcal{S}_1^{-1},\qquad\mathcal{L}_2=\mathcal{S}_2\La^{-1}\mathcal{S}_2^{-1}.
\end{align*}
Then the Lax equations (\ref{todalax}) are equivalent to the following Sato equations:
\begin{alignat}{3}
&\pa_{x^{(1)}_n}\mathcal{S}_1=-(\mathcal{L}^n_1)_{<0}\mathcal{S}_1,\qquad&&
\pa_{x^{(2)}_n}\mathcal{S}_1=(\mathcal{L}^n_2)_{<0}\mathcal{S}_1,& \nonumber\\
&\pa_{x^{(1)}_n}\mathcal{S}_2=(\mathcal{L}^n_1)_{\geq0}\mathcal{S}_2, \qquad&&
		\pa_{x^{(2)}_n}\mathcal{S}_2=-(\mathcal{L}^n_2)_{\geq0}\mathcal{S}_2.&\label{todasatos2}
\end{alignat}

The wave functions $\Phi_j$ and adjoint wave functions $\Phi^*_j$, $j=1,2$, of the Toda hierarchy are	defined as
\begin{alignat}{3}
&\Phi_1(s,\mathbf{x},z)=\mathcal{W}_1(s,\mathbf{x},\Lambda)(z^s),\qquad&&
\Phi_2(s,\mathbf{x},z)=\mathcal{W}_2(s,\mathbf{x},\Lambda)(z^{s}),&\notag \\
&\Phi^*_1(s,\mathbf{x},z)=\bigl(\mathcal{W}_1^{-1}(s,\mathbf{x},\Lambda)\bigr)^*(z^{-s}),\qquad&&
\Phi^*_2(s,\mathbf{x},z)=\bigl(\mathcal{W}_2^{-1}(s,\mathbf{x},\Lambda)\bigr)^*(z^{-s}),&\label{todawave}
\end{alignat}
where $\mathcal{W}_1(s,\mathbf{x},\Lambda)=
\mathcal{S}_1(s,\mathbf{x},\Lambda){\rm e}^{\xi(\mathbf{x}^{(1)},\Lambda)}$ and
$\mathcal{W}_2(s,\mathbf{x},\Lambda)=
\mathcal{S}_2(s,\mathbf{x},\Lambda){\rm e}^{\xi(\mathbf{x}^{(2)},\Lambda^{-1})}$.
It can be verified that wave functions $\Phi_j$ and adjoint wave functions $\Phi^*_j$, $j=1,2$, satisfy the following auxiliary linear equations:
\begin{alignat*}{5}
&\mathcal{L}_1(\Phi_1)=z\Phi_1,\quad&&
\mathcal{L}_2(\Phi_2)=z^{-1}\Phi_2,\quad&&
\pa_{x^{(1)}_n}\Phi_i=(\mathcal{L}^n_1)_{\geq0}(\Phi_i),\quad&&
\pa_{x^{(2)}_n}\Phi_i=(\mathcal{L}^n_2)_{<0}(\Phi_i),&\\
&\mathcal{L}^*_1(\Phi^*_1)=z\Phi^*_1,\quad \ &&
\mathcal{L}^*_2(\Phi^*_2)=z^{-1}\Phi^*_2,\quad \ &&
\pa_{x^{(1)}_n}\Phi^*_i=-(\mathcal{L}^n_1)_{\geq0}^*(\Phi^*_i),\quad \ &&
\pa_{x^{(2)}_n}\Phi^*_i=-(\mathcal{L}^n_2)_{<0}^*(\Phi^*_i),&
\end{alignat*}
and the bilinear equation
\begin{eqnarray}\label{todabiliwave}
\oint_{C_\infty}\frac{{\rm d}z}{2\pi {\rm i}z }\Phi_1(s,\mathbf{x},z)\Phi_1^*\bigl(s',\mathbf{x}',z\bigr)=\oint_{C_0}\frac{{\rm d}z}{2\pi {\rm i}z }\Phi_2(s,\mathbf{x},z)\Phi_2^*\bigl(s',\mathbf{x}',z\bigr).
\end{eqnarray}
The Toda wave functions and adjoint wave functions can be generalized to the eigenfunction~$q(s)$ and adjoint eigenfunction $r(s)$ of the Toda hierarchy defined by
\begin{alignat}{3}
&\pa_{x^{(1)}_n}q(s)=(\mathcal{L}_1^n)_{\geq0}(q(s)),\qquad&&
\pa_{x^{(2)}_n}q(s)=(\mathcal{L}_2^n)_{<0}(q(s)),&
\label{todaeigen} \\
&\pa_{x^{(1)}_n}r(s)=-((\mathcal{L}_1^n)_{\geq0})^*(r(s)),\qquad&&
\pa_{x^{(2)}_n}r(s)=-((\mathcal{L}_2^n)_{<0})^*(r(s)).&\nonumber 
\end{alignat}

There exists one tau function \cite{Ueno1982} \smash{$\tau^{{\rm Toda}}_s(\mathbf{x})$} such that wave functions $\Phi_j$ and adjoint wave functions $\Phi^*_j$, $j=1,2$, can be expressed in terms of tau functions $\tau^{{\rm Toda}}_s(\mathbf{x})$ as
\begin{align}
&\Phi_1(s,\mathbf{x},z)=\frac{\tau^{{\rm Toda}}_s
\bigl(\mathbf{x}-\bigl[z^{-1}\bigr]_1\bigr)}{\tau^{{\rm Toda}}_s(\mathbf{x})}{\rm e}^{\xi(\mathbf{x}^{(1)},z)}z^s,\nonumber\\
&{\Phi_2}(s,\mathbf{x},z)=\frac{\tau^{{\rm Toda}}_{s+1}(\mathbf{x}-[z]_2)}{\tau^{{\rm Toda}}_s(\mathbf{x})}
{\rm e}^{\xi(\mathbf{x}^{(2)},z^{-1})}z^{s},\label{TLtau}
\\
&{\Phi_1}^*(s,\mathbf{x},z)=\frac{\tau^{{\rm Toda}}_{s+1}\bigl(\mathbf{x}+\bigl[z^{-1}\bigr]_1\bigr)}{\tau^{{\rm Toda}}_{s+1}(\mathbf{x})}{\rm e}^{-\xi(\mathbf{x}^{(1)},z)}z^{-s},\nonumber\\
&{\Phi_2}^*(s,\mathbf{x},z)=\frac{\tau^{{\rm Toda}}_{s}(\mathbf{x}+[z]_2)}{\tau^{{\rm Toda}}_{s+1}(\mathbf{x})}{\rm e}^{-\xi(\mathbf{x}^{(2)},z^{-1})}z^{-s}.\label{todaawavetau}
\end{align}
Then the bilinear equation (\ref{todabiliwave}) can be expressed by Toda tau function
\begin{align*}
&\oint_{C_\infty}\frac{{\rm d}z}{2\pi {\rm i} }\tau^{{\rm Toda}}_s\bigl(\mathbf{x}-\bigl[z^{-1}\bigr]_1\bigr)\tau^{{\rm Toda}}_{s'}\bigl(\mathbf{x}'+\bigl[z^{-1}\bigr]_1\bigr)z^{s-s'}{\rm e}^{\xi(\mathbf{x}^{(1)}-\mathbf{x}^{(1)\prime},z)} \notag \\
&\qquad{}=\oint_{C_0}\frac{{\rm d}z}{2\pi {\rm i} }\tau^{{\rm Toda}}_{s+1}\bigl(\mathbf{x}-[z]_2\bigr)\tau^{{\rm Toda}}_{s'-1}\bigl(\mathbf{x}'+[z]_2\bigr)z^{s-s'}{\rm e}^{\xi(\mathbf{x}^{(2)}-\mathbf{x}^{(2)\prime},z^{-1})}.
\end{align*}

\begin{Proposition}\label{automorToda}
Assume $\mathcal{L}_{j}$ and $\mathcal{S}_{j}$, $j=1,2$, are Toda Lax operators and wave operators, respectively, and consider the following maps:
\smash{$\pi_0\colon \bigl(\mathcal{L}_i,\mathcal{S}_i,x^{(1)}_n,x^{(2)}_n\bigr)\mapsto
\bigl(\tilde{\mathcal{L}}_i,\tilde{\mathcal{S}}_i,\tilde{x}^{(1)}_n,\tilde{x}^{(2)}_n\bigr)$}
with
\begin{align*}
\tilde{\mathcal{L}}_i=\bar{w}_0\mathcal{L}^*_{3-i}\bar{w}^{-1}_0,\qquad \tilde{\mathcal{S}}_i=\bar{w}_0\bigl(\mathcal{S}^{-1}_{3-i}\bigr)^*,\qquad\tilde{x}^{(1)}_n=-x^{(2)}_n,\qquad\tilde{x}^{(2)}_n=-x^{(1)}_n,
\end{align*}
where $\bar{w}_0$ is the coefficient of $\La^0$ in wave function $\mathcal{S}_2$. Then $\pi_0$ is a self-transformation of the Toda hierarchy, that is,
\begin{alignat*}{3}
&\pa_{\tilde{x}_n^{(1)}}\tilde{\mathcal{L}}_j
=\bigl[\tilde{\mathcal{B}}^{(1)}_n,\tilde{\mathcal{L}}_j\bigr], \qquad&& \pa_{\tilde{x}_n^{(2)}}\tilde{\mathcal{L}}_j
=\bigl[\tilde{\mathcal{B}}^{(2)}_n,\tilde{\mathcal{L}}_j\bigr],&\\
&\pa_{\tilde{x}_n^{(1)}}{\tilde{\mathcal{S}}}_1=\tilde{\mathcal{B}}^{(1)}_n{\tilde{\mathcal{S}}}_1-{\tilde{\mathcal{S}}}_1\La^n,\qquad && \pa_{\tilde{x}_n^{(2)}}{\tilde{\mathcal{S}}}_1=\tilde{\mathcal{B}}^{(2)}_n{\tilde{\mathcal{S}}}_1, &\\		&\pa_{\tilde{x}_n^{(1)}}{\tilde{\mathcal{S}}}_2=\tilde{\mathcal{B}}^{(1)}_n{\tilde{\mathcal{S}}}_2,\qquad&&
\pa_{\tilde{x}_n^{(2)}}{\tilde{\mathcal{S}}}_2=\tilde{\mathcal{B}}^{(2)}_n{\tilde{\mathcal{S}}}_2-{\tilde{\mathcal{S}}}_2\La^{-n},&&
\end{alignat*}
where \smash{$\tilde{\mathcal{B}}^{(1)}_n=\bigl(\tilde{{\mathcal{L}}}^n_1\bigr)_{\geq0}$} and \smash{$\tilde{\mathcal{B}}^{(2)}_n=\bigl({\tilde{{\mathcal{L}}}^n_2}\bigr)_{<0}$}.
\end{Proposition}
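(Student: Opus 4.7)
The plan is to verify directly that the transformed pair $(\tilde{\mathcal{L}}_i,\tilde{\mathcal{S}}_i)$ satisfies the Toda Lax and Sato equations with respect to the relabelled times $\tilde{x}^{(1)}_n=-x^{(2)}_n$, $\tilde{x}^{(2)}_n=-x^{(1)}_n$. First I would check the compatibility $\tilde{\mathcal{L}}_1=\tilde{\mathcal{S}}_1\Lambda\tilde{\mathcal{S}}_1^{-1}$ and $\tilde{\mathcal{L}}_2=\tilde{\mathcal{S}}_2\Lambda^{-1}\tilde{\mathcal{S}}_2^{-1}$, which is immediate from $(AB)^{*}=B^{*}A^{*}$ and $\Lambda^{*}=\Lambda^{-1}$. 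Inspecting the leading terms via $\mathcal{S}_2=\bar{w}_0+O(\Lambda)$ then gives $\tilde{\mathcal{S}}_1=1+O(\Lambda^{-1})$, and $\tilde{\mathcal{L}}_2=\bar{w}_0\mathcal{L}_1^{*}\bar{w}_0^{-1}$ starts at $\Lambda^{-1}$, so the transformed operators lie in the correct Toda classes.

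Next, I would identify the two generators in terms of the original data. Using the projection identities $(A^{*})_{\ge0}=(A_{\le0})^{*}$ and $(A^{*})_{<0}=(A_{>0})^{*}$, together with the fact that conjugation by the order-zero operator $\bar{w}_0$ commutes with the $\ge0$ and $<0$ splittings, one arrives at
\begin{align*}
\tilde{\mathcal{B}}^{(1)}_n &= \bar{w}_0\bigl((\mathcal{L}_2^n)_{<0}\bigr)^{*}\bar{w}_0^{-1}+(\mathcal{L}_2^n)_{[0]}, \\
\tilde{\mathcal{B}}^{(2)}_n &= \bar{w}_0\bigl((\mathcal{L}_1^n)_{>0}\bigr)^{*}\bar{w}_0^{-1}.
\end{align*}
Reading off the coefficient of $\Lambda^0$ in the Toda Sato equations (\ref{todasatos2}) produces the two scalar identities $\partial_{x^{(2)}_n}\bar{w}_0\cdot\bar{w}_0^{-1}=-(\mathcal{L}_2^n)_{[0]}$ and $\partial_{x^{(1)}_n}\bar{w}_0\cdot\bar{w}_0^{-1}=(\mathcal{L}_1^n)_{[0]}$, and these are exactly what reconciles the extra $\bar{w}_0$-derivative contributions produced by conjugation with the required projections.

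For the Lax equations, I would differentiate $\tilde{\mathcal{L}}_1=\bar{w}_0\mathcal{L}_2^{*}\bar{w}_0^{-1}$ by Leibniz and invoke the adjoint Toda equation $\partial_{x^{(2)}_n}\mathcal{L}_2^{*}=-[((\mathcal{L}_2^n)_{<0})^{*},\mathcal{L}_2^{*}]$; reassembling the result as a commutator with $\tilde{\mathcal{L}}_1$, the $\partial_{x^{(2)}_n}\bar{w}_0\cdot\bar{w}_0^{-1}$ piece combines with the $(\mathcal{L}_2^n)_{[0]}$ term to produce precisely $[\tilde{\mathcal{B}}^{(1)}_n,\tilde{\mathcal{L}}_1]$. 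The remaining three Lax identities follow by the same pattern under the index swap $1\leftrightarrow 2$. For the Sato equations, I would differentiate $\tilde{\mathcal{S}}_1=\bar{w}_0(\mathcal{S}_2^{-1})^{*}$ and insert $\partial_{x^{(2)}_n}\mathcal{S}_2^{-1}=\Lambda^{-n}\mathcal{S}_2^{-1}-\mathcal{S}_2^{-1}(\mathcal{L}_2^n)_{<0}$, which itself comes from $-\partial_{x^{(2)}_n}\mathcal{S}_2\cdot\mathcal{S}_2^{-1}=(\mathcal{L}_2^n)_{\ge0}=\mathcal{L}_2^n-(\mathcal{L}_2^n)_{<0}$ and $\mathcal{L}_2^n=\mathcal{S}_2\Lambda^{-n}\mathcal{S}_2^{-1}$; the bare $\Lambda^{-n}$, after taking $*$, contributes the anticipated $-\tilde{\mathcal{S}}_1\Lambda^{n}$ term in the statement.

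The main obstacle is the bookkeeping of the $(\cdot)_{\ge0}$ and $(\cdot)_{<0}$ splittings under the formal adjoint, specifically the appearance of the $\Lambda^0$ term $(\mathcal{L}_i^n)_{[0]}$, which must be absorbed exactly by the $\bar{w}_0$-derivative scalars extracted from the Sato equations. Once this cancellation is made explicit, each of the four Lax and four Sato identities collapses to the same calculation with the appropriate permutation of indices and a sign coming from $\tilde{x}^{(i)}_n=-x^{(3-i)}_n$.
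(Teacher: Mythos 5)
Your proposal is correct and follows essentially the same route as the paper: both hinge on reading off the $\Lambda^0$ coefficient of the Sato equations to get $\partial_{x^{(1)}_n}\bar{w}_0\cdot\bar{w}_0^{-1}=(\mathcal{L}_1^n)_{[0]}$, $\partial_{x^{(2)}_n}\bar{w}_0\cdot\bar{w}_0^{-1}=-(\mathcal{L}_2^n)_{[0]}$, and on the identities $\tilde{\mathcal{B}}^{(1)}_n=(\mathcal{L}_2^n)_{[0]}+\bar{w}_0\bigl((\mathcal{L}_2^n)_{<0}\bigr)^*\bar{w}_0^{-1}$ and $\tilde{\mathcal{B}}^{(2)}_n=\bar{w}_0\bigl((\mathcal{L}_1^n)_{>0}\bigr)^*\bar{w}_0^{-1}$, after which the Lax and Sato equations follow by the Leibniz computation you describe. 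You simply make explicit the ``direct computation'' that the paper leaves to the reader (including the useful check that $\tilde{\mathcal{S}}_1=1+O(\Lambda^{-1})$ and that $\tilde{\mathcal{L}}_i$ lie in the correct classes).
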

\begin{proof}
Comparing coefficients of $\La^0$ on both sides	of \eqref{todasatos2}, we can get
\begin{align*}
\pa_{x^{(1)}_n}\bar{w}_0=(\mathcal{L}^n_1)_{[0]}\cdot\bar{w}_0, \qquad
\pa_{x^{(2)}_n}\bar{w}_0=-(\mathcal{L}^n_2)_{[0]}\cdot\bar{w}_0.
\end{align*}
Then by using $\bigl(A_{[0]}\bigr)^*=A_{[0]}$ and $(A_{<0})^*=(A^*)_{>0}$, we have the following relations:
\begin{gather*}
\tilde{\mathcal{B}}^{(1)}_n=\bigl(\bar{w}_0\mathcal{L}^{n*}_{2}\bar{w}^{-1}_0\bigr)_{\geq0}
=\bar{w}_0(\mathcal{L}^{n*}_{2})_{[0]}\bar{w}^{-1}_0
+\bigl(\bar{w}_0\mathcal{L}^{n*}_{2}\bar{w}^{-1}_0\bigr)_{>0}
=(\mathcal{L}^{n}_{2})_{[0]}
+\bar{w}_0(\mathcal{L}^{n*}_{2})_{>0}\bar{w}^{-1}_0\\
\hphantom{\tilde{\mathcal{B}}^{(1)}_n}{}
=-\pa_{x^{(1)}_n}\bar{w}_0\cdot \bar{w}_0^{-1} +\bar{w}_0\mathcal{B}^{(2)*}_n\bar{w}^{-1}_0,\\			\tilde{\mathcal{B}}^{(2)}_n=\bigl(\bar{w}_0\mathcal{L}^{n*}_{1}\bar{w}^{-1}_0\bigr)_{<0}
=\bar{w}_0(\mathcal{L}^{n*}_{1})_{<0}\bar{w}^{-1}_0
=\bar{w}_0(\mathcal{L}^{n}_{1})_{>0}^*\bar{w}^{-1}_0
=\bar{w}_0\mathcal{B}^{(1)*}_n\bar{w}^{-1}_0
-(\mathcal{L}^{n*}_{1})_{[0]}\\
\hphantom{\tilde{\mathcal{B}}^{(2)}_n}{}
=\bar{w}_0\mathcal{B}^{(1)*}_n\bar{w}^{-1}_0-\pa_{x^{(2)}_n}\bar{w}_0\cdot \bar{w}_0^{-1}.
\end{gather*}
Finally, with these two relations, we can prove this proposition by direct computation.
\end{proof}

\begin{Remark}
Note that $\pi_0^2=1$, thus we have $\pi_0^{-1}=\pi_0$.
\end{Remark}
	
\begin{Corollary}\label{Todaeigen}
Suppose $q$ and $r$ are Toda eigenfunction and adjoint eigenfunction with respect to $\mathcal{L}_{1}$, respectively, and $\tilde{\mathcal{L}_j}$ and $\tilde{x}^{(j)}_n$, $j=1,2$, are defined in Proposition~$\ref{automorToda}$, then we have
\begin{alignat*}{3}
&\pa_{\tilde{x}^{(1)}_n}(\bar{w}_0r(s))=\bigl(\tilde{\mathcal{L}}_1^n\bigr)_{\geq0}(\bar{w}_0r(s)),\qquad&&
\pa_{\tilde{x}^{(2)}_n}(\bar{w}_0r(s))=\bigl(\tilde{\mathcal{L}}_2^n\bigr)_{<0}(\bar{w}_0r(s)), \\
&\pa_{\tilde{x}^{(1)}_n}\bigl(\bar{w}_0^{-1}q(s)\bigr)=-\bigl(\bigl(\tilde{\mathcal{L}}_1^n\bigr)_{\geq0}\bigr)^*\bigl(\bar{w}_0^{-1}q(s)\bigr),\qquad&&
\pa_{\tilde{x}^{(2)}_n}\bigl(\bar{w}_0^{-1}q(s)\bigr)=-\bigl(\bigl(\tilde{\mathcal{L}}_2^n\bigr)_{<0}\bigr)^*\bigl(\bar{w}_0^{-1}q(s)\bigr),&
\end{alignat*}
which means that $\bar{w}_0r$ and $\bar{w}_0^{-1}q$ can be seen as the Toda eigenfunction and adjoint eigenfunction with respect to $\tilde{\mathcal{L}}_i$.
	\end{Corollary}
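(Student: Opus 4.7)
The plan is to verify all four relations by direct computation, leaning on the ingredients already assembled in the proof of Proposition \ref{automorToda}. The key data are: (a)~the scalar evolution formulas $\pa_{x^{(1)}_n}\bar w_0=(\mathcal{L}^n_1)_{[0]}\bar w_0$ and $\pa_{x^{(2)}_n}\bar w_0=-(\mathcal{L}^n_2)_{[0]}\bar w_0$, obtained by reading off the $\La^0$ coefficient of the Sato equations for $\mathcal S_2$; (b)~the operator identities $\tilde{\mathcal{B}}^{(1)}_n=(\mathcal{L}^n_2)_{[0]}+\bar w_0\mathcal{B}^{(2)*}_n\bar w_0^{-1}$ and $\tilde{\mathcal{B}}^{(2)}_n=\bar w_0\mathcal{B}^{(1)*}_n\bar w_0^{-1}-(\mathcal{L}^n_1)_{[0]}$ derived there; and (c)~the chain-rule translation $\pa_{\tilde{x}^{(1)}_n}=-\pa_{x^{(2)}_n}$, $\pa_{\tilde{x}^{(2)}_n}=-\pa_{x^{(1)}_n}$.

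For the first pair of identities, concerning $\bar w_0 r$, I would apply the Leibniz rule, for instance
\[
\pa_{\tilde x^{(1)}_n}(\bar w_0 r)=-\pa_{x^{(2)}_n}\bar w_0\cdot r-\bar w_0\,\pa_{x^{(2)}_n}r,
\]
and then substitute (a) together with the defining relation $\pa_{x^{(2)}_n}r=-\mathcal{B}^{(2)*}_n(r)$ to obtain $(\mathcal{L}^n_2)_{[0]}\bar w_0\,r+\bar w_0\mathcal{B}^{(2)*}_n(r)$. This expression is manifestly $\tilde{\mathcal{B}}^{(1)}_n(\bar w_0 r)$ by (b). The equation with $\pa_{\tilde x^{(2)}_n}$ is handled in exactly the same fashion, using $\pa_{x^{(1)}_n}r=-\mathcal{B}^{(1)*}_n(r)$ and the second expression in (b).

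For the adjoint eigenfunction assertions concerning $\bar w_0^{-1}q$, I would run the same recipe after taking formal adjoints of the expressions in (b). Since $\bar w_0$ is a multiplication operator (so $\bar w_0^*=\bar w_0$) and $A_{[0]}$ is self-adjoint, one finds $\bigl(\tilde{\mathcal{B}}^{(1)}_n\bigr)^*=(\mathcal{L}^n_2)_{[0]}+\bar w_0^{-1}\mathcal{B}^{(2)}_n\bar w_0$ and $\bigl(\tilde{\mathcal{B}}^{(2)}_n\bigr)^*=\bar w_0^{-1}\mathcal{B}^{(1)}_n\bar w_0-(\mathcal{L}^n_1)_{[0]}$. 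Expanding $\pa_{\tilde x^{(i)}_n}(\bar w_0^{-1}q)$ by Leibniz, substituting (a) and the eigenfunction equations $\pa_{x^{(1)}_n}q=\mathcal{B}^{(1)}_n(q)$, $\pa_{x^{(2)}_n}q=\mathcal{B}^{(2)}_n(q)$, one matches the result term by term with $-\bigl(\tilde{\mathcal{B}}^{(i)}_n\bigr)^*\bigl(\bar w_0^{-1}q\bigr)$.

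I do not anticipate any substantive obstacle; the argument is essentially a bookkeeping exercise on top of work already done for the self-transformation $\pi_0$. The only point requiring attention is to remember that $\mathcal{B}^{(j)*}_n$ denotes the adjoint of the truncated operator $(\mathcal L^n_j)_{\geq 0}$ or $(\mathcal L^n_j)_{<0}$, so that the $\La^0$ part of $\mathcal L^n_1$ must be subtracted from $\bar w_0\mathcal{B}^{(1)*}_n\bar w_0^{-1}$ when forming $\tilde{\mathcal B}^{(2)}_n$, and symmetrically the $\La^0$ part of $\mathcal L^n_2$ must be added to $\bar w_0\mathcal{B}^{(2)*}_n\bar w_0^{-1}$ for $\tilde{\mathcal B}^{(1)}_n$; this is precisely what makes the two free terms $(\mathcal L^n_j)_{[0]}$ cancel against $\pa_{\tilde x^{(\cdot)}_n}\bar w_0$ when the Leibniz expansions are regrouped.
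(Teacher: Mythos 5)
Your argument is correct and is exactly the verification the paper omits: Corollary \ref{Todaeigen} is stated without proof, and the intended justification is precisely your combination of the Leibniz rule, the $\La^0$-coefficient evolution of $\bar w_0$, and the two conjugation identities for $\tilde{\mathcal B}^{(i)}_n$ established in the proof of Proposition \ref{automorToda}. All four identities check out term by term; note only that your form of the identities in (b), with $(\mathcal L^n_2)_{[0]}$ and $(\mathcal L^n_1)_{[0]}$ as the zero-order terms, silently corrects a small index slip in the paper's displayed formula, which writes $-\pa_{x^{(1)}_n}\bar w_0\cdot\bar w_0^{-1}$ where $-\pa_{x^{(2)}_n}\bar w_0\cdot\bar w_0^{-1}$ is meant (and vice versa).
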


	\subsection{Basic facts about the self-transformation of mToda hierarchies}
	
	The following propositions give us a self-transformation of mToda hierarchy.
	\begin{Proposition}\label{automormToda}
		Assume $L_{j}$ and $S_{j}$, $j=1,2$, are the Lax operators and the dressing operators of mToda, respectively. Consider the following the map
\smash{$\pi_1\colon \bigl({L}_i,{S}_i,x_n^{(1)},x_n^{(2)}\bigr)\!\rightarrow\!
\bigl(\tilde{{L}}_i,\tilde{{S}}_i,\tilde{x}_n^{(1)},\tilde{x}_n^{(2)}\bigr)$} with
\begin{alignat*}{3}
&\tilde{{L}}_1=\iota_{\Lambda^{-1}}\Delta^{-1}{L}^*_{2}\Delta,\qquad&&
\tilde{{L}}_2=\iota_{\Lambda}\Delta^{-1}{L}^*_{1}\Delta,\\
&\tilde{{S}}_1=\iota_{\Lambda^{-1}}\Delta^{-1}\bigl({S}^{-1}_{2}\bigr)^*\La, \qquad&&
\tilde{{S}}_2=-\iota_{\Lambda}\Delta^{-1}\bigl({S}^{-1}_{1}\bigr)^*,\\
&\tilde{x}_n^{(1)}=-x_n^{(2)},\qquad&& \tilde{x}_n^{(2)}=-x_n^{(1)},&
\end{alignat*}
then the map $\pi_1$ is self-transformation of the mToda hierarchy, that is,
\begin{alignat*}{3}
&\pa_{\tilde{x}_n^{(1)}}\tilde{L}_j=\bigl[\tilde{B}^{(1)}_n,\tilde{L}_j\bigr], \qquad&& \pa_{\tilde{x}_n^{(2)}}\tilde{L}_j=\bigl[\tilde{B}^{(2)}_n,\tilde{L}_j\bigr],&\\			
&\pa_{\tilde{x}_n^{(1)}}{\tilde{S}}_1=\tilde{B}^{(1)}_n{\tilde{S}}_1-{\tilde{S}}_1\La^n,\qquad&& \pa_{\tilde{x}_n^{(2)}}{\tilde{S}}_1=\tilde{B}^{(2)}_n{\tilde{S}}_1,& \\ &\pa_{\tilde{x}_n^{(1)}}{\tilde{S}}_2=\tilde{B}^{(1)}_n{\tilde{S}}_2,\qquad&& \pa_{\tilde{x}_n^{(2)}}{\tilde{S}}_2=\tilde{B}^{(2)}_n{\tilde{S}}_2-{\tilde{S}}_2\La^{-n},&
\end{alignat*}
where \smash{$\tilde{B}^{(1)}_n=\bigl(\tilde{{L}}^n_1\bigr)_{\Delta,\geq1}$} and \smash{$\tilde{B}^{(2)}_n=\bigl({\tilde{{L}}^n_2}\bigr)_{\Delta^*,\geq1}$}. For symbols $A_{\Delta,\geq 1}$ and $A_{\Delta^*,\geq 1}$, they can be found in the paragraph above Lemma $\ref{fdoprop1}$.
\end{Proposition}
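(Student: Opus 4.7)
The plan is to reduce each assertion about $\bigl(\tilde L_j,\tilde S_j\bigr)$ under the new times $\tilde x_n^{(i)}$ to an already-established mToda assertion about $(L_{3-j},S_{3-j})$ under $x_n^{(3-i)}$. Four ingredients will be used repeatedly: (i) the chain rule $\pa_{\tilde x_n^{(i)}}=-\pa_{x_n^{(3-i)}}$ built into the definition of $\pi_1$; (ii) the two-sided cancellation $\Delta\cdot\iota_{\La^{\pm1}}\Delta^{-1}=\iota_{\La^{\pm1}}\Delta^{-1}\cdot\Delta=1$, which makes adjacent dressings of the form $\iota_{\La^{\pm1}}\Delta^{-1}(\cdot)^*\Delta$ multiply like a homomorphism; (iii) the adjoint rules $(\La^{-1})^*=\La$ and $(AB)^*=B^*A^*$; and (iv) Lemma~\ref{fdoprop2}.

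First I verify the dressings $\tilde L_1=\tilde S_1\La\tilde S_1^{-1}$ and $\tilde L_2=\tilde S_2\La^{-1}\tilde S_2^{-1}$. Noting that $\tilde S_1^{-1}=\La^{-1}S_2^*\Delta$ (since $\tilde S_1\cdot\La^{-1}S_2^*\Delta=\iota_{\La^{-1}}\Delta^{-1}\cdot\Delta=1$ by (ii)) and using $L_2^*=(S_2^{-1})^*\La S_2^*$ from (iii), the relation is immediate. Raising it to the $n$-th power yields
\begin{gather*}
\tilde L_1^n=\iota_{\La^{-1}}\Delta^{-1}(L_2^n)^*\Delta,\qquad \tilde L_2^n=\iota_{\La}\Delta^{-1}(L_1^n)^*\Delta,
\end{gather*}
and the third and second formulas of Lemma~\ref{fdoprop2} with $k=1$, applied to $A=L_{3-i}^n$, then give the crucial identifications
\begin{gather*}
\tilde B_n^{(1)}=\iota_{\La^{-1}}\Delta^{-1}(B_n^{(2)})^*\Delta,\qquad \tilde B_n^{(2)}=\iota_{\La}\Delta^{-1}(B_n^{(1)})^*\Delta.
\end{gather*}

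For the Lax equations, I compute for instance $\pa_{\tilde x_n^{(1)}}\tilde L_1=-\iota_{\La^{-1}}\Delta^{-1}\bigl(\pa_{x_n^{(2)}}L_2\bigr)^*\Delta=-\iota_{\La^{-1}}\Delta^{-1}[B_n^{(2)},L_2]^*\Delta$; ingredient (iii) turns the adjoint commutator into $[L_2^*,(B_n^{(2)})^*]$, and then the homomorphism property (ii) immediately rewrites the result as $[\tilde B_n^{(1)},\tilde L_1]$. The three remaining Lax equations are handled identically after swapping the appropriate roles. For the Sato equations, I rewrite the known evolution $\pa_{x_n^{(2)}}S_2=-(L_2^n)_{\Delta^*,\le 0}S_2=B_n^{(2)}S_2-S_2\La^{-n}$, differentiate $(S_2^{-1})^*$ accordingly, and multiply on the left by $-\iota_{\La^{-1}}\Delta^{-1}(\cdot)$ and on the right by $\La$. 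The $B_n^{(2)}$-piece reassembles via (ii) into $\tilde B_n^{(1)}\tilde S_1$, while the $\La^{-n}$-piece produces $\tilde S_1\La^n$ directly, giving $\pa_{\tilde x_n^{(1)}}\tilde S_1=\tilde B_n^{(1)}\tilde S_1-\tilde S_1\La^n$. The remaining Sato equations follow in parallel fashion.

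The main obstacle is bookkeeping: one must track the positions of $\La^{\pm n}$ carefully after the adjoints, and choose the correct $\iota_{\La}$- versus $\iota_{\La^{-1}}$-expansion in each invocation of Lemma~\ref{fdoprop2} so that the outer dressings in $\tilde L_i$ and $\tilde B_n^{(i)}$ match consistently. Lemma~\ref{fdoprop2} is pleasantly designed for this purpose: its second and third formulas explicitly state that both $\iota$-expansions yield the same result after the relevant projection, which is precisely what makes the identifications of $\tilde B_n^{(i)}$ unambiguous and the collapses in the Lax and Sato computations go through.
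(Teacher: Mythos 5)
Your proposal is correct and follows essentially the same route as the paper: the key step in both is to use the second and third identities of Lemma~\ref{fdoprop2} (with $k=1$) to identify $\tilde{B}^{(1)}_n=\iota_{\Lambda^{\pm1}}\Delta^{-1}B^{(2)*}_n\Delta$ and $\tilde{B}^{(2)}_n=\iota_{\Lambda^{\pm1}}\Delta^{-1}B^{(1)*}_n\Delta$, after which the Lax and Sato equations follow by the direct computation you spell out. The paper compresses that final computation into one sentence; your version merely makes the adjoint/cancellation bookkeeping explicit.
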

\begin{proof}
Firstly, by using Lemma \ref{fdoprop2}, we can get the following relations:
\begin{align*}
&\tilde{B}^{(1)}_n=\bigl(\iota_{\Lambda^{-1}}\Delta^{-1}(L_2^n)^*\Delta\bigr)_{\Delta,\ge1}=\iota_{\Lambda^{\pm1}}
\Delta^{-1}((L_2^n)_{\Delta^{*},\ge1})^*\Delta
=\iota_{\Lambda^{\pm1}}\Delta^{-1}B^{(2)*}_n\Delta,\\
&\tilde{B}^{(2)}_n=\bigl(\iota_{\Lambda}\Delta^{-1}
(L_1^n)^*\Delta\bigr)_{\Delta^*,\ge1}=
\iota_{\Lambda^{\pm1}}\Delta^{-1}((L_1^n)_{\Delta,\ge1})^*\Delta
=\iota_{\Lambda^{\pm1}}\Delta^{-1}B^{(1)*}_n\Delta.
\end{align*}
Then by direct computation, we can at last prove this proposition.
\end{proof}

\begin{Remark}\label{remark4.5}
Firstly, recall that $L_1=u_{-1}\La+\sum_{i=0}^{\infty}u_i\La^{-i}$, $ L_2=\bar{u}_{-1}\La^{-1}+\sum_{i=0}^{\infty}\bar{u}_i\La^{i}$. Then we can find that $\tilde{{L}}_1$ and $\tilde{{L}}_2$ can be expressed by coefficients of $L_1$ and $L_2$ in the following way:
\begin{align*}
&\tilde{{L}}_1=
\bar{u}_{-1}(s,\mathbf{x})\Lambda+\sum_{l=0}^\infty \Biggl(\bar{u}_l(s-l-1,\mathbf{x})+\sum_{j=-1}^{l-1}(\bar{u}_j(s-l-1,\mathbf{x})
-\bar{u}_j(s-l,\mathbf{x}))\Biggr)\Lambda^{-l},\\
&\tilde{{L}}_2=
{u}_{-1}(s-1,\mathbf{x})\Lambda^{-1}+\sum_{l=0}^\infty \Biggl({u}_l(s+l,\mathbf{x})+\sum_{j=-1}^{l-1}({u}_j(s+l,\mathbf{x})
-{u}_j(s+l-1,\mathbf{x}))\Biggr)\Lambda^{l}.
\end{align*}
Next, notice that for arbitrary $l\in\mathbb{Z}$,
\[
\operatorname{Ad}\La^l(L_i(s))=\La^l \cdot L_i(s)\cdot\La^{-l}=L_i(s+l),\qquad\operatorname{Ad}\La^l(S_i(s))=\La^l\cdot S_i(s)\cdot\La^{-l}=S_i(s+l),
\]
thus we can find $\operatorname{Ad}\La^l$ is also mToda self-transformation. Moreover, we can find that $\operatorname{Ad}\La^l$ and~$\pi_1$ are commutative and $\operatorname{Ad} \La\circ\pi^2_1=1$, then we get
\[
\pi^{-1}_1=\operatorname{Ad}\La\circ\pi_1.
\]
\end{Remark}

\subsection{Miura transformations from mToda to Toda}
There exist two kinds of Miura transformation from mToda to Toda, where the first kind is given by the following proposition.
\begin{Proposition}\label{mura1}
Given mToda Lax operators $L_i$ and wave operators $S_i$, if denote ${T}_1=c_0^{-1}(s)$ with $c_0(s)$ being the coefficient of $\La^0$ in $S_1$,
and set
\begin{align*}	
\mathcal{L}_i=T_1L_iT_1^{-1},\qquad\mathcal{S}_i={T}_1{S}_i, \qquad i=1,2,	
\end{align*}
then $\mathcal{L}_i$ are Toda Lax operators, $\mathcal{S}_i$ are
Toda wave operators, and $c_0^{-1}(s)$ is the corresponding Toda eigenfunction.
\end{Proposition}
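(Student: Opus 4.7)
The plan is to verify three claims: (a) the operators $\mathcal{L}_i$ and $\mathcal{S}_i$ have the shape required for Toda Lax and wave operators, (b) they obey the Toda Sato equations, and (c) $c_0^{-1}$ satisfies the Toda eigenfunction equations. Claim~(a) is immediate from the form of the $S_i$: writing $S_1 = c_0 + c_1\La^{-1} + \cdots$ gives $\mathcal{S}_1 = c_0^{-1}S_1 = 1 + (c_0^{-1}c_1)\La^{-1} + \cdots$ of the required Toda shape, and similarly $\mathcal{S}_2 = \sum_{i\ge 0}(c_0^{-1}\bar{c}_i)\La^i$; since $L_i = S_i\La^{\pm 1}S_i^{-1}$, the definition $\mathcal{L}_i = T_1L_iT_1^{-1}$ becomes $\mathcal{L}_i = \mathcal{S}_i\La^{\pm 1}\mathcal{S}_i^{-1}$, which forces the leading $\La$ of $\mathcal{L}_1$ and the leading $\La^{-1}$-term of $\mathcal{L}_2$.

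The key computation for claim~(b) is to extract the $\La^0$-coefficient of the mToda Sato equations for $S_1$: since $(L_1^n)_{\Delta,\le 0}$ and $(L_2^n)_{\Delta^{*},\ge 1}$ only contain $\La^{\le 0}$-powers and $S_1$ starts at $\La^0$, the contribution comes only from the $\La^0$-coefficient of the operator times $c_0$. Using Lemma~\ref{fdoprop1} (and its consequences $A_{<0} = A_{\Delta,\le -1}$ and $A_{>0} = A_{\Delta^{*},\le -1}$) to rewrite the result, I obtain
\[
\pa_{x_n^{(1)}}c_0 = -(L_1^n)_{\ge 0}(1)\,c_0,\qquad \pa_{x_n^{(2)}}c_0 = -(L_2^n)_{<0}(1)\,c_0.
\]
Next, I observe that conjugation by a pure function preserves each $\La^m$-component, which gives the truncation identities $(\mathcal{L}_1^n)_{\ge 0} = c_0^{-1}(L_1^n)_{\ge 0}c_0$ and $(\mathcal{L}_2^n)_{< 0} = c_0^{-1}(L_2^n)_{<0}c_0$, and hence also $(\mathcal{L}_2^n)_{\ge 0} = c_0^{-1}(L_2^n)_{\ge 0}c_0$.

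With these tools in hand, each of the four Toda Sato equations for $\mathcal{S}_1,\mathcal{S}_2$ follows by differentiating $\mathcal{S}_i = c_0^{-1}S_i$, substituting the mToda Sato equations of Theorem~\ref{mTodasato} and the above formulas for $\pa_{x_n^{(i)}}c_0$, and applying Lemma~\ref{fdoprop1} to combine the $\Delta$- or $\Delta^{*}$-truncation with the function $(L_j^n)_{[0]}(1)$ or $(L_j^n)_{<0}(1)$ into the corresponding $\La$-truncation. Claim~(c) then reduces to a one-line calculation: $\pa_{x_n^{(1)}}c_0^{-1} = c_0^{-1}(L_1^n)_{\ge 0}(1) = c_0^{-1}(L_1^n)_{\ge 0}(c_0\cdot c_0^{-1}) = (\mathcal{L}_1^n)_{\ge 0}(c_0^{-1})$, and analogously for the $x_n^{(2)}$-flow. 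The principal obstacle is the bookkeeping between the Toda $(\cdot)_{\ge 0}/(\cdot)_{<0}$ and the mToda $(\cdot)_{\Delta,\ge 1}/(\cdot)_{\Delta^{*},\ge 1}$ truncations: they agree on $\La^{\neq 0}$-parts but differ on the $\La^0$-contribution, and the conjugation by $T_1 = c_0^{-1}$ together with the $\La^0$-coefficient of the Sato equation is precisely what reconciles this mismatch.
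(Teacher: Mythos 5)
Your proposal is correct and follows essentially the same route as the paper: both arguments hinge on (i) extracting the $\La^0$-coefficient of the mToda Sato equations for $S_1$ to obtain the evolution of $c_0$ (equivalently, the Toda eigenfunction property of $c_0^{-1}$), (ii) the fact that conjugation by a function commutes with the $\La$-grading truncations, and (iii) Lemma~\ref{fdoprop1}-type identities converting the $\Delta$- and $\Delta^*$-truncations into $\La$-truncations plus a function, so that the mToda generators $B_n^{(i)}$ and the Toda generators $\mathcal{B}_n^{(i)}$ differ exactly by $c_0\cdot\pa_{x_n^{(i)}}\bigl(c_0^{-1}\bigr)$. The only difference is presentational (the paper first writes the operator identity $\mathcal{B}^{(i)}_n=c_0^{-1}B^{(i)}_nc_0-c_0^{-1}\pa_{x_n^{(i)}}(c_0)$ and then reads off the Sato equations, while you differentiate $\mathcal{S}_i=c_0^{-1}S_i$ directly), so no further comparison is needed.
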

\begin{proof}
Firstly, for convenience denote \smash{$B^{(1)}_n=\bigl(S_1\La^n S_1^{-1}\bigr)_{\Delta,\ge1}$}, \smash{$B^{(2)}_n=\bigl(S_2\La^{-n} S_2^{-1}\bigr)_{\Delta^*,\ge1}$} and
\smash{$\mathcal{B}^{(1)}_n=\bigl(\mathcal{S}_1\La^n \mathcal{S}_1^{-1}\bigr)_{\geq0}$},
\smash{$\mathcal{B}^{(2)}_n=\bigl(\mathcal{S}_2\La^{-n} \mathcal{S}_2^{-1}\bigr)_{<0}$}. Then by
\[
\bigl(f^{-1}\cdot A\cdot f\bigr)_{P}=f^{-1}\cdot A_P\cdot f \qquad \text{for} \quad P\in\{\geq k,\leq k,>k,<k,[k]\},
\]
we can find that
\begin{align*}
&B^{(1)}_n=\bigl(S_1\La^n S_1^{-1}\bigr)_{\ge0}-\bigl(S_1\La^n S_1^{-1}\bigr)_{\ge0}(1)
=c_0\cdot \mathcal{B}^{(1)}_n\cdot c_0^{-1}-c_0\cdot\mathcal{ B}^{(1)}_n\bigl(c_0^{-1}\bigr),\\
&B^{(2)}_n=\bigl(S_1\La^n S_1^{-1}\bigr)_{<0}-\bigl(S_1\La^n S_1^{-1}\bigr)_{<0}(1)
=c_0\cdot \mathcal{B}^{(2)}_n\cdot c_0^{-1}-c_0\cdot\mathcal{ B}^{(2)}_n\bigl(c_0^{-1}\bigr).
\end{align*}
Next, by comparing the coefficients of $\La^0$ of \eqref{mTodaSx1}, we can get
\begin{align*}
\pa_{x_n^{(1)}}c_0^{-1}=\mathcal{B}^{(1)}_n\bigl(c_0^{-1}\bigr),\qquad \pa_{x_n^{(2)}}c_0^{-1}=\mathcal{B}^{(2)}_n\bigl(c_0^{-1}\bigr),
\end{align*}
which imply $c_0^{-1}$ is the Toda eigenfunction corresponding to $\mathcal{S}_i$. Therefore, we can find
\begin{align*}
B^{(i)}_n=c_0\cdot \mathcal{B}^{(i)}_n\cdot c_0^{-1}-c_0\cdot\pa_{x_n^{(i)}}\bigl(c_0^{-1}\bigr),
\end{align*}
that is,
\begin{align*}
\mathcal{B}^{(i)}_n=c_0^{-1}\cdot {B}^{(i)}_n\cdot c_0-c_0^{-1}\pa_{x_n^{(i)}}(c_0),
\end{align*}
which implies this proposition.
\end{proof}

\begin{Corollary}\label{psiphi}
Assume $\Psi_i$ and $\Psi_i^*$, $i=1,2$, are mToda wave functions and adjoint wave functions defined by \eqref{mTodawave}, respectively, and let
\begin{alignat}{3}	
&\Phi_1(s,\mathbf{x},z)=c_0^{-1}(s)\Psi_1(s,\mathbf{x},z),\qquad && \Phi_2(s,\mathbf{x},z)=c_0^{-1}(s)\Psi_2(s,\mathbf{x},z), && \label{Tmuwavef} \\
&\Phi_1^*(s,\mathbf{x},z)=-c_0(s)\Delta(\Psi_1^*(s,\mathbf{x},z)),\qquad&&
\Phi_2^*(s,\mathbf{x},z)=c_0(s)\Delta(\Psi_2^*(s,\mathbf{x},z)).&\nonumber
\end{alignat}
Then $\Phi_i$ and $\Phi_i^*$, $i=1,2$, are Toda the wave functions and adjoint wave functions of Toda hierarchy.
\end{Corollary}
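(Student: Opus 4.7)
The plan is to leverage Proposition \ref{mura1} directly. Since $\mathcal{S}_i = T_1 S_i$ with $T_1 = c_0^{-1}(s)$, and since the exponential dressing factors appearing in $\mathcal{W}_i$ and $W_i$ are identical, one gets the operator identity $\mathcal{W}_i = T_1 W_i = c_0^{-1} W_i$. Applying both sides to $z^s$ and recalling $\Psi_i(s,\mathbf{x},z) = W_i(z^s)$ together with the Toda definition $\Phi_i(s,\mathbf{x},z) = \mathcal{W}_i(z^s)$ from \eqref{todawave} immediately yields $\Phi_i = c_0^{-1} \Psi_i$, which is the first line of \eqref{Tmuwavef}. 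This part requires no real computation.

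For the adjoint wave functions I would first rewrite, using that the adjoint of multiplication by a function is multiplication by the same function,
\begin{align*}
\Phi_i^*(s,\mathbf{x},z) = \bigl(\mathcal{W}_i^{-1}\bigr)^{*}(z^{-s}) = \bigl(W_i^{-1} T_1^{-1}\bigr)^{*}(z^{-s}) = \bigl(T_1^{-1}\bigr)^{*} \bigl(W_i^{-1}\bigr)^{*}(z^{-s}) = c_0(s) \cdot \bigl(W_i^{-1}\bigr)^{*}(z^{-s}).
\end{align*}
The problem is thus reduced to expressing $(W_i^{-1})^{*}(z^{-s}) = (W_i^{*})^{-1}(z^{-s})$ in terms of the mToda adjoint wave function $\Psi_i^*$.

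For this I invoke the relations \eqref{W1W2} established in Theorem \ref{mTodasato}. From $\widetilde{W}_1 = -\iota_{\La^{-1}}\Delta^{-1}(W_1^{*})^{-1}$ and the definition $\Psi_1^*(s,\mathbf{x},z) = \widetilde{W}_1(z^{-s})$, one has $\Psi_1^* = -\iota_{\La^{-1}}\Delta^{-1}(W_1^{*})^{-1}(z^{-s})$. Applying $\Delta$ on the left and using $\Delta\cdot\iota_{\La^{-1}}\Delta^{-1} = 1$, which follows directly from $\iota_{\La^{-1}}\Delta^{-1} = \sum_{i\geq 1}\La^{-i}$, produces $\Delta(\Psi_1^*) = -(W_1^{*})^{-1}(z^{-s})$. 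Substituting into the previous display gives $\Phi_1^* = -c_0\,\Delta(\Psi_1^*)$, as required. The case $j=2$ is entirely analogous: one uses $\widetilde{W}_2 = \iota_{\La}\Delta^{-1}(W_2^{*})^{-1}$ together with $\Delta\cdot\iota_{\La}\Delta^{-1} = 1$ (coming from $\iota_{\La}\Delta^{-1} = -\sum_{i\geq 0}\La^{i}$), and the opposite sign yields $\Phi_2^* = c_0\,\Delta(\Psi_2^*)$.

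I do not anticipate a genuine obstacle here; the corollary is essentially a bookkeeping consequence of Proposition \ref{mura1} paired with the explicit formulas \eqref{W1W2}. The only point worth checking carefully is that the formal-series identities $\Delta\cdot\iota_{\La^{\pm 1}}\Delta^{-1} = 1$ remain valid when evaluated on $(W_i^{-1})^{*}(z^{-s})$, but since this quantity lies in the appropriate formal completion in $z^{\pm 1}$ the interchange of $\Delta$ with the infinite expansion of $\Delta^{-1}$ is legitimate, and no convergence issue arises.
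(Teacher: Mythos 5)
Your proposal is correct and follows essentially the same route as the paper: the first line of \eqref{Tmuwavef} comes from Proposition \ref{mura1} giving $\mathcal{W}_i=c_0^{-1}W_i$, and the adjoint relations come from combining $\mathcal{W}_i^{*-1}=c_0 W_i^{*-1}$ with the formulas \eqref{W1W2} and the identities $\Delta\cdot\iota_{\La^{\pm1}}\Delta^{-1}=1$, which is exactly the content of the two operator identities the paper states in its (terser) proof.
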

\begin{proof}
Assume $W_i$ to be mToda wave operators, then $\mathcal{W}_i$ are the Toda wave operators and thus $\Phi_i=c_0^{-1}\Psi_i$ are Toda wave functions by \eqref{mTodawave} and \eqref{todawave}. Further, by
\[
\mathcal{W}_1^{*-1}=-c_0\Delta\cdot\bigl(-\iota_{\Lambda^{-1}}\Delta^{-1} W_1^{*-1}\bigr)\qquad \text{and} \qquad\mathcal{W}_2^{*-1}=c_0\Delta\cdot\bigl(\iota_{\Lambda}\Delta^{-1} W_2^{*-1}\bigr),\]
 we can obtain relations between $\Phi_i^*$ and~$\Psi_i^*$.
\end{proof}

Besides the first Miura transformation $T_1$, there also exists
a second kind of Miura transformation from mToda to Toda
\[
T_2=\pi_0\circ T_1\circ\pi^{-1}_1\colon \ (L_i,S_i)\rightarrow ({\mathcal{L}}_i,{\mathcal{S}}_i), \qquad i=1,2,
\]
which is showed in the following diagram:
\begin{gather*}
		\tikzstyle{startstop} = [rectangle,rounded corners, minimum width=3cm,minimum height=1.5cm,text centered, draw=black,fill=white!20]
		\tikzstyle{arrow} = [thick,->,>=stealth]	
		\begin{tikzpicture}[node distance=1.95cm]
			\node (Toda0) [startstop,xshift=2cm,align=center] {mToda \\ $(L_i, S_i)$};
			\node (Toda1) [startstop,right of=Toda0,xshift=2cm,align=center] {mToda \\ $(\widetilde{L}_i,\widetilde{S}_i)$};
			\node (mToda1) [startstop,right of=Toda1,xshift=2cm,align=center] {Toda \\ $(\widetilde{\mathcal{L}_i},\widetilde{\mathcal{S}_i})$};
			\node (mToda2) [startstop,right of=mToda1,xshift=2cm,align=center] {Toda \\ $(\mathcal{L}_i,\mathcal{S}_i)$};
			\draw [arrow,line width=0.5pt] (Toda0) -- node[above]{$\pi_1^{-1}$}(Toda1);
			\draw [arrow,line width=0.5pt](Toda1)--node[above]{$T_{1}$}(mToda1);
			\draw [arrow,line width=0.5pt](mToda1)--node[above]{$\pi_0$}(mToda2);
		\end{tikzpicture}
\end{gather*}
Next, let us see the explicit procedure above.
\begin{itemize}\itemsep=0pt
\item mToda $\xrightarrow{\pi_1^{-1}}$ mToda.
By Remark~\ref{remark4.5}, $\pi^{-1}_1=\operatorname{Ad}\La\circ\pi_1$, thus we have
\begin{gather*}
{L}_1(s)\xrightarrow{\pi_1^{-1}}\widetilde{{L}}_1(s)=\iota_{\Lambda^{-1}}\Delta^{-1}{L}^*_{2}(s+1)\Delta,\\
{L}_2(s)\xrightarrow{\pi_1^{-1}}\widetilde{{L}}_2(s)=\iota_{\Lambda}\Delta^{-1}{L}^*_{1}(s+1)\Delta,\\
{S}_1(s)\xrightarrow{\pi_1^{-1}}\widetilde{{S}}_1(s)=\iota_{\Lambda^{-1}}\Delta^{-1}\bigl({S}^{-1}_{2}(s+1)\bigr)^*\Lambda, \\
{S}_1(s)\xrightarrow{\pi_1^{-1}}\widetilde{{S}}_2(s)=-\iota_{\Lambda}\Delta^{-1}\bigl({S}^{-1}_{1}(s+1)\bigr)^*.
\end{gather*}
If denote ${c}_0(s)$ and $\bar {c}_0(s)$ to be the coefficients of $\La^0$ in $S_1$ and ${S}_2$, respectively, it can be found that the coefficient of $\La^0$ in $\widetilde{{S}}_1$ is $\bar{c}^{-1}_0(s)$, and $\Lambda^0$-term in $\widetilde{{S}}_2$ is $c_0^{-1}(s+1)$.
\item mToda $\xrightarrow{T_1}$ Toda.
In this case, $T_1=\bar {c}_0(s)$,
\begin{align*}			&\widetilde{{L}}_1\xrightarrow{T_1}\widetilde{\mathcal{L}}_1
=\bar{c}_0(s)\widetilde{L}_1\bar{c}^{-1}_0(s)= \bar{c}_0(s)\iota_{\Lambda^{-1}}\Delta^{-1}{L}^*_{2}(s+1)\Delta\bar{c}^{-1}_0(s),\\
&\widetilde{{L}}_2\xrightarrow{T_1}\widetilde{\mathcal{L}}_2= \bar{c}_0(s)\widetilde{L}_2\bar{c}^{-1}_0(s)=
\bar{c}_0(s)\iota_{\Lambda}\Delta^{-1}{L}^*_{1}(s+1)\Delta\bar{c}^{-1}_0(s),\\ &\widetilde{{S}}_1\xrightarrow{T_1}\widetilde{\mathcal{S}}_1=
\bar{c}_0(s)\tilde{S}_1=\bar{c}_0(s)\iota_{\Lambda^{-1}}\Delta^{-1}\bigl({S}^{-1}_{2}(s+1)\bigr)^*\La, \\		&\widetilde{{S}}_2\xrightarrow{T_1}\widetilde{\mathcal{S}}_2=\bar{c}_0(s)\tilde{S}_2=-\bar{c}_0(s)\iota_{\Lambda}\Delta^{-1}\bigl({S}^{-1}_{1}(s+1)\bigr)^*.
\end{align*}
Now the coefficient of $\La^0$ in $\widetilde{\mathcal{S}}_2$ is $\bar{c}_0(s)c^{-1}_0(s+1)$.
\item Toda $\xrightarrow{\pi_0}$ Toda. Then
\begin{gather*}	
\widetilde{\mathcal{L}}_1\xrightarrow{\pi_0}{\mathcal{L}}_1 =\bar{c}_0(s)c^{-1}_0(s+1)\widetilde{\mathcal{L}}^*_2\bar{c}^{-1}_0(s)c_0(s+1)\\
\hphantom{\widetilde{\mathcal{L}}_1\xrightarrow{\pi_0}{\mathcal{L}}_1}{}
=c_0^{-1}(s+1)\Delta \cdot L_1(s)\cdot\iota_{\Lambda^{-1}}\Delta^{-1} c_0(s+1),\\
\widetilde{\mathcal{L}}_2\xrightarrow{\pi_0}{\mathcal{L}}_2= \bar{c}_0(s)c^{-1}_0(s+1)\widetilde{\mathcal{L}}^*_1\bar{c}^{-1}_0(s)c_0(s+1)\\
\hphantom{\widetilde{\mathcal{L}}_2\xrightarrow{\pi_0}{\mathcal{L}}_2}{}
=c_0^{-1}(s+1)\Delta \cdot L_2(s)\cdot \iota_{\Lambda}\Delta^{-1} c_0(s+1),\\
\widetilde{\mathcal{S}}_1\xrightarrow{\pi_0}{\mathcal{S}}_1=\bar{c}_0(s)c^{-1}_0(s+1)\widetilde{\mathcal{S}}_2^{-1*}
=c_0^{-1}(s+1)\Delta\cdot {S}_1(s)\Lambda^{-1}, \\
\widetilde{\mathcal{S}}_2\xrightarrow{\pi_0}{\mathcal{S}}_2=\bar{c}_0(s)c^{-1}_0(s+1)\widetilde{\mathcal{S}}_1^{-1*}=-c_0^{-1}(s+1)\Delta\cdot{S}_2(s).
\end{gather*}
\end{itemize}

Let us summarize above discussion into the following proposition.
\begin{Proposition}\label{mura2}
Suppose $L_i$ and ${S}_i$, $i=1,2$, are mToda Lax and wave operators, respectively,
where $c_0(s)$ is the coefficient of $\La^0$ in $S_1$, and define
\begin{alignat*}{3}	
&\mathcal{L}_1=T_2 L_1\iota_{\Lambda^{-1}}T_2^{-1},\qquad&&
\mathcal{L}_2=T_2 L_2\iota_{\Lambda}T_2^{-1},&\\
&\mathcal{S}_1={T}_2{S}_1\Lambda^{-1},\qquad&&
\mathcal{S}_2=-{T}_2{S}_2,&
\end{alignat*}
with operator $T_2$ given by
\begin{align*}
{T}_2=c_0^{-1}(s+1)\Delta,
\end{align*}
then $\mathcal{L}_i$ and $\mathcal{S}_i$, $i=1,2$, are Toda Lax and wave operators, respectively.
\end{Proposition}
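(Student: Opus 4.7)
The plan is to recognize the definition $T_2 = \pi_0 \circ T_1 \circ \pi_1^{-1}$ laid out in the diagram above the proposition and to deduce the claim by chaining the three propositions that govern these transformations. Specifically, $\pi_1^{-1} = \operatorname{Ad}\La \circ \pi_1$ is an mToda self-transformation by Proposition~\ref{automormToda} and Remark~\ref{remark4.5}; $T_1$ sends mToda to Toda by Proposition~\ref{mura1}; and $\pi_0$ is a Toda self-transformation by Proposition~\ref{automorToda}. Consequently the image $(\mathcal{L}_i, \mathcal{S}_i)$ is automatically a Toda Lax/wave pair, and the substantive task is only to verify that the explicit formulas obtained by composing the three steps collapse to those stated.

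To do this I would carry out the three-step calculation already recorded in the bullet list just above the proposition. First I would read off from $\pi_1^{-1}$ that $\tilde S_1$ has $\La^0$-coefficient $\bar c_0^{-1}(s)$, which pins down the middle Miura as $T_1 = \bar c_0(s)$; this in turn gives the $\La^0$-coefficient of $\tilde{\mathcal S}_2$ as $\bar w_0 = \bar c_0(s)\,c_0^{-1}(s+1)$, the datum required to feed into $\pi_0$. Substituting this $\bar w_0$ into the formulas of Proposition~\ref{automorToda} and repeatedly using the adjoint identities $(\iota_{\La^{\pm 1}}\Delta^{-1})^* = \iota_{\La^{\mp 1}}\Delta^{*-1}$, $(AB)^* = B^*A^*$, and $\Delta^* = -\La^{-1}\Delta$, the intermediate factors of $\bar c_0(s)$ cancel with those introduced at the $\pi_1^{-1}$ stage, leaving exactly $T_2 = c_0^{-1}(s+1)\Delta$ on the left and the two branches of $T_2^{-1}$, namely $\iota_{\La^{-1}}\Delta^{-1}c_0(s+1)$ and $\iota_{\La}\Delta^{-1}c_0(s+1)$, on the right of $L_1$ and $L_2$, respectively, matching the formulas in the statement. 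The wave-operator formulas $\mathcal{S}_1 = T_2 S_1 \La^{-1}$ and $\mathcal{S}_2 = -T_2 S_2$ emerge from the same telescoping after one more application of $\pi_0$ to the starred inverse $\tilde{\mathcal{S}}_i$.

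The main obstacle I anticipate is pure adjoint bookkeeping: ensuring the chain of starred inverses produced first by $\pi_1^{-1}$ and then by $\pi_0$ unwinds into plain $L_i$ and $S_i$ with the correct choice of $\iota_{\La^{\pm 1}}$ expansion on each side, and that the shifts $s \mapsto s+1$ introduced by $\operatorname{Ad}\La$ land on $c_0$ rather than on $\bar c_0$. Lemma~\ref{fdoprop2} is the key tool for converting between $\Delta$- and $\Delta^*$-expansions across adjoints when this unwinding is performed. Beyond this careful bookkeeping, no fresh Lax-equation verification is needed, since the Toda structure of the output is guaranteed by the three building-block propositions invoked at the start.
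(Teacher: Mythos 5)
Your proposal is correct and follows essentially the same route as the paper: the paper itself presents Proposition~\ref{mura2} as a summary of the explicit three-step computation ${\rm mToda}\xrightarrow{\pi_1^{-1}}{\rm mToda}\xrightarrow{T_1}{\rm Toda}\xrightarrow{\pi_0}{\rm Toda}$ recorded in the bullet list just above it, with the same identification $T_1=\bar c_0(s)$, the same intermediate coefficient $\bar w_0=\bar c_0(s)c_0^{-1}(s+1)$, and the same telescoping of the $\bar c_0$ factors. No further comparison is needed.
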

\begin{Corollary}\label{todaadeig}
Under the same condition in Proposition $\ref{mura2}$,
$c_0(s+1)$ is the Toda adjoint eigenfunction with respect to Lax operators ${\mathcal{L}}_i$, $i=1,2$.
\end{Corollary}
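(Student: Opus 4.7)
The plan is to exploit the factorisation $T_2=\pi_0\circ T_1\circ\pi_1^{-1}$ displayed just before Proposition~\ref{mura2} and to track the image of the function $c_0(s+1)$ back through the three arrows. At each intermediate stage the $\La^0$-coefficients of the relevant wave operators are already spelled out in the itemised calculation preceding Proposition~\ref{mura2}, so the proof should reduce to combining Proposition~\ref{mura1} (which extracts a Toda eigenfunction from mToda data) with Corollary~\ref{Todaeigen} (which sends a Toda eigenfunction to a Toda adjoint eigenfunction under $\pi_0$).

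First I would apply $\pi_1^{-1}=\operatorname{Ad}\La\circ\pi_1$ to $(L_i,S_i)$ and record, as in the itemised calculation, that the $\La^0$-coefficient of $\widetilde S_1$ equals $\bar c_0^{-1}(s)$. Since $(\widetilde L_i,\widetilde S_i)$ is again mToda data by Proposition~\ref{automormToda}, Proposition~\ref{mura1} may be applied at this intermediate stage: the first Miura transformation with $T_1=\bar c_0(s)$ produces Toda operators $(\widetilde{\mathcal L}_i,\widetilde{\mathcal S}_i)$, and the proposition tells us that $\bar c_0(s)$ itself is a Toda eigenfunction with respect to $\widetilde{\mathcal L}_i$. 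A direct reading of the displayed formulas further shows that the $\La^0$-coefficient of $\widetilde{\mathcal S}_2=\bar c_0(s)\widetilde S_2$ is $\widetilde w_0:=\bar c_0(s)c_0^{-1}(s+1)$.

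Finally I would apply $\pi_0$ to reach $(\mathcal L_i,\mathcal S_i)$ and invoke Corollary~\ref{Todaeigen}: with $\widetilde w_0$ playing the role of $\bar w_0$ in Proposition~\ref{automorToda} and $\bar c_0(s)$ playing the role of $q$, the function
\begin{align*}
\widetilde w_0^{-1}\cdot\bar c_0(s)=\bigl(\bar c_0(s)c_0^{-1}(s+1)\bigr)^{-1}\bar c_0(s)=c_0(s+1)
\end{align*}
is a Toda adjoint eigenfunction with respect to $\mathcal L_i$ in the original flow variables (the sign flips introduced by $\pi_1^{-1}$ and by $\pi_0$ cancel each other), which is exactly the claim.

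The main hurdle is bookkeeping: one must keep careful track of the shifts and signs produced by $\pi_1^{-1}$ and $\pi_0$ and confirm the identification of the $\La^0$-coefficient of $\widetilde{\mathcal S}_2$ as $\bar c_0(s)c_0^{-1}(s+1)$. Both identifications are already carried out in the itemised computation preceding Proposition~\ref{mura2}, so once they are imported the result collapses to the one-line cancellation above. A more pedestrian alternative would be to verify the defining adjoint-eigenfunction equations for $c_0(s+1)$ directly from the explicit formulas for $\mathcal L_i$ in Proposition~\ref{mura2}, but this is substantially more computational and loses the conceptual transparency of the factorisation argument.
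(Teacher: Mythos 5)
Your proof is correct, but it takes a genuinely different route from the paper. You exploit the factorisation $T_2=\pi_0\circ T_1\circ\pi_1^{-1}$ and simply compose three already-established facts: the $\La^0$-coefficients recorded in the itemised computation preceding Proposition~\ref{mura2}, the statement in Proposition~\ref{mura1} that the inverse of the $\La^0$-coefficient of the first wave operator is a Toda eigenfunction, and Corollary~\ref{Todaeigen}, which converts an eigenfunction $q$ into the adjoint eigenfunction $\bar w_0^{-1}q$ under $\pi_0$; the cancellation $\bigl(\bar c_0(s)c_0^{-1}(s+1)\bigr)^{-1}\bar c_0(s)=c_0(s+1)$ then finishes the argument, and the two applications of the variable relabelling $\bigl(x^{(1)},x^{(2)}\bigr)\mapsto\bigl(-x^{(2)},-x^{(1)}\bigr)$ (once from $\pi_1^{-1}$, once from $\pi_0$) indeed compose to the identity, so the adjoint-eigenfunction equations hold in the original flows. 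The paper instead carries out exactly the ``pedestrian alternative'' you mention: it rewrites the Sato equation $\pa_{x_n^{(1)}}S_1=\bigl((L_1^n)_{\Delta,\geq1}-L_1^n\bigr)S_1$ via the third identity of Lemma~\ref{fdoprop3} in terms of $\mathcal L_1=T_2L_1\iota_{\Lambda^{-1}}T_2^{-1}$ and reads off the $\La^0$-coefficient to obtain $\pa_{x_n^{(1)}}(c_0(s))=-\bigl(\bigl(\mathcal L_1(s-1)^n\bigr)_{\geq0}\bigr)^*(c_0(s))$ directly. Your composition argument is more conceptual and reuses the structural results, at the price of depending on the correctness of the bookkeeping in the itemised computation and in Corollary~\ref{Todaeigen}; the paper's computation is more self-contained and, as a by-product, exhibits the explicit adjoint linear equations satisfied by $c_0(s+1)$, which the factorisation argument only yields implicitly.
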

\begin{proof}
In fact according to Lemma \ref{fdoprop3}, we have
\begin{align*}
\pa_{x_n^{(1)}}S_1=\left((L^n_1)_{\Delta,\geq1}-L^n_1\right){S}_1
&{}=-\iota_{\La^{-1}}\Delta^{-1} c_0(s+1)\cdot(\mathcal{L}_1(s)^n)_{<0}\cdot c_0^{-1}(s+1)\Delta\cdot S_1(s) \notag\\
&{}\quad-\iota_{\La^{-1}}\Delta^{-1}\cdot (\mathcal{L}_1(s)^{n})^*_{\geq 0}(c_0(s+1))\cdot c_0^{-1}(s+1)\Delta \cdot S_1(s).
\end{align*}	
Compare the coefficients of $\La^0$ on both sides of the above equation, we have
\begin{align*}
\pa_{x_n^{(1)}}(c_0(s))=-((\mathcal{L}_1(s-1)^n)_{\geq 0})^*(c_0(s)).
\end{align*}
Similarly, we can get $\pa_{x_n^{(2)}}(c_0(s))=-((\mathcal{L}_2(s-1)^n)_{<0})^*(c_0(s))$.
\end{proof}

\begin{Corollary}\label{psiphi-2}
Assume $\Psi_i$ and $\Psi_i^*$, $i=1,2$, are the mToda wave functions and adjoint wave functions and let
\begin{alignat*}{3}	
&\Phi_1(s,\mathbf{x},z)=z^{-1}c_0^{-1}(s+1)\Delta(\Psi_1(s,\mathbf{x},z)),\qquad&&
\Phi_2(s,\mathbf{x},z)=-c_0^{-1}(s+1)\Delta(\Psi_2(s,\mathbf{x},z)),& \\
& \Phi_1^*(s,\mathbf{x},z)=zc_0(s+1)\Psi_1^*(s+1,\mathbf{x},z), \qquad&&
\Phi_2^*(s,\mathbf{x},z)=c_0(s+1)\Psi_2^*(s+1,\mathbf{x},z),&
\end{alignat*}
then $\Phi_i$ and $\Phi_i^*$, $i=1,2$, are Toda wave functions and adjoint wave functions, respectively.
	\end{Corollary}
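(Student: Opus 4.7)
The plan is to mirror the proof of Corollary~\ref{psiphi}, but using the second Miura operator $T_2=c_0^{-1}(s+1)\Delta$ from Proposition~\ref{mura2}. The first step is to lift the Miura relation from the dressing operators $S_i$ to the full wave operators $W_i=S_i\,{\rm e}^{\xi(\mathbf{x}^{(i)},\La^{\pm1})}$. Since $\La^{\pm1}$ commutes with the exponential factors (both live in the $\La$-variable), the identities $\mathcal{S}_1=T_2S_1\La^{-1}$ and $\mathcal{S}_2=-T_2S_2$ propagate immediately to
\[
\mathcal{W}_1=T_2W_1\La^{-1}, \qquad \mathcal{W}_2=-T_2W_2.
\]
Evaluating on $z^s$ and using $\La^{-1}(z^s)=z^{-1}z^s$ reproduces $\Phi_1=\mathcal{W}_1(z^s)=z^{-1}c_0^{-1}(s+1)\Delta(\Psi_1)$ and $\Phi_2=\mathcal{W}_2(z^s)=-c_0^{-1}(s+1)\Delta(\Psi_2)$, exactly the stated formulas for the Toda wave functions in \eqref{todawave}.

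For the adjoint wave functions $\Phi_i^*=(\mathcal{W}_i^{-1})^*(z^{-s})$, I would take inverses and adjoints of the above relations to obtain
\[
(\mathcal{W}_1^{-1})^*=(T_2^{-1})^*(W_1^{-1})^*\La^{-1}, \qquad (\mathcal{W}_2^{-1})^*=-(T_2^{-1})^*(W_2^{-1})^*,
\]
with $(T_2^{-1})^*=c_0(s+1)\Delta^{*-1}$. The mToda relations \eqref{W1W2} of Theorem~\ref{mTodasato}, after inverting the expansions $\iota_{\La^{\pm1}}\Delta^{-1}$, give $(W_1^{-1})^*=-\Delta\widetilde{W}_1$ and $(W_2^{-1})^*=\Delta\widetilde{W}_2$, hence $(W_1^{-1})^*(z^{-s})=-\Delta(\Psi_1^*)$ and $(W_2^{-1})^*(z^{-s})=\Delta(\Psi_2^*)$. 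Applying the composite operators to $z^{-s}$, using $\La^{-1}(z^{-s})=z\cdot z^{-s}$, and invoking the operator identity $\Delta^{*-1}\Delta=-\La$ (which follows from $\Delta=-\La\Delta^*$ together with $[\La,\Delta^*]=0$, and which holds as a formal identity independent of the chosen expansion of $\Delta^{*-1}$) collapses both cases to $zc_0(s+1)\Psi_1^*(s+1,\mathbf{x},z)=\Phi_1^*$ and $c_0(s+1)\Psi_2^*(s+1,\mathbf{x},z)=\Phi_2^*$.

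The main bookkeeping obstacle is keeping straight which expansion of $\Delta^{\pm1}$ or $\Delta^{*\pm1}$ is in force after each adjoint and inverse is taken; the saving grace is that $\Delta^{*-1}\Delta=-\La$ is expansion-independent, so once the wave-operator Miura relation is written down correctly the remaining verification is essentially mechanical and parallel to the computation behind Corollary~\ref{psiphi}.
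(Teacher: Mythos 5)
Your proof is correct and is exactly the argument the paper intends: the paper leaves Corollary \ref{psiphi-2} unproved as the evident analogue of Corollary \ref{psiphi}, whose proof likewise lifts the Miura relation from $\mathcal{S}_i$ to $\mathcal{W}_i$ and then passes to $\mathcal{W}_i^{*-1}$ via \eqref{W1W2}. Your bookkeeping of the signs, the shift $\La(\Psi_i^*)=\Psi_i^*(s+1)$, and the expansion-independent identity $\Delta^{*-1}\Delta=-\La$ all check out.
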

\subsection{Anti-Miura transformation from Toda to mToda}
There exist two kinds of anti-Miura transformation from Toda to mToda, where the first kind is given by the following proposition.
\begin{Proposition}
Given Toda wave operators $\mathcal{S}_i$, $i=1,2$, and Toda Lax operators $\mathcal{L}$, $i=1,2$, if denote $\mathcal{T}_1=q(s)^{-1}$ with $q(s)$ being Toda eigenfunction
defined by \eqref{todaeigen}, then $L_i=\mathcal{T}_1\mathcal{L}_i\mathcal{T}_1^{-1} $ and $S_i=\mathcal{T}_1\mathcal{S}_i$
will be the mToda Lax operator and mToda wave operator, respectively.
\end{Proposition}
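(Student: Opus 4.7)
My plan is to mirror the argument used for Proposition~\ref{mura1}, which is essentially the statement in the opposite direction. Three things need checking for $L_i=\mathcal{T}_1\mathcal{L}_i\mathcal{T}_1^{-1}$ and $S_i=\mathcal{T}_1\mathcal{S}_i$ with $\mathcal{T}_1=q(s)^{-1}$: (i)~the operator $L_i$ has the form required of an mToda Lax operator; (ii)~the dressing identities $L_1=S_1\La S_1^{-1}$ and $L_2=S_2\La^{-1}S_2^{-1}$ hold; and (iii)~the $S_i$ satisfy the mToda Sato equations of Theorem~\ref{mTodasato}. Item (ii) is immediate from the conjugation, and item (i) follows by expanding $q(s)^{-1}\La q(s)=q(s)^{-1}q(s+1)\La$ (a nonzero leading coefficient) plus non-positive powers of~$\La$, and similarly for $L_2$, so the real work is in (iii).

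The key observation from $S_i=q^{-1}\mathcal{S}_i$ is that
\[
S_1\La^n S_1^{-1}=q^{-1}\mathcal{L}_1^n q, \qquad S_2\La^{-n} S_2^{-1}=q^{-1}\mathcal{L}_2^n q.
\]
Applying Lemma~\ref{fdoprop3} with $f=q$ and invoking the eigenfunction relations $(\mathcal{L}_1^n)_{\ge 0}(q)=\pa_{x_n^{(1)}}q$ and $(\mathcal{L}_2^n)_{<0}(q)=\pa_{x_n^{(2)}}q$ from \eqref{todaeigen}, one obtains expressions such as
\[
\bigl(S_1\La^nS_1^{-1}\bigr)_{\Delta,\ge 1}=q^{-1}(\mathcal{L}_1^n)_{\ge 0}q-q^{-1}\pa_{x_n^{(1)}}q,
\]
with analogous formulas for the $\Delta,\le 0$ and $\Delta^{*}$ parts. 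On the other hand, differentiating $S_1=q^{-1}\mathcal{S}_1$ directly and substituting the Toda Sato equation $\pa_{x_n^{(1)}}\mathcal{S}_1=-(\mathcal{L}_1^n)_{<0}\mathcal{S}_1$ yields precisely the same combination on the right, so the mToda Sato equation $\pa_{x_n^{(1)}}S_1=-(S_1\La^n S_1^{-1})_{\Delta,\le 0}S_1$ holds. The flows for $S_2$ and the $x_n^{(2)}$-flows go through in parallel using the corresponding Toda Sato equations and the second eigenfunction relation in~\eqref{todaeigen}. Once the Sato equations are in place, the mToda Lax equations follow at once by differentiating $L_i=S_i\La^{\pm 1}S_i^{-1}$, exactly as in the derivation accompanying Theorem~\ref{mTodasato}.

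The principal obstacle is bookkeeping: one must carefully track the $\Delta,\ge 1$ versus $\Delta,\le 0$ splittings, and notice that the extra term $-q^{-1}A_{\ge 0}(q)$ provided by Lemma~\ref{fdoprop3} is precisely what cancels against the contribution $-q^{-2}\pa_{x_n^{(1)}}q\cdot\mathcal{S}_1$ produced by differentiating the conjugation factor $q^{-1}$. No idea beyond those already used in Proposition~\ref{mura1} is needed; in effect the present proposition is that result read backwards, with the Toda eigenfunction $q$ playing the role of the coefficient $c_0^{-1}$.
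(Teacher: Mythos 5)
Your proposal is correct and follows essentially the same route as the paper: the paper's proof likewise applies Lemma~\ref{fdoprop3} with $f=q$ to write $B^{(i)}_n=q^{-1}\mathcal{B}^{(i)}_n\cdot q-q^{-1}\mathcal{B}^{(i)}_n(q)$ and then declares the remaining verification of the Sato equations "direct," which is exactly the cancellation against $-q^{-2}\pa_{x_n^{(i)}}q\cdot\mathcal{S}_j$ that you spell out. Your write-up simply makes explicit the bookkeeping the paper leaves implicit.
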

\begin{proof}
Firstly, if for convenience denote \smash{$B^{(1)}_n=\bigl(S_1\La^n S_1^{-1}\bigr)_{\Delta,\ge1}$}, \smash{$B^{(2)}_n=\bigl(S_2\La^{-n} S_2^{-1}\bigr)_{\Delta^*,\ge1}$} and \smash{$\mathcal{B}^{(1)}_n=\bigl(\mathcal{S}_1\La^n \mathcal{S}_1^{-1}\bigr)_{\Lambda,\geq0}$}, \smash{$\mathcal{B}^{(2)}_n=\bigl(\mathcal{S}_2\La^{-n} \mathcal{S}_2^{-1}\bigr)_{\Lambda,<0}$}, then according to Lemma \ref{fdoprop3},
\begin{align*}
&B^{(1)}_n=(q^{-1}\mathcal{S}_1\La^n \mathcal{S}_1^{-1}q)_{\Delta,\ge1}=q^{-1}\mathcal{B}^{(1)}_n\cdot q-q^{-1}\mathcal{B}^{(1)}_n(q),\\
&B^{(2)}_n=(q^{-1}\mathcal{S}_2\La^n \mathcal{S}_2^{-1}q)_{\Delta^*,\ge1}=q^{-1}\mathcal{B}^{(2)}_n\cdot q-q^{-1}\mathcal{B}^{(2)}_n(q).
\end{align*}
The next computations are quite direct from the corresponding evolutions of $\mathcal{L}_i$ and $\mathcal{S}_i$.
\end{proof}

\begin{Corollary}\label{T1phipsi}
Assume $\Phi_i$ and $\Phi_i^*$, $i=1,2$, to be Toda wave functions and adjoint wave functions defined by \eqref{todawave}, respectively, and let
\begin{alignat*}{3}
&\Psi_1(s,\mathbf{x},z)=q^{-1}(s)\Phi_1(s,\mathbf{x},z),\qquad&&
\Psi_2(s,\mathbf{x},z)=q^{-1}(s)\Phi_2(s,\mathbf{x},z),&	 \\
&\Psi_1^*(s,\mathbf{x},z)=-\iota_{\La}\Delta^{-1}q(s)(\Phi_1^*(s,\mathbf{x},z)),\qquad&&
\Psi_2^*(s,\mathbf{x},z)=\iota_{\La^{-1}}\Delta^{-1}q(s)(\Phi_2^*(s,\mathbf{x},z)),&
\end{alignat*}
then $\Psi_i$ and $\Psi_i^*$, $i=1,2$, are mToda wave functions and adjoint wave functions.
\end{Corollary}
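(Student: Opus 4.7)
The plan is to leverage the operator-level identities $S_i = q^{-1}\mathcal{S}_i$ established in the preceding Proposition. Right-multiplying each by the appropriate exponential factor $e^{\xi(\mathbf{x}^{(i)},\La^{\pm 1})}$ immediately upgrades these to wave-operator identities $W_i = q^{-1}\mathcal{W}_i$ for $i=1,2$, since the exponentials sit to the right of $S_i$ and are simply carried along. Applying both sides to $z^s$ and using $\Psi_i(s,\mathbf{x},z) = W_i(z^s)$ together with $\Phi_i(s,\mathbf{x},z) = \mathcal{W}_i(z^s)$ from \eqref{todawave}, the first two identities $\Psi_i = q^{-1}(s)\Phi_i$ drop out at once.

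For the adjoint wave functions, the plan is to pass to adjoints in $W_i = q^{-1}\mathcal{W}_i$, yielding $W_i^* = \mathcal{W}_i^* \cdot q^{-1}$, where I use that multiplication by the scalar $q^{-1}(s)$ is self-adjoint. Inverting gives $(W_i^*)^{-1} = q\cdot(\mathcal{W}_i^{-1})^*$, and evaluating at $z^{-s}$ produces $q(s)\Phi_i^*(s,\mathbf{x},z)$ in view of $\Phi_i^*(s,\mathbf{x},z) = (\mathcal{W}_i^{-1})^*(z^{-s})$ from \eqref{todawave}. Substituting this into the Sato-type expressions \eqref{W1W2} for $\widetilde{W}_i$ and invoking $\Psi_i^* = \widetilde{W}_i(z^{-s})$ from Theorem \ref{mTodasato} produces precisely the stated formulas.

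I don't foresee a real obstacle beyond bookkeeping; the entire argument is a short chain of operator manipulations resting on the preceding Proposition and Theorem \ref{mTodasato}. The one subtle point is the choice of the $\iota$-expansion of $\Delta^{-1}$, which must be selected so that $\Psi_1^*$ comes out as a formal series in $z^{-1}$ and $\Psi_2^*$ as a formal series in $z$, consistently with their definitions \eqref{defwavefun2}, \eqref{addefwavefun2}; this choice is already baked into the formulas of Theorem \ref{mTodasato}, so no extra verification is required here.
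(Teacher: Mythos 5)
Your argument is correct and is essentially the proof the paper intends: the paper leaves Corollary \ref{T1phipsi} unproved, but its proof of the parallel Corollary \ref{psiphi} proceeds exactly as you do, passing from the operator relation $S_i=q^{-1}\mathcal{S}_i$ to $W_i=q^{-1}\mathcal{W}_i$, evaluating at $z^{\pm s}$, and using $(W_i^*)^{-1}=q\cdot(\mathcal{W}_i^{-1})^*$ together with the Sato-type expressions for $\widetilde{W}_i$. Your closing remark on the $\iota$-expansion is the right criterion (it must keep $\Psi_1^*$ a series in $z^{-1}$ and $\Psi_2^*$ a series in $z$, i.e., $\iota_{\La}$ for $\Psi_1^*$ and $\iota_{\La^{-1}}$ for $\Psi_2^*$, as follows from \eqref{S1S2}), though note that the subscripts printed in \eqref{W1W2} are swapped relative to this, so "baked into Theorem \ref{mTodasato}" should be read via \eqref{S1S2} rather than via the literal form of \eqref{W1W2}.
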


Similar to the case of Miura transformation, we can also define
the second kind of anti-Miura transformation from Toda to mToda constructed by
\begin{gather*}
\mathcal{T}_2=\pi_1\circ \mathcal{T}_1\circ\pi_0\colon \ ({\mathcal{L}}_i,{\mathcal{S}}_i)\rightarrow (L_i,S_i),\qquad i=1,2,
\\
		\tikzstyle{startstop} = [rectangle,rounded corners, minimum width=3cm,minimum height=1.5cm,text centered, draw=black,fill=white!20]
		\tikzstyle{arrow} = [thick,->,>=stealth]	
		\begin{tikzpicture}[node distance=1.9cm]
			\node (Toda0) [startstop,xshift=2cm,align=center] {Toda \\ $(\mathcal{L}_i, \mathcal{S}_i)$};
			\node (Toda1) [startstop,right of=Toda0,xshift=2cm,align=center] {Toda \\ $(\widetilde{\mathcal{L}}_i,\widetilde{\mathcal{S}}_i)$};
			\node (mToda1) [startstop,right of=Toda1,xshift=2cm,align=center] {mToda \\ $(\widetilde{L_i},\widetilde{S_i})$};
			\node (mToda2) [startstop,right of=mToda1,xshift=2cm,align=center] {mToda \\ $(L_i,S_i)$};
			\draw [arrow,line width=0.5pt] (Toda0) -- node[above]{$\pi_0$}(Toda1);
			\draw [arrow,line width=0.5pt](Toda1)--node[above]{$\mathcal{T}_{1}$}(mToda1);
			\draw [arrow,line width=0.5pt](mToda1)--node[above]{$\pi_1$}(mToda2);
\end{tikzpicture}
\end{gather*}
\begin{itemize}\itemsep=0pt
\item Toda $\xrightarrow{\pi_0}$ Toda,
\begin{align*}			\mathcal{L}_i\xrightarrow{\pi_0}\widetilde{\mathcal{L}}_i=\bar{w}_0\mathcal{L}^*_{3-i}\bar{w}^{-1}_0,\qquad \mathcal{S}_i\xrightarrow{\pi_0} \widetilde{\mathcal{S}}_i=\bar{w}_0(\mathcal{S}^{-1}_{3-i})^*,
\end{align*}
where $\bar{w}_0$ is $\La^0$-coefficient in $\mathcal{S}_2$.
By using Corollary \ref{Todaeigen}, $\bar{w}_0(s)r(s)$ is the Toda eigenfunction with respect to $\widetilde{\mathcal{L}}_i$ for Toda adjoint eigenfunction $r(s)$
corresponding to Lax operator~$\mathcal{L}_i$.		
\item Toda $\xrightarrow{\mathcal{T}_1}$ mToda,
\begin{align*}			&\widetilde{\mathcal{L}}_i\xrightarrow{\mathcal{T}_1}\widetilde{L}_i=(\bar{w}_0r(s))^{-1}\tilde{\mathcal{L}}_i(s)(\bar{w}_0r(s))=r^{-1}(k)\mathcal{L}(s)^*_{3-i}r(s), \\ &\widetilde{\mathcal{S}}_i\xrightarrow{\mathcal{T}_1}\widetilde{S}_i=(\bar{w}_0r(s))^{-1}\tilde{\mathcal{S}}_i(s)=r^{-1}(s)(\mathcal{S}^{-1}_{3-i})(s)^*.
\end{align*}
\item mToda $\xrightarrow{\pi_1}$ mToda,
\begin{align*}		
&\widetilde{L}_1\xrightarrow{\pi_1} L_1=\iota_{\Lambda^{-1}}\Delta^{-1}{\tilde{L}^*_{2}}\Delta
=\iota_{\Lambda^{-1}}\Delta^{-1}r(s)\cdot{\mathcal{L}_1(s)}\cdot r^{-1}(s)\Delta, \notag \\
&\widetilde{L}_2\xrightarrow{\pi_1} L_2=\iota_{\Lambda}\Delta^{-1}{\tilde{L}^*_{1}}\Delta=\iota_{\Lambda}\Delta^{-1}r(s)\cdot{\mathcal{L}_2(s)}\cdot r^{-1}(s)\Delta, \\
&\widetilde{S}_1\xrightarrow{\pi_1} S_1=\iota_{\Lambda^{-1}}\Delta^{-1}\bigl(\tilde{{S}}^{-1}_{2}\bigr)^*\La
=\iota_{\Lambda^{-1}}\Delta^{-1}r(s)\cdot\mathcal{S}_1(s)\Lambda, \\
&\widetilde{S}_2\xrightarrow{\pi_1} S_2=-\iota_{\Lambda}\Delta^{-1}\bigl(\tilde{S}^{-1}_{1}\bigr)^*
=-\iota_{\Lambda}\Delta^{-1}r(s)\cdot\mathcal{S}_2(s).
\end{align*}
\end{itemize}

Let us summarize above discussion into the following proposition.
\begin{Proposition}\label{T2:to-mto}
Given Toda Lax operators $\mathcal{L}_i$, Toda wave operators $\mathcal{W}_i$, 	Toda wave functions~$\Phi_i$, $i=1,2$, Toda adjoint wave functions $\Psi_i^*$ and Toda adjoint eigenfunction $r(s)$,
if define%
\begin{alignat*}{3}
&\mathcal{T}_2=\Delta^{-1}r(s),&\\
&L_1=\iota_{\Lambda^{-1}}\mathcal{T}_2 \cdot\mathcal{L}_1\cdot\mathcal{T}_2^{-1},\qquad&&
L_2=\iota_{\Lambda}T_n \cdot \mathcal{L}_2\cdot\mathcal{T}_2^{-1},&\notag \\
&{S}_1=\iota_{\Lambda^{-1}}\mathcal{T}_2\cdot\mathcal{S}_1\Lambda,\qquad&&
 {S}_2=-\iota_{\Lambda}\mathcal{T}_2\cdot\mathcal{S}_2,&
\end{alignat*}
and let
\begin{alignat*}{3}
&\Psi_1(s,\mathbf{x},z)=z\iota_{\Lambda^{-1}}\Delta^{-1}r(s)(\Phi_1(s,\mathbf{x},z)),\qquad &&
\Psi_2(s,\mathbf{x},z)=-\iota_{\Lambda}\Delta^{-1}r(s)(\Phi_2(s,\mathbf{x},z)),& \\
&\Psi_1^*(s,\mathbf{x},z)=z^{-1}r^{-1}(s-1)\Phi_1^*(s-1,\mathbf{x},z),\qquad&&
\Psi_2^*(s,\mathbf{x},z)=r^{-1}(s-1)\Phi_2^*(s-1,\mathbf{x},z),&
\end{alignat*}
then $L_i$, $S_i$, $\Psi_i$ and $\Psi_i^*$ are corresponding mToda objects.
\end{Proposition}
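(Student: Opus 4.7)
The plan is to prove this proposition by exploiting the factorization $\mathcal{T}_2=\pi_1\circ\mathcal{T}_1\circ\pi_0$ announced in the itemized discussion preceding the statement. Since $\pi_0$ is a self-transformation of the Toda hierarchy (Proposition \ref{automorToda}), $\mathcal{T}_1$ is the first anti-Miura transformation already verified in the previous subsection, and $\pi_1$ is a self-transformation of the mToda hierarchy (Proposition \ref{automormToda}), the composite automatically sends Toda data to mToda data. So there is no new hierarchy-preservation content to check; the entire proof reduces to verifying that, after the three substitutions are chained together, the resulting Lax operators, wave operators, and wave functions collapse to the explicit formulas claimed in Proposition \ref{T2:to-mto}.

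For the operator part, I would simply invoke the three intermediate steps displayed in the itemized list above the proposition. Step 1 ($\pi_0$) replaces $(\mathcal{L}_i,\mathcal{S}_i)$ by $(\bar{w}_0\mathcal{L}_{3-i}^*\bar{w}_0^{-1},\bar{w}_0(\mathcal{S}_{3-i}^{-1})^*)$, and by Corollary \ref{Todaeigen} the appropriate Toda eigenfunction for the new Lax operators is $\bar{w}_0(s)r(s)$. Step 2 ($\mathcal{T}_1$) is then just conjugation/multiplication by $(\bar{w}_0 r)^{-1}$, which cancels the $\bar{w}_0$-factors and leaves expressions involving only $r(s)$ and $\mathcal{L}_i$, $\mathcal{S}_i$. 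Step 3 ($\pi_1$) is applied according to Proposition \ref{automormToda}; using $\iota_{\Lambda^{\pm 1}}\Delta^{*-1}=-\Lambda\cdot\iota_{\Lambda^{\pm 1}}\Delta^{-1}$ and $\Delta^*=-\Lambda^{-1}\Delta$, the double adjoint collapses back to give the factorization $L_i=\iota_{\Lambda^{\mp 1}}\mathcal{T}_2\cdot\mathcal{L}_i\cdot\mathcal{T}_2^{-1}$ and the corresponding expressions for $S_i$. This matches the proposition exactly.

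For the wave functions, I would track them through the same three steps. Under $\pi_0$, the Toda wave functions for $\tilde{\mathcal{L}}_i$ are $\tilde{\Phi}_i=\bar w_0(s)\Phi^*_{3-i}(s-\epsilon,\mathbf{x},z)$-type objects obtained from the adjoint (with the usual sign/shift conventions coming from $(\mathcal{S}_{3-i}^{-1})^*$), and the corresponding Toda adjoint eigenfunction transforms to $\bar w_0 r$. Applying $\mathcal{T}_1$ via Corollary \ref{T1phipsi} with eigenfunction $\bar w_0 r$ then gives the intermediate mToda wave functions $\tilde\Psi_i,\tilde\Psi_i^*$. Applying $\pi_1$ (using the relations for $\tilde S_i$ to convert wave-operator identities into wave-function identities, and invoking Lemma \ref{proplambda1}) sends these to $\Psi_i,\Psi_i^*$. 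Cancelling the $\bar w_0$ factors exactly as in the operator calculation, and simplifying the nested $\Delta^{-1},\Delta^{*-1}$ expressions, yields the formulas $\Psi_1=z\,\iota_{\Lambda^{-1}}\Delta^{-1}r(\Phi_1)$, $\Psi_2=-\iota_{\Lambda}\Delta^{-1}r(\Phi_2)$, $\Psi_i^*=r^{-1}(s-1)\Phi_i^*(s-1,\cdot)$ (with the factor $z^{\pm 1}$ for $i=1$) stated in the proposition.

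The main obstacle I expect is bookkeeping rather than conceptual: keeping track of the various $s$-shifts, sign conventions, and the two choices of expansion $\iota_\Lambda$ versus $\iota_{\Lambda^{-1}}$ when composing $\pi_0$, $\mathcal{T}_1$, $\pi_1$. The key identities that make the simplification go through are $\Delta^*=-\Lambda^{-1}\Delta$, $\iota_{\Lambda^{\pm 1}}\Delta^{*-1}=-\Lambda\cdot\iota_{\Lambda^{\pm 1}}\Delta^{-1}$, the identity $(A_{\ge 0})^*=(A^*)_{\le 0}$ used in Proposition \ref{automorToda}, and the eigenfunction evolution of $\bar w_0$ from \eqref{todasatos2}. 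With these in hand the three-step composition telescopes, and the content of Proposition \ref{T2:to-mto} follows without any further direct verification against the mToda Lax or Sato equations.
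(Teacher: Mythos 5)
Your proposal is correct and follows essentially the same route as the paper: the paper's "proof" of Proposition \ref{T2:to-mto} is precisely the itemized three-step computation $\pi_0$, then $\mathcal{T}_1$ (with eigenfunction $\bar{w}_0 r$ via Corollary \ref{Todaeigen}), then $\pi_1$, displayed immediately before the statement, after which the proposition is announced as a summary. The only part the paper leaves implicit is the explicit tracking of the wave functions through the composition, which you propose to carry out by the same chaining, so there is no substantive difference.
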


\section{mToda tau functions}\label{section5}
In this section, we will discuss mToda tau functions and their relations with Toda tau functions.
Note that from mToda Lax formulation, we have proved the mToda bilinear equation in terms of wave functions. Firstly, let us prove the existence of tau functions for mToda hierarchy from mToda bilinear equation.
\begin{Proposition}
Given mToda wave functions $\Psi_i$ and mToda adjoint wave functions $\Psi_i^*$, $i=1,2$, satisfying
\begin{eqnarray}\label{2mkpbilinearwave-2}
\oint_{C_\infty}\frac{{\rm d}z}{2\pi {\rm i}z }\Psi_1(s,\mathbf{x},z)\Psi_1^*\bigl(s',\mathbf{x}',z\bigr)+\oint_{C_0}\frac{{\rm d}z}{2\pi {\rm i}z }\Psi_2(s,\mathbf{x},z)\Psi_2^*\bigl(s',\mathbf{x}',z\bigr)=1,
\end{eqnarray}
then there exist tau functions $\tau_{0,s}(\mathbf{x})$ and $\tau_{1,s}(\mathbf{x})$ such that
\begin{align*}
&\Psi_1(s,\mathbf{x},z)=\frac{\tau_{0,s}\bigl(\mathbf{x}-\bigl[z^{-1}\bigr]_1\bigr)}{\tau_{1,s}(\mathbf{x})}z^{s}{\rm e}^{\xi(\mathbf{x}^{(1)},z)},\\ 	&\Psi_1^*(s,\mathbf{x},z)=\frac{\tau_{1,s}\bigl(\mathbf{x}+\bigl[z^{-1}\bigr]_1\bigr)}{\tau_{0,s}(\mathbf{x})}z^{-s}{\rm e}^{-\xi(\mathbf{x}^{(1)},z)},\\ &\Psi_2(s,\mathbf{x},z)=\frac{\tau_{0,s+1}(\mathbf{x}-[z]_2)}{\tau_{1,s}(\mathbf{x})}z^{s}{\rm e}^{\xi(\mathbf{x}^{(2)},z^{-1})},\\ &\Psi_2^*(s,\mathbf{x},z)=\frac{\tau_{1,s-1}(\mathbf{x}+[z]_2)}{\tau_{0,s}(\mathbf{x})}z^{-s+1}{\rm e}^{-\xi(\mathbf{x}^{(2)},z^{-1})},
\end{align*}
where $(\tau_{0,s}(\mathbf{x}),\tau_{1,s}(\mathbf{x}))$ is called mToda tau pair.
\end{Proposition}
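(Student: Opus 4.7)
The strategy is the standard one used to extract tau functions from a bilinear identity, adapted here to the two-component first mKP setting. The goal is to reconstruct the scalar functions $\tau_{0,s}$ and $\tau_{1,s}$ so that, after dividing out the obvious $z^s e^{\xi}$ factors, the coefficients $w^{(j)}, w^{(j)*}$ of $\Psi_j, \Psi_j^*$ match the ratios of tau functions stated in the proposition.

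First, I would specialize \eqref{2mkpbilinearwave-2} at $s'=s$, $\mathbf{x}'=\mathbf{x}$ and compute the contour integrals, obtaining a normalization relation among the leading coefficients $c_0, c_0', \bar{c}_0, \bar{c}'_0$. Next I would specialize the bilinear identity at a single-variable shift $\mathbf{x}^{(1)\prime}=\mathbf{x}^{(1)}-[\lambda^{-1}]$, $\mathbf{x}^{(2)\prime}=\mathbf{x}^{(2)}$ and $s'=s$; using the elementary identity $e^{-\xi([\lambda^{-1}],z)}=1-z/\lambda$ the $C_\infty$-integral localizes to give a pole contribution at $z=\lambda$, while the $C_0$-integral produces only a computable formal-series correction. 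This yields a Hirota-type equation expressing $\Psi_1(s,\mathbf{x},\lambda)\Psi_1^*(s,\mathbf{x}-[\lambda^{-1}]_1,\lambda)$ as a residue of a product of wave functions. Repeating the procedure with the analogous shifts in $\mathbf{x}^{(2)}$, and with shifts of the discrete variable $s' = s\pm 1$, generates the full family of Hirota-type identities for the wave functions.

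Second, from these Hirota relations I would show that the logarithmic-derivative 1-form
\begin{align*}
\omega_{1,s} \;=\; \sum_{n\ge 1}\bigl(\text{residue in }z\text{ of products of }\Psi_1,\Psi_1^*\text{ and }\Psi_2,\Psi_2^*\bigr)\bigl({\rm d}x_n^{(1)}+{\rm d}x_n^{(2)}\bigr)
\end{align*}
is closed (its closedness follows from the Hirota identities obtained above via standard residue manipulations), and define $\tau_{1,s}(\mathbf{x})$ by integrating $\omega_{1,s}$. One then sets $\tau_{0,s}(\mathbf{x})$ by the formula that forces $w^{(1)*}(s,\mathbf{x},z)=\tau_{1,s}(\mathbf{x}+[z^{-1}]_1)/\tau_{0,s}(\mathbf{x})$ to hold at $z=\infty$, and checks the remaining wave-function formulas by using the shift Hirota identities already derived.

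Finally, the main obstacle will be the cross-compatibility of the two tau functions across the discrete shift $s\mapsto s\pm1$: the functions $\tau_{0,s+1}$ appearing inside $\Psi_2$ and $\tau_{1,s-1}$ inside $\Psi_2^*$ must coincide with the tau functions reconstructed from $\Psi_1,\Psi_1^*$. Verifying this requires using \eqref{2mkpbilinearwave-2} at carefully chosen unequal pairs $(s,s')$ differing by one, so that the $z^{s-s'-1}$ and $z^{s'-s-2}$ factors produce matching residues between the two contour integrals; this is where the genuinely two-component nature of the identity enters and must be handled with care.
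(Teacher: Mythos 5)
Your route is genuinely different from the paper's, and the difference matters. The paper does not build the tau pair from scratch: it first observes (Corollary~\ref{psiphi}) that $\Phi_j=c_0^{-1}(s)\Psi_j$ are wave functions of the \emph{Toda} hierarchy, invokes the already known existence of a Toda tau function $\tau_s^{\rm Toda}$ satisfying \eqref{TLtau}, and simply sets $\tau_{0,s}=\tau_s^{\rm Toda}$, $\tau_{1,s}=c_0^{-1}(s)\tau_s^{\rm Toda}$, which yields the stated formulas for $\Psi_1$ and $\Psi_2$ immediately. The two adjoint formulas then come from two one-line specializations of \eqref{2mkpbilinearwave-2}: $\mathbf{x}'=\mathbf{x}+\bigl[\lambda^{-1}\bigr]_1$, $s'=s$ gives $w^{(1)}\bigl(s,\mathbf{x}+\bigl[\lambda^{-1}\bigr]_1,\lambda\bigr)w^{(1)*}(s,\mathbf{x},\lambda)=1$, and $\mathbf{x}'=\mathbf{x}+[\lambda]_2$ gives $\lambda^{-1}w^{(2)}(s,\mathbf{x}+[\lambda]_2,\lambda)w^{(2)*}(s+1,\mathbf{x},\lambda)=1$. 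In other words, the whole existence question is outsourced to the Toda hierarchy through the Miura link of Section~\ref{section4}; what that buys is that the closedness and discrete-compatibility arguments never have to be performed for the mToda objects at all.

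Your direct DJKM-style construction is viable in principle but, as written, has gaps exactly where the mToda structure differs from scalar KP. First, a single closed $1$-form produces only one scalar potential, whereas here a \emph{pair} $(\tau_0,\tau_1)$ is needed; fixing $\tau_{0,s}$ by matching $w^{(1)*}$ at $z=\infty$ only pins down the ratio $\tau_{1,s}/\tau_{0,s}=c_0'(s)$ pointwise, and you would still have to verify all four shifted-argument identities for that choice, which is the bulk of the work, not an afterthought. Second, the $1$-form you propose, with a single residue coefficient multiplying ${\rm d}x_n^{(1)}+{\rm d}x_n^{(2)}$, is not well formed: the $x^{(1)}$- and $x^{(2)}$-components carry different residue expressions (one from $C_\infty$, one from $C_0$), and closedness must also be checked for the mixed pairs. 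Third, the cross-compatibility in $s$, which you correctly flag as the main obstacle, requires a difference analogue of closedness and is not resolved by "carefully chosen unequal pairs $(s,s')$" alone. If you want to keep your route you must actually carry out these three steps; otherwise the paper's detour through Corollary~\ref{psiphi} and the Ueno--Takasaki Toda tau function is the far shorter proof.
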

\begin{proof}
Firstly, by Corollary \ref{psiphi}, we can find that $\Phi_j(s,\mathbf{x},z)$, $j=1,2$, defined by (\ref{Tmuwavef}) are Toda wave functions, thus there exists
tau functions $\tau_s^{\rm Toda}$ such that
\begin{align*}
&c_0^{-1}(s)\Psi_1(s,\mathbf{x},z)=\frac{\tau^{{\rm Toda}}_s\bigl(\mathbf{x}-\bigl[z^{-1}\bigr]_1\bigr)}
{\tau^{{\rm Toda}}_s(\mathbf{x})}{\rm e}^{\xi(\mathbf{x}^{(1)},z)}z^s, \\
&c_0^{-1}(s)\Psi_2(s,\mathbf{x},z)=\frac{\tau^{{\rm Toda}}_{s+1}(\mathbf{x}-[z]_2)}
{\tau^{{\rm Toda}}_s(\mathbf{x})}{\rm e}^{\xi(\mathbf{x}^{(2)},z^{-1})}z^s.
\end{align*}
So, if denote \smash{$\tau_{0,s}(\mathbf{x})=\tau^{{\rm Toda}}_s(x)$}, \smash{$\tau_{1,s}(\mathbf{x})=c_0^{-1}(s)\tau^{{\rm Toda}}_s(x)$}, then we can get relations for $\Psi_1$ and~$\Psi_2$.

Next, if let
\smash{$\mathbf{x}'=\mathbf{x}+\bigl[\lambda^{-1}\bigr]_1$} and $s'=s$ in \eqref{2mkpbilinearwave-2}, we can get
\begin{align*}
w^{(1)}\bigl(s,\mathbf{x}+\bigl[\lambda^{-1}\bigr]_1,\lambda\bigr)w^{(1)*}(s,\mathbf{x},\lambda)=1,
\end{align*}
then we can derive the expression for $\Psi_1^*(s,\mathbf{x},z)$.
Similarly, letting $\mathbf{x}'=\mathbf{x}+[\lambda]_2$, we can get
\begin{align*}
\lambda^{-1}w^{(2)}(s,\mathbf{x}+[\la]_2,\lambda)w^{(2)*}(s+1,\mathbf{x},\la)=1,
\end{align*}
from which we can get the expression for $\Psi_2^*(s,\mathbf{x},z)$.
\end{proof}

\begin{Lemma}\label{fconst}
For the function $f(s,\mathbf{x})\in\mathcal{A}$, if it satisfies
\begin{align}\label{fconstf}
{\rm e}^{\pm\xi(\tilde{\pa}_{\mathbf{x}^{(1)}},z^{-1})}f(s,\mathbf{x})=
\La^{\mp} {\rm e}^{\pm\xi(\tilde{\pa}_{\mathbf{x}^{(2)}},z)}f(s,\mathbf{x}),
\end{align}
then $f $ is a constant independent of $s$ and $\mathbf{x}$.
\end{Lemma}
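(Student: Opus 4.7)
The plan is to expand both sides of \eqref{fconstf} as formal power series in $z$ (resp.\ $z^{-1}$) and exploit the disjointness of the positive and negative powers of $z$. Interpreting the exponentials via Taylor shifts, the identity takes the form
\[
f\bigl(s,\mathbf{x}^{(1)}\pm\bigl[z^{-1}\bigr],\mathbf{x}^{(2)}\bigr)=f\bigl(s\mp1,\mathbf{x}^{(1)},\mathbf{x}^{(2)}\pm[z]\bigr),
\]
where the left-hand side is a formal power series in $z^{-1}$ whose $z^0$ term is $f(s,\mathbf{x})$, and the right-hand side is a formal power series in $z$ whose $z^0$ term is $f(s\mp1,\mathbf{x})$.

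Matching the $z^0$ coefficients immediately yields $f(s,\mathbf{x})=f(s\mp1,\mathbf{x})$, so $f$ is independent of $s$. Since one side carries only non-positive powers of $z$ and the other only non-negative powers, all remaining coefficients must vanish separately. The vanishing of the $z^{-k}$ coefficient on the left-hand side for every $k\geq1$ is equivalent to $p_k(\pm\tilde{\pa}_{\mathbf{x}^{(1)}})f=0$, where $p_k$ denotes the $k$-th elementary Schur polynomial. Because $p_k=\tfrac{1}{k}t_k+(\text{polynomial in }t_1,\ldots,t_{k-1})$, a straightforward induction on $k$, starting from $p_1(\tilde{\pa})f=\pa_{x_1^{(1)}}f=0$, yields $\pa_{x_k^{(1)}}f=0$ for all $k\geq1$; hence $f$ is independent of $\mathbf{x}^{(1)}$. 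An identical Schur-polynomial argument applied to the positive-$z$ coefficients on the right-hand side shows that $f$ is independent of $\mathbf{x}^{(2)}$.

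There is no real obstacle in the argument: the whole proof rests on the elementary observation that an element of $\mathcal{A}[[z^{-1}]]$ can coincide with an element of $\mathcal{A}[[z]]$ only if both in fact lie in $\mathcal{A}$, combined with the standard fact that the shift $\mathbf{x}^{(i)}\mapsto\mathbf{x}^{(i)}\pm[w]_i$ is faithfully detected by the $w$-expansion, so invariance under it forces independence of each variable $x_k^{(i)}$. Collecting the three conclusions above, $f$ is independent of $s$, $\mathbf{x}^{(1)}$, and $\mathbf{x}^{(2)}$, and is therefore a constant.
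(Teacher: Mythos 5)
Your proof is correct and follows essentially the same route as the paper: both interpret the exponentials as Taylor shifts, note that one side lies in $\mathcal{A}\bigl[\bigl[z^{-1}\bigr]\bigr]$ and the other in $\mathcal{A}[[z]]$, and compare coefficients of $z^j$ to conclude $f(s,\mathbf{x})=f(s\mp1,\mathbf{x})$ and $\pa_{x_n^{(i)}}f=0$. The only difference is that you make explicit the elementary-Schur-polynomial induction needed to pass from the vanishing of the $z^{\mp k}$ coefficients to the vanishing of each $\pa_{x_k^{(i)}}f$, a step the paper leaves implicit.
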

\begin{proof}
Compare the coefficients of $z^j$, $j\in\mathbb{Z}$, in \eqref{fconstf}, we can get
\begin{align*}
f(s,\mathbf{x})=f(s\mp1,\mathbf{x}),\qquad \partial_{x_n^{(1)}}f(s,x)=\partial_{x_n^{(2)}}f(s\mp1,\mathbf{x})=0,
\end{align*}
which implies that $f $ is a constant independent of $s$ and $\mathbf{x}$.
\end{proof}

\begin{Proposition}\label{prop:mtodatau}
Given mToda tau pair $(\tau_{0,s}(\mathbf{x}),\tau_{1,s}(\mathbf{x}))$,
both $\tau_{0,s}(\mathbf{x})$ and $\tau_{1,s}(\mathbf{x})$ are Toda tau functions. And if denote $q=\tau_1/\tau_0$ and $r=\La(\tau_0/\tau_1)$, then $q$ is the Toda eigenfunction with respect to $\tau_0$ and $r$ is the Toda adjoint eigenfunction with respect to $\tau_1$.
\end{Proposition}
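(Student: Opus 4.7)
The plan is to deduce the proposition from the two Miura transformations of the preceding section, identifying the resulting Toda tau functions with $\tau_{0,s}$ and $\tau_{1,s}$, respectively.

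The first assertion together with the eigenfunction statement are essentially already in hand from the construction in the proof of the previous proposition, where $\tau_{0,s}(\mathbf{x})$ was set equal to the Toda tau function of the Lax operators $\mathcal{L}_i=T_1L_iT_1^{-1}$ arising from the first Miura $T_1=c_0^{-1}(s)$. Reading the leading coefficient of $w^{(1)}$ off the wave function formula gives $c_0(s)=\tau_{0,s}/\tau_{1,s}$, so $q(s)=c_0^{-1}(s)=\tau_{1,s}/\tau_{0,s}$, and by Proposition \ref{mura1} this is precisely the Toda eigenfunction for the Toda Lax operators determined by $\tau_{0,s}$.

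The main step is to show that $\tau_{1,s}$ is also a Toda tau function. I will apply the second Miura $T_2=c_0^{-1}(s+1)\Delta$ from Proposition \ref{mura2}, which produces Toda Lax and wave operators $(\mathcal{L}_i,\mathcal{S}_i)$ with some associated Toda tau function $\tilde\tau_s$, and then show $\tilde\tau_s=C\tau_{1,s}$ for a nonzero constant $C$. The cleanest route is through the adjoint wave functions: Corollary \ref{psiphi-2} gives $\Phi_1^*(s,\mathbf{x},z)=zc_0(s+1)\Psi_1^*(s+1,\mathbf{x},z)$ and $\Phi_2^*(s,\mathbf{x},z)=c_0(s+1)\Psi_2^*(s+1,\mathbf{x},z)$, and since $c_0(s+1)=\tau_{0,s+1}/\tau_{1,s+1}$ the $\tau_{0,s+1}$ denominators cancel, leaving
\begin{align*}
\Phi_1^*(s,\mathbf{x},z)&=\frac{\tau_{1,s+1}(\mathbf{x}+[z^{-1}]_1)}{\tau_{1,s+1}(\mathbf{x})}z^{-s}{\rm e}^{-\xi(\mathbf{x}^{(1)},z)},\\
\Phi_2^*(s,\mathbf{x},z)&=\frac{\tau_{1,s}(\mathbf{x}+[z]_2)}{\tau_{1,s+1}(\mathbf{x})}z^{-s}{\rm e}^{-\xi(\mathbf{x}^{(2)},z^{-1})}.
\end{align*}
Comparing these with the general Toda tau formulas \eqref{todaawavetau} written for $\tilde\tau$, and setting $f(s,\mathbf{x}):=\tilde\tau_s(\mathbf{x})/\tau_{1,s}(\mathbf{x})$, I obtain $f(s,\mathbf{x}+[z^{-1}]_1)=f(s,\mathbf{x})$ and $f(s-1,\mathbf{x}+[z]_2)=f(s,\mathbf{x})$, which combine into exactly the hypothesis of Lemma \ref{fconst}. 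Hence $f$ is constant, and $\tau_{1,s}$ is a Toda tau function. The final assertion then follows from Corollary \ref{todaadeig}: since $r(s)=\Lambda(\tau_0/\tau_1)(s)=\tau_{0,s+1}/\tau_{1,s+1}=c_0(s+1)$ is the Toda adjoint eigenfunction for the $T_2$-Miura Lax operators, whose tau function is $\tau_{1,s}$ up to constant, $r$ is the Toda adjoint eigenfunction with respect to $\tau_{1,s}$.

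The step I expect to be the main obstacle is the identification $\tilde\tau_s=C\tau_{1,s}$ itself. A direct approach via the non-adjoint wave functions $\Phi_j$ would force one to derive a non-trivial Hirota-type bilinear identity mixing $\tau_0$ and $\tau_1$ out of the mToda bilinear equation \eqref{2mkpbilineartau}. The observation above is that passing to $\Phi_j^*$ makes all $\tau_0$ factors disappear, reducing the entire identification to an application of Lemma \ref{fconst} and bypassing the bilinear-equation gymnastics altogether.
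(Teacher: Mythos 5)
Your proposal is correct and follows essentially the same route as the paper: the two Miura transformations, Corollaries \ref{psiphi} and \ref{psiphi-2} to relate the mToda and Toda (adjoint) wave functions, Lemma \ref{fconst} to pin down the Toda tau functions up to constants, and then Proposition \ref{mura1} and Corollary \ref{todaadeig} for the eigenfunction statements. The only asymmetry is that you dispatch $\tau_{0,s}$ by citing the construction in the previous proposition, whereas a given tau pair need not coincide with the constructed one (it is only determined through ratios in the wave functions), so the $\tau_{0,s}$ half should be run through the same Lemma \ref{fconst} comparison (via $\Phi_j=c_0^{-1}\Psi_j$ from Corollary \ref{psiphi}) that you carry out in detail for $\tau_{1,s}$ --- which is exactly what the paper does, treating $\tau_{0,s}$ explicitly and leaving the $\tau_{1,s}$ case as ``similarly''.
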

\begin{proof}
Firstly, notice that $c_0=\tau_0/\tau_1$, then
by using \eqref{TLtau} and \eqref{todaawavetau} and relations of $\Psi_i$ and $\Phi_i$ in Corollary \ref{psiphi}, we can get
\begin{align*}
\frac{\tau_s^{\rm Toda}\bigl(\mathbf{x}-\bigl[z^{-1}\bigr]_1\bigr)}{\tau_s^{\rm Toda}(\mathbf{x})}=\frac{\tau_{0,s}\bigl(\mathbf{x}-\bigl[z^{-1}\bigr]_1\bigr)}{\tau_{0,s}(\mathbf{x})},\qquad \frac{\tau_{s+1}^{\rm Toda}(\mathbf{x}-[z]_2)}{\tau_s^{\rm Toda}(\mathbf{x})}=\frac{\tau_{0,s+1}(\mathbf{x}-[z]_2)}{\tau_{0,s}(\mathbf{x})},
\end{align*}
which implies
\begin{align*}
{\rm e}^{-\xi(\tilde{\pa}_{x^{(1)}},z^{-1})}\bigl({\rm log}\tau_s^{\rm Toda}(\mathbf{x})-{\rm log}\tau_{0,s}(\mathbf{x})\bigr)= \Lambda {\rm e}^{-\xi(\tilde{\pa}_{x^{(2)}},z)}\bigl({\rm log}\tau_s^{\rm Toda}(\mathbf{x})-{\rm log}\tau_{0,s}(\mathbf{x})\bigr).
\end{align*}
Thus by Lemma \ref{fconst}, we can obtain $\tau_{0,s}={\rm const}\cdot\tau_s^{\rm Toda}$. Similarly, by relations between $\Psi_i^*$ and $\Phi_i^*$ in Corollary \ref{psiphi-2}, we can prove $\tau_{1,s}={\rm const}\cdot\tau_s^{\rm Toda}$. Finally, by Proposition \ref{mura1} and Corollary \ref{todaadeig}, we can find $q$ is the Toda eigenfunction with respect to $\tau_0$ and $r$ is the Toda adjoint eigenfunction with respect to $\tau_1$.
\end{proof}
\begin{Corollary}\label{corollary:taurelation}
From mToda to Toda, we can find for mToda tau pair $(\tau_0,\tau_1)$,
\begin{itemize}\itemsep=0pt
\item Case $T_1=c^{-1}_0(s)$,
\[
\tau_s^{\rm Toda}(\mathbf{x})=\tau_{0,s}(\mathbf{x}),
\]
\item Case $T_2=c^{-1}_0(s+1)\Delta$,
\[
\tau_s^{\rm Toda}(\mathbf{x})=\tau_{1,s}(\mathbf{x}).
\]
\end{itemize}
While from Toda to mToda,
\begin{itemize}\itemsep=0pt
\item Case $\mathcal{T}_1=q(s)^{-1}$,
 \[
 \tau_{0,s}(\mathbf{x})=\tau_s^{\rm Toda}(\mathbf{x}),\qquad \tau_{1,s}(\mathbf{x})=q(s)\tau_s^{\rm Toda}(\mathbf{x}).
 \]
 \item Case $\mathcal{T}_2=\Delta^{-1}r(s)$,
 \[
 \tau_{0,s}(\mathbf{x})=r(s-1)\tau_s^{\rm Toda}(\mathbf{x}),\qquad \tau_{1,s}(\mathbf{x})=\tau_s^{\rm Toda}(\mathbf{x}).
 \]
\end{itemize}
Here $q(s)$ and $r(s)$ are the Toda eigenfunction and adjoint eigenfunction with respect to $\tau_{s}^{\rm Toda}$, respectively.
\end{Corollary}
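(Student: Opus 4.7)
The plan is to deduce each of the four relations between $\tau_{0,s}$, $\tau_{1,s}$ and $\tau_s^{\rm Toda}$ by substituting the tau-function formulas for the (adjoint) wave functions into the explicit wave-function relations supplied by the Miura/anti-Miura transformations, and then using the identification $c_0(s) = \tau_{0,s}/\tau_{1,s}$ established in Proposition~\ref{prop:mtodatau}. The output of each comparison is an equality of quotients of tau functions that identifies $\tau_s^{\rm Toda}$ with $\tau_{0,s}$ or $\tau_{1,s}$ (possibly rescaled by $q(s)$ or $r(s-1)$), exactly as claimed.

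Concretely, for the Miura case $T_1$ I would combine $\Phi_1(s,\mathbf{x},z) = c_0^{-1}(s)\Psi_1(s,\mathbf{x},z)$ from Corollary~\ref{psiphi} with the tau formulas for $\Phi_1$ and $\Psi_1$; the factor $c_0^{-1}(s) = \tau_{1,s}(\mathbf{x})/\tau_{0,s}(\mathbf{x})$ cancels the denominator of $\Psi_1$ and produces the Toda quotient with $\tau_s^{\rm Toda} = \tau_{0,s}$. For $T_2$ the cleanest route is via $\Phi_2^*(s,\mathbf{x},z) = c_0(s+1)\Psi_2^*(s+1,\mathbf{x},z)$ from Corollary~\ref{psiphi-2}: after substituting and using $c_0(s+1) = \tau_{0,s+1}/\tau_{1,s+1}$, the $\tau_{0,s+1}(\mathbf{x})$ in the denominator of $\Psi_2^*(s+1,\mathbf{x},z)$ cancels, leaving $\tau_{1,s}(\mathbf{x}+[z]_2)/\tau_{1,s+1}(\mathbf{x})$, which forces $\tau_s^{\rm Toda} = \tau_{1,s}$. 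The two anti-Miura cases are handled by running the same substitution in the opposite direction: for $\mathcal{T}_1$ one inserts $\Psi_1 = q^{-1}(s)\Phi_1$ from Corollary~\ref{T1phipsi}, obtaining the pair $\tau_{0,s} = \tau_s^{\rm Toda}$, $\tau_{1,s} = q(s)\tau_s^{\rm Toda}$, and for $\mathcal{T}_2$ one inserts $\Psi_1^*(s,\mathbf{x},z) = z^{-1}r^{-1}(s-1)\Phi_1^*(s-1,\mathbf{x},z)$ from Proposition~\ref{T2:to-mto}, obtaining $\tau_{0,s} = r(s-1)\tau_s^{\rm Toda}$, $\tau_{1,s} = \tau_s^{\rm Toda}$.

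The main obstacle is the familiar constant-of-integration issue: each direct comparison yields an equality of tau-function ratios rather than of the tau functions themselves, so the identification is only up to a multiplicative factor which a priori could depend on $s$ and on the times $\mathbf{x}$. To upgrade this factor to a genuine constant I would perform the comparison for both the $j=1$ and $j=2$ (adjoint) wave functions, which pins down the $\mathbf{x}^{(1)}$- and $\mathbf{x}^{(2)}$-dependence, and then invoke Lemma~\ref{fconst} precisely as in the proof of Proposition~\ref{prop:mtodatau}. Since tau functions on both the Toda and mToda sides are in any case defined only up to constants, the residual freedom can be absorbed into the normalization, yielding the stated equalities without any further ingredient.
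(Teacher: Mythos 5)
Your proposal is correct and follows essentially the same route as the paper: the authors likewise substitute the tau-function formulas into the wave-function relations of Corollaries \ref{psiphi}, \ref{psiphi-2}, \ref{T1phipsi} and Proposition \ref{T2:to-mto}, obtain equalities of tau-function ratios, and invoke Lemma \ref{fconst} (as in the proof of Proposition \ref{prop:mtodatau}) to reduce the ambiguity to a constant absorbed into the normalization. Your explicit attention to the constant-of-integration issue matches the paper's argument exactly.
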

\begin{proof}
For the results from mToda to Toda, they can be found in the proof of Proposition \ref{prop:mtodatau}. While for the results in $\mathcal{T}_1$ from Toda to mToda, we can find from results for wave functions in Corollary \ref{T1phipsi} that
\begin{align}
\frac{\tau_s^{\rm Toda}\bigl(\mathbf{x}-\bigl[z^{-1}\bigr]_1\bigr)}{q(s,\mathbf{x})\tau_s^{\rm Toda}(\mathbf{x})}=\frac{\tau_{0,s}\bigl(\mathbf{x}-\bigl[z^{-1}\bigr]_1\bigr)}{\tau_{1,s}(\mathbf{x})},\qquad \frac{\tau_{s+1}^{\rm Toda}(\mathbf{x}-[z]_2)}{q(s,\mathbf{x})\tau_s^{\rm Toda}(\mathbf{x})}=\frac{\tau_{0,s+1}(\mathbf{x}-[z]_2)}{\tau_{1,s}(\mathbf{x})},
\label{psiphirelation}
\end{align}
which implies that
\begin{align*}
{\rm e}^{-\xi(\tilde{\pa}_{x^{(1)}},z^{-1})}\log\frac{\tau_s^{\rm Toda}(\mathbf{x})}{\tau_{0,s}(\mathbf{x})}
={\rm e}^{-\xi(\tilde{\pa}_{x^{(2)}},z)}\log\frac{\tau_{s+1}^{\rm Toda}(\mathbf{x})}{\tau_{0,s+1}(\mathbf{x})}.\label{shift-psiphirelation}
\end{align*}
By Lemma \ref{fconst}, we can obtain
$\tau_{0,s}(\mathbf{x})={\rm const}\cdot\tau_s^{\rm Toda}(\mathbf{x})$. Then we can get $\tau_{1,s}(\mathbf{x})={\rm const}\cdot q(s,\mathbf{x})\tau_s^{\rm Toda}(\mathbf{x})$ by \eqref{psiphirelation}. Similarly, by results for adjoint wave functions in Proposition \ref{T2:to-mto}, we can prove the case for $\mathcal{T}_2$.
\end{proof}

\section{Darboux transformations for Toda and mToda hierarchies}\label{section6}
Now we can use Miura and anti-Miura transformations shown in above section to construct the corresponding Darboux transformations of Toda and mToda hierarchies by the following way:
\begin{align*}
{\rm Toda} \xrightarrow{\text{anti-Miura}} {\rm mToda} \xrightarrow{{\rm Miura}} {\rm Toda},\qquad {\rm mToda} \xrightarrow{{\rm Miura}} {\rm Toda} \xrightarrow{\text{anti-Miura}} {\rm mToda}.
\end{align*}
\begin{Proposition}\label{prop:todadarboux}
Given Toda eigenfunction $q$ and Toda adjoint eigenfunction $r$, Toda Lax operators $(\mathcal{L}_1,\mathcal{L}_2)$ and wave operators $(\mathcal{S}_1,\mathcal{S}_2)$, if denote \smash{$\bigl(\mathcal{L}_1^{[1]},\mathcal{L}_2^{[1]}\bigr)$} and \smash{$\bigl(\mathcal{S}_1^{[1]},\mathcal{S}_2^{[1]}\bigr)$} in the following way:
\begin{itemize}\itemsep=0pt
\item Case $\mathcal{T}_{1,2}(q)=q(s+1)\cdot\Delta\cdot q(s)^{-1}$,
\begin{alignat*}{3}
&\mathcal{L}_1^{[1]}=\mathcal{T}_{1,2}\cdot\mathcal{L}_1\cdot\iota_{\Lambda^{-1}}
\mathcal{T}_{1,2}^{-1},\qquad&& \mathcal{L}_2^{[1]}=\mathcal{T}_{1,2}\cdot\mathcal{L}_2\cdot\iota_{\Lambda}\mathcal{T}_{1,2}^{-1},&\\
&\mathcal{S}_1^{[1]}=\mathcal{T}_{1,2}\cdot\mathcal{S}_1\cdot\Lambda^{-1},\qquad&& \mathcal{S}_2^{[1]}=-\mathcal{T}_{1,2}\cdot\mathcal{S}_2,&
\end{alignat*}
\item Case $\mathcal{T}_{2,1}(r)=r^{-1}(s-1)\cdot\Delta^{-1}\cdot r(s)$,
\begin{alignat*}{3}
&\mathcal{L}_1^{[1]}=\iota_{\Lambda^{-1}}\mathcal{T}_{2,1}\cdot\mathcal{L}_1\cdot
\mathcal{T}_{2,1}^{-1},\qquad&& \mathcal{L}_2^{[1]}=\iota_{\Lambda}\mathcal{T}_{2,1}\cdot\mathcal{L}_2\cdot
\mathcal{T}_{2,1}^{-1},&\\
&\mathcal{S}_1^{[1]}=\iota_{\Lambda^{-1}}\mathcal{T}_{2,1}\cdot\mathcal{S}_1\cdot\Lambda,\qquad&& \mathcal{S}_2^{[1]}=-\iota_{\Lambda}\mathcal{T}_{2,1}\cdot\mathcal{S}_2,&
\end{alignat*}
\end{itemize}
then \smash{$\bigl(\mathcal{L}_1^{[1]},\mathcal{L}_2^{[1]}\bigr)$} and \smash{$\bigl(\mathcal{S}_1^{[1]},\mathcal{S}_2^{[1]}\bigr)$} are new Toda Lax operators and new Toda wave operators, respectively.
\end{Proposition}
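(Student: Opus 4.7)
The plan is to construct the Darboux transformation by composing an anti-Miura transformation from Toda to mToda with a Miura transformation from mToda back to Toda, so that Propositions \ref{mura1}, \ref{mura2}, and their anti-Miura counterparts do all the work. Concretely, I read $\mathcal{T}_{1,2}(q)$ as $T_2\circ\mathcal{T}_1$ (anti-Miura via $q$, then Miura of the second kind) and $\mathcal{T}_{2,1}(r)$ as $T_1\circ\mathcal{T}_2$ (anti-Miura via $r$, then Miura of the first kind). Since both factors are already known to intertwine the Lax and wave operators of the two hierarchies, the resulting operators \smash{$\bigl(\mathcal{L}_1^{[1]},\mathcal{L}_2^{[1]}\bigr)$} and \smash{$\bigl(\mathcal{S}_1^{[1]},\mathcal{S}_2^{[1]}\bigr)$} are automatically Toda objects; only the identification of the composed dressing factor with the claimed $\mathcal{T}_{1,2}$ or $\mathcal{T}_{2,1}$ needs to be verified.

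For the first case, I start from Toda data $(\mathcal{L}_i,\mathcal{S}_i)$ and apply the anti-Miura $\mathcal{T}_1=q(s)^{-1}$, obtaining mToda data $(L_i,S_i)=\bigl(q^{-1}\mathcal{L}_i q,\,q^{-1}\mathcal{S}_i\bigr)$ by the first anti-Miura proposition. Since $\mathcal{S}_1=1+\sum_{i\ge1}w_i\Lambda^{-i}$, the $\Lambda^0$-coefficient $c_0$ of the new $S_1=q^{-1}\mathcal{S}_1$ is $c_0=q^{-1}$, so the Miura operator of the second kind is $T_2=c_0^{-1}(s+1)\Delta=q(s+1)\Delta$. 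Composing, $T_2\cdot q(s)^{-1}=q(s+1)\,\Delta\,q(s)^{-1}=\mathcal{T}_{1,2}(q)$, and the explicit formulas for $\mathcal{L}_i^{[1]},\mathcal{S}_i^{[1]}$ follow by substituting into the formulas of Proposition \ref{mura2} and absorbing the scalar $q(s)^{\pm1}$ through $\iota_{\Lambda^{\pm1}}\Delta^{\pm1}$ (a function commutes with the expansion symbol~$\iota$).

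For the second case, applying $\mathcal{T}_2=\Delta^{-1}r(s)$ first, the resulting wave operator is $S_1=\iota_{\Lambda^{-1}}\Delta^{-1}\,r\,\mathcal{S}_1\Lambda$. A direct expansion using $\iota_{\Lambda^{-1}}\Delta^{-1}=\sum_{i\ge1}\Lambda^{-i}$ together with the explicit form $\mathcal{S}_1\Lambda=\Lambda+w_1+\sum_{k\ge1}w_{k+1}\Lambda^{-k}$ shows that the $\Lambda^0$-coefficient is picked up only from the $\Lambda^{-1}\cdot r\Lambda$ term, yielding $c_0=r(s-1)$. Hence the Miura operator of the first kind is $T_1=c_0^{-1}(s)=r(s-1)^{-1}$, and the composition produces $r(s-1)^{-1}\cdot\iota_{\Lambda^{-1}}\Delta^{-1}\,r(s)=\iota_{\Lambda^{-1}}\mathcal{T}_{2,1}$ on $\mathcal{S}_1$, matching the stated formulas for $\mathcal{S}_1^{[1]}$ and, analogously (with $\iota_{\Lambda}$ in place of $\iota_{\Lambda^{-1}}$), for $\mathcal{S}_2^{[1]}$; the formulas for $\mathcal{L}_i^{[1]}$ then follow from $\mathcal{L}_i^{[1]}=\mathcal{S}_i^{[1]}\Lambda^{\pm1}(\mathcal{S}_i^{[1]})^{-1}$.

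The main obstacle will be the bookkeeping for the $\iota_{\Lambda^{\pm1}}$ expansions in the second case: one must check that the scalar factor $r(s-1)^{-1}$ can be pulled inside $\iota_{\Lambda^{-1}}$ without changing the expansion (clear, since $\iota$ only fixes how $\Delta^{-1}$ is expanded as a series in $\Lambda^{-1}$), and that the Miura formulas of Proposition \ref{mura2}, in which $\iota_{\Lambda^{-1}}$ and $\iota_{\Lambda}$ appear on the right of $L_i$, combine cleanly with the $\iota$-factors already present in~$L_i$ coming from~$\mathcal{T}_2$. Once this is verified, the proof reduces to a chain of applications of Propositions~\ref{mura1}, \ref{mura2}, and the two anti-Miura propositions, with the explicit identifications of $c_0$ computed above.
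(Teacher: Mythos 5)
Your proposal is correct and follows essentially the same route as the paper: the paper also proves the $\mathcal{T}_{2,1}$ case by composing the anti-Miura $\mathcal{T}_2=\Delta^{-1}r(s)$ with the Miura $T_1=r(s-1)^{-1}$ (identifying $c_0=r(s-1)$ as the $\Lambda^0$-coefficient of the intermediate $S_1$), and treats $\mathcal{T}_{1,2}=T_2\circ\mathcal{T}_1$ symmetrically. Your explicit verification of the $\Lambda^0$-coefficients is exactly the computation the paper relies on.
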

\begin{proof}
Firstly, by Proposition \ref{T2:to-mto}, under the action of $\mathcal{T}_2$,
\begin{align*}
\mathcal{S}_1\xrightarrow{\mathcal{T}_2=\Delta^{-1}r(s)}{S}_1=\iota_{\Lambda^{-1}}\Delta^{-1}r(s)\mathcal{S}_1\La,
\end{align*}
where the coefficient of $\La^0$ in $S_1$ is $r(s-1)$,
then under the action of $T_1=r^{-1}(s-1)$,
\begin{align*}
S_1\xrightarrow{T_{1}=r^{-1}(s-1)}\mathcal{S}^{[1]}_1=r^{-1}(s-1)S_1=r^{-1}(s-1) \iota_{\Lambda^{-1}}\Delta^{-1}r(s)\mathcal{S}_1\La.
\end{align*}
Similarly, we can prove
\[
\mathcal{S}^{[1]}_2=-r^{-1}(s-1)\iota_{\Lambda}\Delta^{-1}r(s)\mathcal{S}_2.
\]
 The form of $\mathcal{L}^{[1]}_i$, $i=1,2$, are given by \smash{$\mathcal{L}^{[1]}_1=\mathcal{S}^{[1]}_1\Lambda\mathcal{S}^{[1]-1}_1$} and \smash{$\mathcal{L}^{[1]}_2=\mathcal{S}^{[1]}_2\Lambda^{-1}\mathcal{S}^{[1]-1}_2$}.
As for case $\mathcal{T}_{1,2}(q)=q(s+1)\cdot\Delta\cdot q(s)^{-1}$, it can be similarly proved.
\end{proof}

\begin{Remark} The following transformations are trivial:
\[
{\rm Toda} \xrightarrow{\mathcal{T}_i} {\rm mToda} \xrightarrow{T_i} {\rm Toda},\qquad i=1,2.
\]
\end{Remark}

\begin{Corollary}
Under the same conditions of Proposition $\ref{prop:todadarboux}$, if assume that $\tau_s^{\rm Toda}$ is the Toda tau function corresponding to $(\mathcal{L}_1,\mathcal{L}_2)$, and $\Psi_i$, $i=1,2$, is the Toda wave functions, then~\smash{$\tau_s^{\rm Toda,[1]}$} given below is the new Toda tau functions corresponding to \smash{$\bigl(\mathcal{L}_1^{[1]},\mathcal{L}_2^{[1]}\bigr)$}.
\begin{itemize}\itemsep=0pt
\item Case $\mathcal{T}_{1,2}(q)=q(s+1)\cdot\Delta\cdot q(s)^{-1}$,
\begin{align*}
\tau_s^{\rm Toda,[1]}=q(s)\tau_s^{\rm Toda}.
\end{align*}
\item Case $\mathcal{T}_{2,1}(r)=r^{-1}(s-1)\cdot\Delta^{-1}\cdot r(s)$,
\begin{align*}
\tau_s^{\rm Toda,[1]}=r(s-1)\tau_s^{\rm Toda}.
\end{align*}
\end{itemize}
\end{Corollary}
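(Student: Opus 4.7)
The plan is to leverage the fact, already built into the statement of Proposition~\ref{prop:todadarboux}, that each Darboux operator $\mathcal{T}_{1,2}(q)$ and $\mathcal{T}_{2,1}(r)$ is a composition of an anti-Miura transformation followed by a Miura transformation. Concretely, $\mathcal{T}_{1,2}$ factors as Toda $\xrightarrow{\mathcal{T}_1=q(s)^{-1}}$ mToda $\xrightarrow{T_2=c_0^{-1}(s+1)\Delta}$ Toda, while $\mathcal{T}_{2,1}$ factors as Toda $\xrightarrow{\mathcal{T}_2=\Delta^{-1}r(s)}$ mToda $\xrightarrow{T_1=c_0^{-1}(s)}$ Toda, where in each case the intermediate coefficient $c_0$ is precisely the one read off from the proof of Proposition~\ref{prop:todadarboux} (namely $c_0(s)=q(s)^{-1}$ in the first case and $c_0(s)=r(s-1)$ in the second). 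Once these factorizations are fixed, all one has to do is track the tau function through the two legs using the dictionary established in Corollary~\ref{corollary:taurelation}.

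First I would carry out the case $\mathcal{T}_{1,2}$. Starting from a Toda tau function $\tau_s^{\rm Toda}$ with eigenfunction $q(s)$, Corollary~\ref{corollary:taurelation} (anti-Miura with $\mathcal{T}_1=q(s)^{-1}$) produces the mToda tau pair $\tau_{0,s}=\tau_s^{\rm Toda}$ and $\tau_{1,s}=q(s)\tau_s^{\rm Toda}$. Then I apply the same corollary in the Miura direction with $T_2=c_0^{-1}(s+1)\Delta$, which by the same table selects $\tau_{1,s}$ as the new Toda tau function. This gives $\tau_s^{\rm Toda,[1]}=q(s)\tau_s^{\rm Toda}$, as claimed.

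Next I would treat the case $\mathcal{T}_{2,1}$ in the mirror-image way. Starting from $\tau_s^{\rm Toda}$ with adjoint eigenfunction $r(s)$, Corollary~\ref{corollary:taurelation} (anti-Miura with $\mathcal{T}_2=\Delta^{-1}r(s)$) gives the mToda tau pair $\tau_{0,s}=r(s-1)\tau_s^{\rm Toda}$ and $\tau_{1,s}=\tau_s^{\rm Toda}$. Applying the Miura transformation $T_1=c_0^{-1}(s)$ and again invoking the corollary, the new Toda tau function equals $\tau_{0,s}=r(s-1)\tau_s^{\rm Toda}$.

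The only genuine subtlety is verifying that the $c_0$ appearing in the Miura step of each composition really equals $q(s)^{-1}$ (respectively $r(s-1)$), so that the Miura transformation used in Proposition~\ref{prop:todadarboux} matches the one referenced in Corollary~\ref{corollary:taurelation}; this is essentially the content of the two explicit computations of $c_0$ inside the proof of Proposition~\ref{prop:todadarboux} and requires no new calculation. Apart from that bookkeeping check, the argument is a direct two-step application of Corollary~\ref{corollary:taurelation}, and the main potential pitfall is simply keeping the indices on $q$, $r$, and the shift $s\mapsto s\pm 1$ consistent between the anti-Miura and Miura legs.
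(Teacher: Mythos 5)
Your proposal is correct and follows essentially the same route as the paper: factor each Darboux operator into its anti-Miura and Miura legs (with the $c_0$ identifications already made in the proof of Proposition~\ref{prop:todadarboux}) and track the tau function through both legs via Corollary~\ref{corollary:taurelation}. The index bookkeeping you flag ($c_0(s)=q^{-1}(s)$, resp.\ $c_0(s)=r(s-1)$) is exactly the check the paper relies on, so nothing is missing.
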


\begin{proof}
Since Toda Darboux transformation $\mathcal{T}_{1,2}(q)=q(s+1)\cdot\Delta\cdot q(s)^{-1}$ is determined by
\[
{\rm Toda} \xrightarrow{\mathcal{T}_1} {\rm mToda} \xrightarrow{T_2} {\rm Toda},
\]
thus by Corollary \ref{corollary:taurelation}, we can find that
\[
\tau_s^{\rm Toda} \xrightarrow{\mathcal{T}_1}(\tau_0(s),\tau_1(s))=\bigl(\tau_s^{\rm Toda},q(s)\tau_s^{\rm Toda}\bigr) \xrightarrow{T_2} \tau_s^{\rm Toda,[1]}=q(s)\tau_s^{\rm Toda}.
\]
Similarly, we can obtain the result in case $\mathcal{T}_{2,1}(r)=r^{-1}(s-1)\cdot\Delta^{-1}\cdot r(s)$.
\end{proof}

\begin{Proposition}
For mToda hierarchy, given wave operators $(S_1,S_2)$, Lax operators $(L_1,L_2)$, tau pair $(\tau_0,\tau_1)$, eigenfunction $\Theta(s)$ and adjoint eigenfunction $\bar{\Theta}(s)$, if denote $A^{[1]}$ to be the transformed object $A$ by the following way:
\begin{itemize}\itemsep=0pt
\item Case $T_{1,1}(\Theta)=\Theta(s)^{-1}$,
\begin{alignat*}{3}
&L_1^{[1]}=T_{1,1}\cdot L_1\cdot T_{1,1}^{-1},\qquad&& L_2^{[1]}=T_{1,1}\cdot L_2\cdot T_{1,1}^{-1},&\\
&S_1^{[1]}=T_{1,1}\cdot S_1,\qquad&& S_2^{[1]}=T_{1,1}\cdot S_2,&\\
&\tau^{[1]}_{0,s}(\mathbf{x})=\tau_{0,s}(\mathbf{x}),\qquad &&\tau^{[1]}_{1,s}(\mathbf{x})=\Theta(s)\cdot\tau_{1,s}(\mathbf{x}),&
\end{alignat*}
\item Case $T_{1,2}(\bar{\Theta})=\Delta^{-1}\cdot\Delta\bigl(\bar{\Theta}(s)\bigr)$,
\begin{alignat*}{3}
&L_1^{[1]}=\iota_{\Lambda^{-1}}T_{1,2}\cdot L_1\cdot T_{1,2}^{-1},\qquad&& L_2^{[1]}=\iota_{\Lambda}T_{1,2}\cdot L_2\cdot T_{1,2}^{-1},&\\
&S_1^{[1]}=\iota_{\Lambda^{-1}}T_{1,2}\cdot S_1\cdot \La,\qquad&& S_2^{[1]}=-\iota_{\Lambda}T_{1,2}\cdot S_2,&\\ &\tau^{[1]}_{0,s}=\tau_{0,s}(\mathbf{x})\tau_{0,s-1}(\mathbf{x})\Delta\bigl(\bar{\Theta}(s-1)\bigr)/\tau_{1,s-1}(\mathbf{x}),\qquad&& \tau^{[1]}_{1,s}(\mathbf{x})=\tau_{0,s}(\mathbf{x}),&
\end{alignat*}
\item Case $T_{2,1}(\Theta)=(\Delta(\Theta(s)))^{-1}\cdot\Delta $,
\begin{alignat*}{3}
&L_1^{[1]}=T_{2,1}\cdot L_1\cdot \iota_{\Lambda^{-1}}T_{2,1}^{-1},\qquad&& L_2^{[1]}=T_{2,1}\cdot L_2\cdot \iota_{\Lambda} T_{2,1}^{-1},&\\
&S_1^{[1]}=T_{2,1}\cdot S_1\cdot \La^{-1},\qquad && S_2^{[1]}=-T_{2,1}\cdot S_2,&\\
&\tau^{[1]}_{0,s}=\tau_{1,s}, \qquad && \tau^{[1]}_{1,s}=\tau_{1,s}(\mathbf{x})\tau_{1,s+1}(\mathbf{x})
\Delta(\Theta(s))/\tau_{0,s+1}(\mathbf{x}),&
\end{alignat*}
\item Case $T_{2,2}(\bar{\Theta})=\Delta^{-1}\cdot\bar{\Theta}(s+1)\cdot\Delta$,
\begin{alignat*}{3}
&L_1^{[1]}=\iota_{\Lambda^{-1}}T_{2,2}\cdot L_1\cdot T_{2,2}^{-1},\qquad&& L_2^{[1]}= \iota_{\Lambda}T_{2,2}\cdot L_2\cdot T_{2,2}^{-1},&\\
&S_1^{[1]}=\iota_{\Lambda^{-1}}T_{2,2}\cdot S_1,\qquad&& S_2^{[1]}=\iota_{\Lambda}T_{2,2}\cdot S_2,&\\
&\tau^{[1]}_{0,s}=\bar{\Theta}(s)\cdot\tau_{0,s}, \qquad&& \tau^{[1]}_{1,s}=\tau_{1,s}(\mathbf{x}),&
\end{alignat*}
\end{itemize}
then \smash{$\bigl(S_1^{[1]},S_2^{[1]}\bigr)$}, \smash{$\bigl(L_1^{[1]},L_2^{[1]}\bigr)$} and \smash{$\bigl(\tau_0^{[1]},\tau_1^{[1]}\bigr)$} are mToda wave operators, Lax operators and tau pair, respectively.
\end{Proposition}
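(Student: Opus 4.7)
The plan is to treat each of the four cases uniformly as the composition
\[
T_{i,j}=\mathcal{T}_j\circ T_i\colon \ {\rm mToda}\xrightarrow{T_i}{\rm Toda}\xrightarrow{\mathcal{T}_j}{\rm mToda},\qquad i,j\in\{1,2\},
\]
so that the output being a genuine mToda object is automatic from Propositions~\ref{mura1} and~\ref{mura2} together with the two anti-Miura propositions. What remains to check in each case is (a)~that the operator product $\mathcal{T}_j\cdot T_i$, once $\mathcal{T}_j$ is evaluated on the Toda (adjoint) eigenfunction induced from $\Theta$ or $\bar\Theta$, collapses to the closed form claimed for $T_{i,j}$, and (b)~that chaining the tau-function rules of Corollary~\ref{corollary:taurelation} produces the stated pair $(\tau_0^{[1]},\tau_1^{[1]})$.

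For (a), I would first identify the Toda (adjoint) eigenfunction produced from an mToda (adjoint) eigenfunction under $T_1$ or $T_2$. These rules come by extending Corollaries~\ref{psiphi} and~\ref{psiphi-2} from the particular wave functions $\Psi_i^{(*)}$ to general solutions of the same auxiliary linear equations, since eigenfunctions obey the same evolutions as wave functions without the spectral $z$-dependence. The outcome is $q=c_0^{-1}\Theta$ under $T_1$ and $q(s)=c_0^{-1}(s+1)\Delta(\Theta(s))$ under $T_2$; analogously, up to an overall sign absorbed into the normalization of the adjoint eigenfunction, $r(s)=c_0(s)\Delta(\bar\Theta(s))$ under $T_1$ and $r(s)=c_0(s+1)\bar\Theta(s+1)$ under $T_2$. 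Substituting these into $\mathcal{T}_1=q(s)^{-1}$ or $\mathcal{T}_2=\Delta^{-1}r(s)$ and composing with $T_1=c_0^{-1}(s)$ or $T_2=c_0^{-1}(s+1)\Delta$ makes the intermediate $c_0^{\pm 1}$ factors cancel, producing the four operators $T_{1,1}=\Theta(s)^{-1}$, $T_{1,2}=\Delta^{-1}\cdot\Delta(\bar\Theta(s))$, $T_{2,1}=(\Delta(\Theta(s)))^{-1}\Delta$, and $T_{2,2}=\Delta^{-1}\bar\Theta(s+1)\Delta$. The formulas for $L_i^{[1]}$ and $S_i^{[1]}$ then follow by direct substitution into the statements of Propositions~\ref{mura1}, \ref{mura2}, and the anti-Miura propositions, keeping the $\iota_{\La^{\pm 1}}$ expansion conventions of Section~\ref{section2} consistent throughout.

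For (b), I would chain the tau-function rules using $c_0=\tau_0/\tau_1$ from Proposition~\ref{prop:mtodatau}. In case $T_{1,1}$: $(\tau_0,\tau_1)\xrightarrow{T_1}\tau^{{\rm Toda}}=\tau_0\xrightarrow{\mathcal{T}_1}(\tau_0,q\tau_0)$, and substituting $q=(\tau_1/\tau_0)\Theta$ gives $(\tau_0,\Theta\tau_1)$. In case $T_{1,2}$: $(\tau_0,\tau_1)\mapsto\tau_0\mapsto(r(s-1)\tau_{0,s},\tau_{0,s})$ with $r(s-1)=(\tau_{0,s-1}/\tau_{1,s-1})\Delta(\bar\Theta(s-1))$, which reproduces the shifted product claimed. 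Cases $T_{2,1}$ and $T_{2,2}$ proceed analogously, now starting from $\tau^{{\rm Toda}}=\tau_1$ as dictated by $T_2$. The principal obstacle is bookkeeping rather than conceptual: tracking the shifts $s\mapsto s\pm 1$ induced by $\Delta$-factors and adjoint operations, keeping the $\iota_{\La^{\pm 1}}$ expansions straight, and absorbing the signs coming from the involutions $\pi_0,\pi_1$ implicit in $T_2$ and $\mathcal{T}_2$ via Remark~\ref{remark4.5}. Once these conventions are handled consistently, all four cases drop out uniformly from the Miura/anti-Miura machinery of Section~\ref{section4} and the tau correspondence of Section~\ref{section5}.
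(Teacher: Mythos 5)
Your proposal is correct and follows essentially the same route as the paper: decompose each $T_{i,j}$ as the Miura map $T_i$ (mToda $\to$ Toda) followed by the anti-Miura map $\mathcal{T}_j$ (Toda $\to$ mToda), identify the induced Toda (adjoint) eigenfunction so that the intermediate $c_0^{\pm1}$ factors cancel, and chain the tau-function rules of Corollary~\ref{corollary:taurelation}. The only difference is one of presentation: the paper carries out just the case $T_{2,1}$ in detail and declares the others similar, whereas you sketch all four uniformly.
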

\begin{proof}
Here we only give the proof for the case of $T_{2,1}$ since other cases can be proved similarly. Firstly, by Proposition \ref{mura2}, under the action of $T_{2}$,
\begin{align*}
{S}_1\xrightarrow{T_{2}=c_0^{-1}(s+1)\Delta }\mathcal{S}_1=c_0^{-1}(s+1)\Delta \cdot {S}_1\cdot \La^{-1},
\end{align*}
where $c_0(s)$ is the coefficient of $\La^0$ in $S_1$. Notice that mToda eigenfunction $\Theta$ will become Toda eigenfunction $q(s)=c_0^{-1}(s+1)\Delta(\Theta(s))$, then under the action of $\mathcal{T}_{1}$
\begin{align*}			\mathcal{S}_1\xrightarrow{\mathcal{T}_{1}=q^{-1}(s)}S^{[1]}_1=q^{-1}(s)\mathcal{S}_1
=(\Delta(\Theta))^{-1}\Delta \cdot S_1\cdot \La^{-1}.
\end{align*}
In the same way, we can get $S^{[1]}_2=-(\Delta(\Theta))^{-1}\Delta S_2$.
The form of $L^{[1]}_i$, $i=1,2$, are given by \smash{$L^{[1]}_i=S^{[1]}_i\Lambda^{3-2i} S^{[1]-1}_i$}. As for the transformations of tau functions, we can find them from the following transformation derived by Corollary \ref{corollary:taurelation}:
\begin{gather*}
(\tau_{0,s}(\mathbf{x}),\tau_{1,s}(\mathbf{x}))\xrightarrow{{T}_{2}}\tau_s^{\rm Toda}(\mathbf{x})=\tau_{1,s}(\mathbf{x})
\ {\xrightarrow{\mathcal{T}_{1}}}\\
\bigl(\tau^{[1]}_{0,s}(\mathbf{x}),\tau^{[1]}_{1,s}(\mathbf{x})\bigr)=\bigl(\tau_s^{\rm Toda}(\mathbf{x}),q(s)\tau_s^{\rm Toda}(\mathbf{x})\bigr)\\
\hphantom{\bigl(\tau^{[1]}_{0,s}(\mathbf{x}),\tau^{[1]}_{1,s}(\mathbf{x})\bigr)}{}
=(\tau_{1,s}(\mathbf{x}),\tau_{1,s}(\mathbf{x})\tau_{1,s+1}(\mathbf{x})
\Delta(\Theta(s))/\tau_{0,s+1}(\mathbf{x})).
\tag*{\qed}
\end{gather*}
\renewcommand{\qed}{}
\end{proof}

\subsection*{Acknowledgements}
 Many thanks to the anonymous referees for their useful comments and suggestions. This work is supported by National Natural Science Foundations of China (Nos. 12171472 and 12261072) and ``Qinglan Project'' of Jiangsu Universities.

\pdfbookmark[1]{References}{ref}
\LastPageEnding

\end{document}